\documentclass[acmsmall, nonacm, screen,svgnames,x11names]{acmart}

\AtBeginDocument{%
	}

\setcopyright{rightsretained}
\acmDOI{10.1145/3798215}
\acmYear{2026}
\acmJournal{PACMPL}
\acmVolume{10}
\acmNumber{OOPSLA1}
\acmArticle{107}
\acmMonth{4}
\received{2025-10-06}
\received[accepted]{2026-02-17}

\usepackage{csquotes}
\usepackage[disable]{todonotes}
\usepackage{xcolor}
\usepackage{tabu}
\usepackage{adjustbox}
\usepackage{mathtools}
\usepackage{xspace}
\usepackage{proof} 
\usepackage{semantic}
\usepackage{listings}
\usepackage{stmaryrd}
\usepackage{wrapfig}
\usepackage{tikz}
\usepackage[nameinlink, noabbrev]{cleveref}
\usepackage{caption}
\usepackage{subcaption}
\usepackage{array}
\usepackage[utf8]{inputenc}

\definecolor{codegreen}{rgb}{0,0.6,0}
\definecolor{codegray}{rgb}{0.5,0.5,0.5}
\definecolor{codepurple}{rgb}{0.58,0,0.82}
\definecolor{backcolour}{rgb}{0.95,0.95,0.92}

\lstdefinestyle{mystyle}{
	commentstyle=\color{codegreen},
	keywordstyle=\color{magenta},
	numberstyle=\tiny\color{codegray},
	stringstyle=\color{codepurple},
	basicstyle=\ttfamily\footnotesize,
	breakatwhitespace=false,         
	breaklines=true,                 
	captionpos=b,                    
	keepspaces=true,                 
	numbers=left,                    
	numbersep=5pt,                  
	showspaces=false,                
	showstringspaces=false,
	showtabs=false,                  
	tabsize=2
}
\DeclareMathAlphabet{\mathpzc}{OT1}{pzc}{m}{it}

\newcommand{\sfsymbol}[1]{\textsf{\upshape {#1}}}

\newcommand{\morespace}[1]{~{}#1{}~}
\newcommand{\qmorespace}[1]{\quad{}#1{}\quad}

\newcommand{\ttriangleq}{~{}\triangleq{}~}

\newcommand{\iin}{~{}\in{}~}

\newcommand{\IFSYMBOL}{\ensuremath{\textnormal{\texttt{if}}}}

\newcommand{\ELSESYMBOL}{\ensuremath{\textnormal{\texttt{else}}}}

\newcommand{\ITE}[3]{\ensuremath{\IFSYMBOL\;{#1}\;\texttt{then}\; #2 \; \ELSESYMBOL\; #3  }}

\newcommand{\observesymbol}{\textnormal{\texttt{observe}}}
\newcommand{\OBSERVE}[1]{\ensuremath{\observesymbol (#1)}}

\newcommand{\statesubst}[2]{\left[ {#1} \mapsto {#2}\right]}

\newcommand{\monus}{
	\mathbin{
		\vphantom{+}
		\text{
			\mathsurround=0pt 
			\ooalign{
				\noalign{\kern-.5ex}
				\hidewidth$\smash{\cdot}$\hidewidth\cr 
				\noalign{\kern.5ex}
				$-$\cr 
			}%
		}%
	}%
}

\newcounter{blacktrianglelefteq}

\newcounter{leftsliceeq}

\definecolor{webgreen}{rgb}{0,.5,0}

\definecolor{mydarkorange}{RGB}{220,80,0}
\colorlet{cemphcolor}{mydarkorange}

\newcounter{computationarrowsone}
\newcounter{computationarrowstwo}

\newcounter{sarrow}

\makeatletter
\newlength{\negph@wd}
\DeclareRobustCommand{\negphantom}[1]{%
  \ifmmode
    \mathpalette\negph@math{#1}%
  \else
    \negph@do{#1}%
  \fi
}
\newcommand{\negph@math}[2]{\negph@do{$\m@th#1#2$}}
\newcommand{\negph@do}[1]{%
  \settowidth{\negph@wd}{#1}%
  \hspace*{-\negph@wd}%
}
\makeatother

\newcommand{\listtraverse}[1]{C}

\newcommand{\BOOL}{\ensuremath{\texttt{Bool}}}
\newcommand{\VAL}{\ensuremath{\mathit{Val}}}
\newcommand{\VARS}{\ensuremath{\mathit{Var}}}
\newcommand{\TRUE}{\ensuremath{\texttt{T}}}
\newcommand{\FALSE}{\ensuremath{\texttt{F}}}

\newcommand{\AEXP}{\ensuremath{\mathit{aexp}}}
\newcommand{\FST}{\ensuremath{\texttt{fst}\,}}
\newcommand{\SND}{\ensuremath{\texttt{snd}\,}}
\newcommand{\LET}[2]{\ensuremath{\texttt{let}\; #1 \;\texttt{in}\; #2}}
\newcommand{\FLIP}[1]{\ensuremath{\texttt{flip} \left(#1\right) }}
\newcommand{\NFLIP}{\ensuremath{\texttt{nflip}\left(\right) }}
\newcommand{\FUNC}[2]{\ensuremath{\texttt{fun} \; #1 \; \left\{\, #2 \,\right\} } }
\newcommand{\FUNCNAME}{\mathit{func}}

\newcommand{\NDICE}{\sfsymbol{\mbox{noDice}}\xspace}
\newcommand{\DICE}{\sfsymbol{\mbox{Dice}}\xspace}

\theoremstyle{acmdefinition}
\newtheorem{remark}{Remark}

\newboolean{showRuleNames}
\setboolean{showRuleNames}{true}

\newboolean{isExtended}
\setboolean{isExtended}{true}

\ifthenelse
{\boolean{isExtended}}
{
	\newcommand{\ext}[2]{#1}
}{
	\newcommand{\ext}[2]{#2}
}

\newcommand{\citeExtended}{\cite{extended}}

\newcommand{\citeBisection}{{\cite{quatmann2025bisection}}}
\newcommand{\citetBisection}{{\citet{quatmann2025bisection}}}

\begin{document}
	
	\title
	[noDice: Inference for Discrete Probabilistic Programs with Nondeterminism and Conditioning]
	{noDice: Inference for Discrete Probabilistic Programs with Nondeterminism and Conditioning}
	
	\author{Tobias Gürtler}
	\orcid{0009-0000-7131-0936}
	\affiliation{%
		\institution{Saarland University}
		\city{}
		\country{Germany}
	}
	\affiliation{%
		\institution{Saarland Informatics Campus}
		\city{}
		\country{Germany}
	}
	\email{guertler@cs.uni-saarland.de}
	
	\author{Benjamin Lucien Kaminski}
	\orcid{0000-0001-5185-2324}
	\affiliation{%
		\institution{Saarland University}
		\city{}
		\country{Germany}
	}
	\affiliation{%
		\institution{University College London}
		\city{}
		\country{United Kingdom}
	}
	\email{kaminski@cs.uni-saarland.de}


\begin{abstract}

Probabilistic programming languages (PPLs) are an expressive and intuitive means of representing complex probability distributions. 
In that realm, languages like \DICE target an important class of probabilistic programs: those whose probability distributions are discrete.
Discrete distributions are common in many fields, including text analysis, network verification, artificial intelligence, and graph analysis.
Another important feature in the world of probabilistic modeling are \emph{nondeterministic choices} as found in Markov Decision Processes (MDPs) which play a major role in reinforcement learning.
Modern PPLs usually lack support for \emph{nondeterminism}.
We address this gap with the introduction of \NDICE, which extends the discrete probabilistic inference engine \DICE.
\NDICE performs inference on loop-free programs by constructing an MDP so that the distributions modeled by the program correspond to  schedulers in the MDP.
Furthermore, decision diagrams are used as an intermediate step to exploit the program structure and drastically reduce the state space of the MDP.
\end{abstract}

\begin{CCSXML}
	<ccs2012>
	<concept>
	<concept_id>10002950.10003648.10003662</concept_id>
	<concept_desc>Mathematics of computing~Probabilistic inference problems</concept_desc>
	<concept_significance>500</concept_significance>
	</concept>
	</ccs2012>
\end{CCSXML}

\ccsdesc[500]{Mathematics of computing~Probabilistic inference problems}

	\keywords{probabilistic programming, nondeterminism, model checking}  
	
	\received{20 February 2007}
	\received[revised]{12 March 2009}
	\received[accepted]{5 June 2009}
	
	\maketitle
	

\section{Introduction}

Probabilistic programming languages (PPLs) extend standard programming languages with probabilistic statements to model random behavior and to condition on observed evidence.
These constructs allow probabilistic programs to concisely model complex distributions, and PPLs make probabilistic modeling approachable to developers who might lack the required expertise in probability theory. 
Thus, the main purpose of probabilistic programs is not to execute them, but to perform probabilistic inference, i.e.\ to determine the probability distribution over possible output values, conditioned on the observations in the program.
Over the years, a variety of tools and languages has been proposed to perform probabilistic inference \cite{ gehr2016psi, wang2018pmaf, bingham2019pyro, wood-aistats-2014, randone2024inference}.
However, support for (non-probabilistic) nondeterminism is lacking among modern inference engines, and combining nondeterminism with conditioning in PPLs remains an open problem \cite{gordon2014probabilistic, junges2023scalable}.

Nondeterminism is \textit{"a powerful modeling tool to deal with unknown information, as well as to specify abstractions in situations where details are unimportant"}~\cite{gordon2014probabilistic}.
Alas, combining nondeterminism and conditioning significantly complicates inference:
Probabilistic programs no longer model a single distribution, but a \emph{set} of distributions. 
This renders probabilistic inference no longer compositional, i.e.\ maximum conditional probabilities in a program cannot generally be inferred from the maximum conditional probabilities of its subprograms.
Due to this, work on PPLs extended with nondeterminism and conditioning is sparse, and existing approaches often cannot deal with both simultaneously \cite{olmedo2018conditioning, van2010dtproblog}.
Currently, the only option to combine both features is to apply probabilistic model checkers directly to the operational semantics of programs \cite{jansen2016bounded}, but this approach often leads to very large state spaces.

\subsection{The \NDICE Inference Framework}
We introduce \NDICE, a discrete probabilistic program inference engine which aims to reduce the state space of models with nondeterminism by extending the \DICE inference engine~\cite{holtzen2020scaling}.
\DICE and \NDICE specialize in inference for loop-free programs with \textit{discrete} randomness by forgoing support for unbounded numbers and continuous distributions in favor of more efficient inference.
Hence, \NDICE uses a simple first-order, non-recursive language with discrete probabilistic and nondeterministic choices as well as first-class observations for Bayesian reasoning.
\NDICE provides a novel inference algorithm to determine the maximum probability to return a specific value among all ways to resolve the nondeterminism.
To do so, \NDICE proceeds as follows:

\paragraph{Boolean Compilation}
First, \NDICE uses the compilation engine of \DICE to compile the probabilistic program into a pair of boolean formulas along with an annotated trace of boolean variables to encode its behavior.
Here, one formula reflects the programs behavior while ignoring observations, and the other formula represents solely the observations.

\paragraph{Decision Diagrams}
The previously compiled boolean formulas are represented internally as \textit{Binary Decision Diagrams} (BDDs), a compact graphical encoding of boolean functions.
Now, \NDICE diverges from the \DICE algorithm and combines the two BDDs into a single \textit{Algebraic Decision Diagram} (ADD).
ADDs are a generalization of BDDs which allows more than binary outputs, and this enables the ADD to unify the two boolean formulas into a single decision diagram.

\paragraph{Markov Decision Process}
The final step is to use the previously built ADD as scaffolding for a \textit{Markov Decision Process} (MDP).
MDPs naturally combine nondeterministic and probabilistic behavior, making them a suitable tool for our inference method.
We build upon results from \citet{baier2014computing} and {\citetBisection} on conditional reachability properties in MDPs, which allows us to construct an MDP that aligns with the compiled program.
More precisely, the distributions modeled by the program correspond to the memoryless schedulers in the MDP.
\medskip

We can infer the maximum probability that the original program outputs a particular value as a conditional reachability probability in the constructed MDP. 
Inference performance can heavily depend on the size of the final MDP, hence reducing its state space is often critical for the inference algorithm.
In our work, decision diagrams exploit the program structure to efficiently remove redundant states from the MDPs, which significantly reduces their size.

\subsection{Outline and Contributions}

We provide an overview of the \NDICE language and inference algorithm in \Cref{sec:overview}, followed by the formal semantics of \NDICE in \Cref{sec:semantics}.
Then, \Cref{sec:background,sec:inference} present the theoretical foundation of the inference algorithm, which we implement and evaluate in \Cref{sec:implementation}.
Finally, \Cref{sec:related_work,sec:future} explore related and future work.
Our contributions are summarized as follows:

\begin{itemize}
	\item We develop the probabilistic programming language \NDICE, a non-recursive functional language with probabilistic choices, nondeterministic choices and first-class observations.
	\item We present the theoretical foundation of the \NDICE inference algorithm. We use decision diagrams to compile a compact MDP, and we infer the maximum probability  for a \NDICE program to output a value via conditional reachability probabilities in that MDP.
	\item We implement the inference algorithm and evaluate its efficiency against other approaches on a range of benchmarks. 
	Here, \NDICE significantly reduces the size of the MDPs, and even outperforms state-of-the-art model checkers on some benchmarks.
\end{itemize}


\section{An Overview of the Inference Process}\label{sec:overview}

\subsection{A Tour of the \NDICE Language}
We introduce the \NDICE language with an example adapted from \citet{junges2021runtime}:
A plane is approaching a runway to land, while a moving vehicle is crossing said runway. 
Naturally, the plane should not land while the vehicle is occupying the runway, and so the plane has to track the vehicle with (imprecise) sensors and estimate its position. 
Tracking moving vehicles is a common case study in probabilistic programming \cite{randone2024inference, cheng2025inference, baudart2020reactive}, and \NDICE now allows us to model the behavior of the vehicle as nondeterministic, which is not featured in other PPLs.
In particular, nondeterminism allows us to \textit{under-specify} the vehicle's behavior, as we may not be able to describe the movement of the vehicle exactly, and many (not necessarily probabilistic) factors influence how the vehicle could move. 
Hence, we use nondeterminism as an abstraction to reason about the worst case scenario, where the probability that the vehicle is on the runway when the plane lands is maximal.

We will consider a simple instance of the scenario above where the vehicle may be in three possible positions: Left of the runway (Location $0$), on the runway (Location $1$) or right of the runway (Location $2$).
At each time step, the vehicle may move one location to the right, and we take one measurement of the vehicles location with an imprecise sensor.
After three time steps, we observed the sensor measurements $0$, then $1$, and finally $2$. 
Given these measurements, we wish to determine the maximum probability that the vehicle is still on the runway after three time steps if it started at Location~$0$.
To answer this query, we model the scenario with the \NDICE program in \Cref{fig:plane}, which uses bounded numbers to represent the location of the vehicle:

\begin{wrapfigure}[19]{l}{0.5\textwidth}%
	\vspace*{-1.5\intextsep}
	\begin{center}
		\begin{tabular}{l c}
			&
			\begin{lstlisting}[language=Caml]
fun move(pos: int): int {
	let m = if nflip() then flip(0.75) 
          else flip(0.5) in 
	if m && pos != 2 then pos+1 else pos  
}

fun step(pos: int, obs: int): int {
	let new_pos = move(pos) in
	let mes = if flip(0.9) then new_pos 
            else uniform(0, 3) in
	let o = observe(mes == obs) in
	new_pos   
}

let p1 = step(0, 0) in
let p2 = step(p1, 1) in
let p3 = step(p2, 2) in
p3 == 1
\end{lstlisting}
		\end{tabular}
	\end{center}
	\vspace*{-.5em}%
	\caption{A \NDICE program to track a moving vehicle.}
	\label{fig:plane}
\end{wrapfigure}%
\paragraph{Vehicle Movement}
The behavior of the vehicle has both probabilistic and nondeterministic aspects:
For one, we abstract the movement speed of the vehicle with randomness, where at each time step the vehicle has either a 50\% chance or a 75\% chance to progress to the next location. 
We refer to the former as \textit{slow} and to the latter as \textit{fast}.
Furthermore, the vehicle may \enquote{decide} whether to move slow or fast at each individual time step.
However, the available information may be insufficient to model the decision process of the vehicle as a probabilistic process.
We thus model the decision process to be \textit{nondeterministic}, which allows us to bound the behavior of the vehicle across all possible strategies it could use.
In \Cref{fig:plane}, we model this behavior with the \texttt{move} function in Lines~1-5, which takes the current position of the vehicle as an argument.

In Lines 2 and 3, the variable \texttt{m} models whether the vehicle progresses to the next location. 
Here, the choice between speeds is modeled with the statement $\NFLIP$, which nondeterministically returns either true or false, and the probabilistic behavior is modeled with the statement $\FLIP\theta$, which returns $\TRUE$ with probability $\theta$ and $\FALSE$ otherwise.
Afterwards, we return the updated position of the vehicle in Line~4:
If the vehicle made progress and has not left the runway yet, then the vehicle moves to next position, otherwise it does not move.

This \textit{nondeterministic} model of the vehicle abstracts a wide range of behaviors.
For example, the vehicle could always move fast, or always slow, or it may choose randomly.
It may move fast only in the first time step, or it could pick the speed depending on the current position. 
\NDICE considers all of the aforementioned strategies, along with any combination of them. 
More precisely, our model of nondeterminism considers each context where $\NFLIP$ is called as a separate choice, where each choice can be resolved to an arbitrary probabilistic statement $\FLIP\theta$.
This allows nondeterministic choices to be made depending on previous, probabilistic outcomes and enables us to reason about a wide variety of behaviors.
In the context of \Cref{fig:plane}, this allows the ground vehicle to decide its speed individually for every location and time step.

\paragraph{Sensor Measurements}
We now consider the \texttt{step} function in Lines~7-13, which models one full time step.
To do so, \texttt{step} takes two arguments: The current real position of the vehicle \texttt{pos} and the next expected measurement \texttt{obs}.
The function first calls \texttt{move} in Line~8 to update the vehicle's position, performs the imperfect sensor measurement in Lines~9-11, and finally returns the updated position in Line~12. 
We now consider how the sensor is modeled in more detail.

First of, Lines~9-10 define how sensor measurements are derived from the real position:
With a likelihood of $90\%$ the sensor will return the correct position \texttt{new\_pos}, otherwise it returns a position uniformly at random. 
The core language of \NDICE only supports Boolean variables and choices, hence \texttt{uniform(0, 3)} is syntactic sugar for a series of coin flips, and all numbers in \Cref{fig:plane} are syntactic sugar for a tuple of Boolean values.
Thus, \NDICE is limited to \emph{bounded} numbers.

Next, notice that \texttt{mes} could randomly take any value in $\{0,1,2\}$, even though we are only interested in the behavior of the system for a fixed measurement, namely \texttt{obs}.
Thus, we use an \verb*|observe| statement in Line~11 to enforce that \texttt{mes} and \texttt{obs} are equal.
The statement \verb|observe e| enables a form of Bayesian reasoning by "removing" executions where the condition \verb*|e| does not hold.
Formally, the statement \verb|observe e| always returns $\TRUE$, but as a side effect executions where \verb*|e| is not $\TRUE$ are defined to occur with probability $0$.
After the full program has been considered, the semantics of a \NDICE program normalizes the probabilities.
This allows us to reason about the probability of a given output, conditioned by the fact that all observations succeeded.
In our example, this means we can reason about the behavior of the program, under the condition that all random measurements from Line~9 align with the expected observations.

\paragraph{The Main Expression}
As a reminder, our goal is to reason about the probability that the vehicle is at Location~$1$ after three time steps, under the condition that we observed the sensor measurements~$0$, $1$ and $2$.
This query is modeled by the main expression of the program in Lines~15-18:

Here, we can use the \texttt{step} function to model both the moving vehicle and the sensor. 
For example, the call \texttt{step(0, 0)} in Line~15 models the first time step, where the vehicle starts at Location~$0$ and we expect the sensor measurement to be $0$.
Afterwards, Lines 16 and 17 repeat this process with the next measurement and the current position of the vehicle as returned by the previous call to \texttt{step}.
Finally, the program will return $\TRUE$ in Line~18 if the vehicle is on the runway, i.e.\ at Location~$1$.

\paragraph{The Inference Query} 
Now, the \NDICE inference algorithm can determine the maximum probability that the program returns $\TRUE$, given that all \texttt{observe} statements succeed.
However, do we really need a specialized inference algorithm for nondeterminism? 
Can we not cleverly instantiate the $\NFLIP$ statement so that we can use known algorithms for purely probabilistic programs?

One option to reduce inference with nondeterminism to the purely probabilistic case is to approximate $\NFLIP$ with the a probabilistic statement $\FLIP{\frac{1}{2}}$.
However, this approach offers no bounds on the quality of the approximation, and it could lead to \textit{arbitrarily low} results compared to the intended, nondeterministic query.
Another option is to simply consider all possible ways to resolve the nondeterminism, and perform a purely probabilistic query for each of them.
This is sound, but not feasible in practice.
Even if we restrict ourselves to enumerating deterministic behaviors, meaning nondeterminism is never resolved to a random choice, this still results in an exponential amount of queries.
Namely, for a program with $n$ flip operations (both nondeterministic or probabilistic), this may result in up to $2^n$ purely probabilistic queries.

Hence, it is infeasible to reduce inference with nondeterminism to the purely probabilistic case, and a dedicated inference algorithm like \NDICE is needed.
For the example program in \Cref{fig:plane}, \NDICE is able to automatically infer that the maximum probability that the vehicle is still on the runway after the observed sequence of measurements is about $3.6 \%$.

\subsection{Non-Compositionality of Probabilistic Inference with Nondeterminism}\label{sec:subOpt}
Before we present our inference method, there is one more issue which makes combining nondeterminism and conditioning challenging:
Local optimal choices cannot generally be composed to a global optimal solution.
To illustrate this, consider the following \NDICE program:%
\smallskip%
\begin{center}
	\begin{tabular}{c}
\begin{lstlisting}[language=Caml]
if flip(0.5) then flip(0.75)
	else if nflip()
		then let obs = observe flip(0.5) in flip(0.5)
		else let obs = observe flip(0.05) in flip(0.05)
\end{lstlisting}
\end{tabular}
\end{center}%
\smallskip%
We first consider the if-statement from Line 2 in isolation.
Notably, the observations are independent of the output of the if statement, thus the conditioned probability to return $\TRUE$ is $\frac{1}{2}$ if \texttt{nflip} chooses~$\TRUE$, and $\frac{1}{20}$ otherwise.
Therefore, the optimal \textit{local} choice for the nondeterministic flip is $\TRUE$, but this is not the optimal \textit{global} choice.
For the program as a whole, the conditioned probability to return $\TRUE$  is $\frac{2}{3}$ if \texttt{nflip} chooses $\TRUE$, and $\frac{301}{420}\approx 0.716$ otherwise.

Intuitively, this effect occurs as \verb|flip(0.75)| yields a higher chance to return $\TRUE$ than either choice of the nondeterministic flip, hence it is better to choose $\FALSE$ in order to reduce the chance of a successful observation, which minimizes the impact of the nested if-statement on the overall conditioned distribution.
However, minimizing the impact of less productive branches is not always optimal either:
If we replace \verb|flip(0.75)| with \verb|flip(0.55)|, then the optimal choice is $\TRUE$ instead.
Thus, nondeterministic choices need to be made in the context of the whole program.
In a similar vein, \citet{olmedo2018conditioning} showed that expectation transformers, which are defined inductively on the program structure, cannot be extended to combine probabilistic conditioning and nondeterminism.

\subsection{The \NDICE Inference Algorithm}
This section illustrates the \NDICE inference process using the program in \Cref{fig:example_prog}:
The program first tosses a probabilistic and a nondeterministic coin and observes that at least one flip was true.
Then, two probabilistic coins are flipped and a boolean expression decides the program's output.

\begin{figure}
\begin{tabular}{l c r}

\begin{subfigure}{0.35\textwidth}

\begin{lstlisting}[language=Caml, mathescape]
let a = $\verb*|flip|_1$(0.3) in 
let b = $\texttt{nflip}_2$() in
let t = observe(a || b) in
let c = $\verb*|flip|_3$(0.4) in
let d = $\verb*|flip|_4$(0.2) in
(a || c) && d
\end{lstlisting}

$$ \varphi = (f_1 \lor f_3) \land f_4$$
$$\gamma = f_1 \lor f_2$$
$$\left[ f_1\colon 0.3, \; f_2\colon n, \;f_3\colon 0.4, \;f_4\colon 0.2 \right]$$

\caption{Example \NDICE program and the compiled boolean formulae}
\label{fig:example_prog}
\end{subfigure}

&
\begin{subfigure}{0.225\textwidth}
	\begin{center}
		\begin{tikzpicture}[xscale=1.25, yscale=0.85]
			\node (a) at (0,0)   [draw, circle] {$f_1$};
			\node (b) at (0.75,-1)   [draw, circle] {$f_2$};
			\node (c) at (0,-2)   [draw, circle]{$f_3$};
			\node (d) at (-0.75,-3)   [draw, circle] {$f_4$};
			\node (F) at (0,-4.5)   [draw, rectangle] {$F$};
			\node (T) at (-0.75,-4.5)   [draw, rectangle] {$T$};
			\node (R) at (0.75,-4.5)   [draw, rectangle] {$R$};
			
			\path[-]          (a)  edge   [bend right=0]   node[left] {} (d);
			\path[-]          (a)  edge[dashed]   [bend right=0]   node[right] {} (b);
			
			\path[-]          (b)  edge   [bend right=0]   node[right] {} (c);
			\path[-]          (b)  edge[dashed]   [bend right=0]   node[right] {} (R);
			
			\path[-]          (c)  edge   [bend right=0]   node[right] {} (d);
			\path[-]          (c)  edge[dashed]   [bend right=0]   node[right] {} (F);
			
			\path[-]          (d)  edge   [bend right=0]   node[right] {} (T);
			\path[-]          (d)  edge[dashed]   [bend right=0]   node[right] {} (F);
		\end{tikzpicture}
	\end{center}
	\caption{Compiled ADD}
	\label{fig:example_add}
\end{subfigure}
&
\begin{subfigure}{0.375\textwidth}
	\begin{center}
	\begin{tikzpicture}[xscale=1.85, yscale=0.825]
		\node (a) at (0,0)   [draw, circle, fill=cyan!80] {$ $};
		\node (b) at (0.75,-1)   [draw, circle, fill=orange!80] {$ $};
		\node (c) at (0,-2)   [draw, circle]{$ $};
		\node (d) at (-0.75,-3)   [draw, circle] {$ $};
		\node (F) at (0,-4.5)   [draw, rectangle] {$F$};
		\node (T) at (-0.75,-4.5)   [draw, rectangle] {$T$};
		\node (R) at (0.75,-4.5)   [draw, rectangle] {$R$};

		\node (ao) at (0, -0.75)  [circle,fill,inner sep=1.5pt] {};
		\path[->]          (a)  edge   [bend right=0]   node[left, pos=0.5] {$d$} (ao);
		\path[->]          (ao)  edge   [bend right=20]   node[left, pos=0.15] {$0.3\;$} (d);
		\path[->]          (ao)  edge   [bend right=0]   node[above, pos=0.5] {$0.7$} (b);
		
		\path[->]          (b)  edge   [bend right=0]   node[above, pos=0.75] {$l,1$} (c);
		\path[->]          (b)  edge   [bend right=0]   node[right] {$r,1$} (R);
		
		\node (co) at (0, -2.7)  [circle,fill,inner sep=1.5pt] {};
		\path[->]          (c)  edge   [bend right=0]   node[right, pos=0.5] {$d$} (co);
		\path[->]          (co)  edge   [bend right=0]   node[above, pos=0.5] {$0.4$} (d);
		\path[->]          (co)  edge   [bend right=0]   node[right, pos=0.5] {$0.6$} (F);
		
		\node (do) at (-0.75, -3.6)  [circle,fill,inner sep=1.5pt] {};
		\path[->]          (d)  edge   [bend right=0]   node[left, pos=0.5] {$d$} (do);
		\path[->]          (do)  edge   [bend right=0]   node[left, pos=0.5] {$0.2$} (T);
		\path[->]          (do)  edge   [bend right=0]   node[above, pos=0.5] {$\;0.8$} (F);
		
	\end{tikzpicture}
\end{center}
	\caption{Final MDP}
	\label{fig:example_mdp}
\end{subfigure}

\end{tabular}
\caption{The full inference process for a \NDICE program illustrated.}
\label{fig:example_all}
\end{figure}
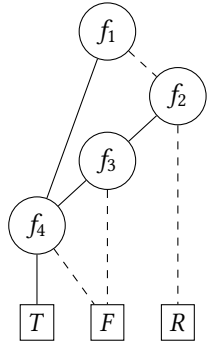
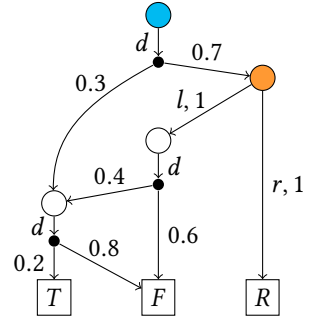

\paragraph{Compilation} 
The first step in the inference process is to compile two boolean formulas $\varphi$ and $\gamma$ as seen in \Cref{fig:example_prog}.
The variables in $\varphi$ and $\gamma$ correspond to flips in the original program, and so we compile a list of variables, where a variable is annotated with an $n$ if it belongs to a nondeterministic flip, and with its probability to return $\TRUE$ if it belongs to a probabilistic flip.
The formula $\varphi$ models the output of the compiled program, hence both the program and $\varphi$ return~$\TRUE$ if the first and last flip evaluate to~\TRUE.
Observations are ignored by $\varphi$, and instead $\gamma$ tracks if all observations in the compiled program would succeed.
Hence, $\gamma$ closely resembles the observation in Line 3.

\paragraph{ADD Construction}
Internally, $\varphi$ and $\gamma$ are maintained as \textit{Binary Decision Diagrams} (BDDs), and the next step combines them into the \textit{Algebraic Decision Diagram} (ADD) in \Cref{fig:example_add}.
This ADD models both the output \textit{and} observations of the compiled program, and we can evaluate it as follows:
We begin at the root, i.e. the topmost node, and assign either $\TRUE$ or $\FALSE$ to the variable $f_1$. 
If we assign $\TRUE$ to $f_1$, then we follow the solid edge to the node $f_4$, otherwise we follow the dashed edge to the node $f_2$.
We then repeat this process with the variable in the next node until we reach a terminal node. 
Notice that for any assignment to flip operations the ADD in \Cref{fig:example_add} either returns the same output as the program, or $R$ (as in \emph{R}eject or \emph{R}esample) if an observation failed.
Further, the ADD automatically removes redundant nodes, e.g. the ADD skips directly to $f_4$ when $f_1$ is $\TRUE$, as the program is unaffected by $\texttt{nflip}_2$ and $\texttt{flip}_3$ when $\texttt{flip}_1$ returns $\TRUE$.

\paragraph{Lifting to MDPs}
Next, we lift the ADD to the \textit{Markov Decision Process} (MDP) in \Cref{fig:example_mdp} by reintroducing the behavior of flips as stored by the list in \Cref{fig:example_prog}.
To do so, we lift each node in the ADD node to an MDP state, and each state's outgoing transitions are lifted from the ADD edges depending on the flip in the ADD node.
Consider for example the initial state of the MDP, which is highlighted in blue and matches to the root node of the ADD.
The root contains the variable $f_1$ which represents a probabilistic flip which is $\TRUE$ with probability $0.3$.
Thus, the MDP state has a single enabled action $d$ and branches randomly with probability $0.3$ between its successors. 
Similarly, the orange MDP state represents the ADD node with the nondeterministic flip $f_2$.
Hence, the orange state nondeterministically chooses between two actions, and each action only leads to one state.

Finally, we wish to reason about the behavior of the original program \textit{under the condition that all observations succeed}, therefore we need to consider the behavior of the MDP \textit{under the condition that R is not reached}.
This amounts to conditional reachability properties in MDPs which have been tackled in \citet{baier2014computing, junges2021runtime} by mimicking rejection sampling with an added \enquote{reset} transitions which returns executions that reach $R$ to the initial state.
However, we will be using a recently developed bisection method implemented in the Storm Model Checker~\citeBisection.
This algorithm is highly suited for our use-case, since it allows one to calculate conditional reachability probabilities very quickly on acyclic MDPs like the one we constructed in \Cref{fig:example_mdp}.

\medskip

Now, our original inference query, i.e. the maximum probability that the program in \Cref{fig:example_prog} outputs \TRUE, aligns with the maximum conditional probability to reach the state $T$ in the MDP.
But why take all these steps to build an MDP, if a valid MDP could be obtained by simply defining an operational semantics?
Simply put, the operational semantics of larger programs are too big to efficiently perform inference.
However, using ADDs as an intermediate step efficiently exploits redundancies to heavily reduce the MDP's size, which makes inference feasible for larger programs.

The use of ADDs might enable efficient inference, but this efficiency comes with the trade-off of lower expressiveness.
Namely, ADDs can only express Boolean functions, hence \NDICE is limited to programs which can be expressed with a finite amount of boolean variables.
Consequently, \NDICE is \textit{not} able to reason about unbounded numbers or continuous distributions, and programs need to be loop-free.
\Cref{sec:related_work} will later discuss tools which may not be subject to these limitations.


\section{Syntax and Semantics of the \protect\NDICE Language}\label{sec:semantics}

This section provides the syntax and denotational semantics for \NDICE, a first-order functional language extended with typical probabilistic statements and a nondeterministic coin flip. 

\subsection{Syntax}

\begin{figure}
	\begin{align*}
		\tau 
		\qmorespace{\Coloneqq} &
		\BOOL 
		~\;{}|{}~\; 
		\tau \times \tau 
		\qquad\qquad
		v 
		\qmorespace{\Coloneqq} 
		\TRUE  
		~\;{}|{}~\; 
		\FALSE  
		~\;{}|{}~\; 
		(v, v)
		\qquad\qquad
		\AEXP 
		\qmorespace{\Coloneqq} 
		x  
		~\;{}|{}~\; 
		v
		\\[.25em]
		e 
		\qmorespace{\Coloneqq}&
		\AEXP  
		~\;{}|{}~\; 
		(\AEXP, \AEXP) 
		~\;{}|{}~\; 
		\FST \,\AEXP  
		~\;{}|{}~\; 
		\SND \,\AEXP  
		\\
		&\quad{}~\;{}|{}~\;
		\ITE{\AEXP}{e}{e} 
		~\;{}|{}~\; 
		\LET{x = e}{e} 
		~\;{}|{}~\; 
		\FUNCNAME(\AEXP)
		\\
		&\quad{}~\;{}|{}~\;
		\FLIP{\theta} 
		~\;{}|{}~\; 
		\NFLIP 
		~\;{}|{}~\; 
		\OBSERVE{\AEXP} 
		\\[.25em]
		\mathit{fun} 
		\qmorespace{\Coloneqq} &
		\FUNC{\FUNCNAME(x\colon \tau)\colon \, \tau}{e}\\[.25em]
		p 
		\qmorespace{\Coloneqq} &
		\bullet e ~\;{}|{}~\; \mathit{fun}\;\, p
	\end{align*}%
	\caption{Syntax for the \NDICE language in A-normal form. Here, $\theta$ ranges over rational numbers in $[0,1]$}%
	\label{fig:syntax}%
\end{figure}%
The syntax for \NDICE is defined in \Cref{fig:syntax}.
\NDICE acts as a conservative extension of \DICE, and we thus also adopt the use of atomic expressions (\AEXP) to enforce an A-normal form \cite{flanagan1993essence}.
This will simplify the semantics and compilation rules without restricting the expressiveness of the language. 
For example, the tuple $(\FLIP{\frac{1}{3}}, \TRUE)$ can be expressed as $\LET{x = \FLIP{\frac{1}{3}}}{(x, \TRUE)}$.
Atomic expressions themselves are either values $v \in \VAL$ or variables $x \in \VARS$.
Much like \DICE, \NDICE expressions may be Booleans, tuples or tuple operations, function calls with non-recursive functions as well as the standard programming language features of sequential execution by let-statements or conditional branching through if-statements.
A \NDICE program $p$ is a series of function definitions 
followed by a single base expression $e$.
Later on, the practical implementation of \NDICE will relax the constraints imposed by the A-normal form, and will also introduce syntactic sugar for Boolean operations and bounded numbers.
We use these relaxations freely in our examples.

Additionally, \NDICE features two kinds of uncertainty:
Firstly, $\FLIP{\theta}$ models a \emph{probabilistic} coin flip which returns $\TRUE$ with probability $\theta \in [0, 1]$ and $\FALSE$ otherwise.
Secondly, $\NFLIP$ models a \emph{nondeterministic} choice, also taking values $\TRUE$ or \FALSE, but with no probability attached.
The use of these expressions is not restricted, and it is possible to interleave probabilistic and nondeterministic behavior arbitrarily.
Finally, \NDICE includes the statement \OBSERVE{e} to enable conditioning via observations. 
For purely probabilistic programs, this amounts to Bayesian reasoning, but \NDICE also features conditioning in the presence of nondeterminism.

\subsection{Semantics}

Probabilistic programs typically model a probability distribution over their output values. 
In the presence of nondeterminism, however, this is not sufficient as each resolution of nondeterminism may lead to a different probability distribution, and thus we generalize from distributions to sets of distributions.
This approach is grounded in category theory, where both probability and nondeterminism can be modeled as a monad. 
Namely, nondeterminism is modeled by the powerset monad, whereas probabilistic choice corresponds to the monad of finite distributions \cite{goy2020combining}.
Combining nondeterminism and probability thus amounts to composing these monads, which requires a distributive law between them \cite{beck1969distributive}.
Sadly, \citet{varacca2006distributing} showed that there is no distributive law between the monad of finite distributions and the powerset monad, making a suitable composition impossible.
This leads to difficulties for the design of programming languages:
A semantics which combines nondeterminism and probability naively will not be compositional, and the semantics of $c_1; (c_2; c_3)$ may differ from $(c_1; c_2); c_3$.

To address this, many approaches to combining the probabilistic and nondeterministic monad have been developed, for example \citet{varacca2006distributing} generalize the probabilistic monad to indexed valuations and \citet{goy2020combining} use weak distributive laws instead.
Our semantics will follow the work of \citet{mislove2000nondeterminism} and \citet{tix2009semantic}, which retains compositionality by requiring the sets of distributions to be convex.
We call a set $S$ of distributions convex, if for all distributions $d_1, d_2 \in S$ and $p \in [0,1]$, the distribution $p\cdot d_1 + (1-p)\cdot d_2$ is also in $S$.
Hence, we will use convex sets to model nondeterministic behavior, which has been applied to programming languages before \cite{zilberstein2025demonic, varacca2006distributing, jifeng1997probabilistic}. 
\\

\begin{figure}
		\begin{center}
		\begin{adjustbox}{max width=\textwidth}
				\renewcommand{\arraystretch}{2.1}
				\begin{tabular}{c@{\qquad}c}
					\multicolumn{2}{c}{ 
					$\llbracket v \rrbracket \ttriangleq \{1|v\rangle \}$ 
					\hfill
					
					$\llbracket \FLIP{\theta} \rrbracket \ttriangleq \{\theta|\TRUE\rangle + (1 - \theta)|\FALSE\rangle\}$
					
					\hfill
					$\llbracket \NFLIP \rrbracket \ttriangleq
					\{\theta|\TRUE\rangle +  (1-\theta)|\FALSE\rangle \,|\, \theta \in [0,1]\}$

					}
					\\

					\multicolumn{2}{c}{ 
						
						$\llbracket \FST\,(v_1, v_2) \rrbracket \ttriangleq \{1|v_1\rangle \}$
						\hfill

						$\llbracket \SND\,(v_1, v_2) \rrbracket \ttriangleq \{1|v_2\rangle \}$
						\hfill
						
						$\llbracket \FUNCNAME(v) \rrbracket^T \ttriangleq T(\FUNCNAME)(v)$
					}
					\\[1em]
					%
					$\llbracket \OBSERVE{v} \rrbracket \ttriangleq \begin{cases}
						\{1|\TRUE\rangle\},	&\text{if } v = \TRUE\\
						\{0|\TRUE\rangle\},	&\text{if } v = \FALSE\\
						\emptyset, 		& \text{otherwise}\\
					\end{cases}$
					&
					
					$\llbracket \ITE{v_g}{e_1}{e_2} \rrbracket \ttriangleq \begin{cases}
						\llbracket e_1\rrbracket, 	& \text{if } v_g = \TRUE\\
						\llbracket e_2\rrbracket, 	& \text{if } v_g = \FALSE\\
						\emptyset,				& \text{otherwise}\\
					\end{cases} $
					\\
					
					\multicolumn{2}{c}{ 
						\renewcommand{\arraystretch}{1.2}
						$\begin{matrix}
							\phantom{space maker}\\
							d \iin \llbracket \LET{x = e_1}{e_2} \rrbracket\\
							\qmorespace{\Longleftrightarrow} 
							\exists\, d_1 \in \llbracket e_1 \rrbracket~{}~{}~
							\forall\, v, d_1(v) > 0~{}~{}~
							\exists\, d_v \in \llbracket e_2[x \mapsto v] \rrbracket~{}~{}~
							\forall\, v'\colon \quad
							d(v') = \sum\limits_{v,\; d_1(v) > 0 } d_1(v) \cdot d_{v}(v')
						\end{matrix}$
					}
						\end{tabular}
			\end{adjustbox}
		\end{center}
	\caption{Semantics of \NDICE expressions. The function table $T$ is left implicit in most rules.}
	\label{fig:semantics} 
\end{figure}

Before we present the semantics of \NDICE, we will briefly introduce some notations. 
Given a finite set $X$, we call a function $d \colon X \rightarrow [0,1]$ a distribution if $\sum_{x \in X} d(x) = 1$, and we call $d$ a \textit{sub}-distribution if $\sum_{x \in X} d(x) \leq 1$. 
We use bra-ket notation $d = p_1{| v_1 \rangle} + {\ldots} +  p_n {|v_n \rangle}$ to denote a sub-distribution with $d(v_i) = p_i$.
We denote the set of all sub-distributions over $X$ as $\mathcal{D}(X)$, and we denote the powerset of $X$ by $\mathcal{P}(X)$.
We then define the semantics of \NDICE expressions as convex sets of (sub-)distributions.
Formally, the semantics function $\llbracket \:\cdot\: \rrbracket^T\colon \NDICE \rightarrow  \mathcal{P}(\mathcal{D}(\mathit{Val}))$ associates to each \NDICE expression a set of sub-distributions over values which may be generated by it.
The semantics function is parametrized with a function table $T$ to store the behavior of functions.
Note that the function table is only needed for function calls, therefore we omit $T$ in most cases.

The semantics for closed \NDICE expressions are defined in \Cref{fig:semantics}, and they conservatively extend the semantics of \DICE~\cite{holtzen2020scaling} for well-formed programs.
For the sake of completeness, we define the semantics of ill-formed programs, e.g.\ programs which use non-Boolean values as guards of if-statements, to be the empty set.
From now on we will assume that programs are well-formed, as our inference algorithm only considers well-formed programs.

We begin by considering the semantics of values and tuple operations.
Notably, these expressions contain neither (a) nondeterminism nor (b) randomness.
Their semantics are hence (a) \emph{singleton} sets with only one (b) \emph{Dirac} distribution assigning probability 1 to a single value.
The probabilistic $\FLIP{\theta}$ is not nondeterministic but is random.
Thus, its semantics is a still a singleton but containing a non-Dirac distribution over \textit{two} possible values. 
The semantics of $\ITE{v_g}{e_1}{e_2}$ is standard. 
Due to the A-normal form, condition $v_g$ is always a value.
If it is $\TRUE$, then the if-statement behaves like $e_1$, otherwise it behaves like $e_2$. 
Next, we examine the remaining rules in greater detail.

\paragraph{Nondeterministic Flips}
The statement $\NFLIP$ is nondeterministic, but not inherently random. 
Still, we want non-determinism to act as an abstraction for all possible decision procedures we could use to return either $\TRUE$ or $\FALSE$, which includes random choices.
Thus, the semantics of $\NFLIP$ do not include just the deterministic distributions $1|\TRUE\rangle$ and $1|\FALSE\rangle$, but also all possible distributions in between.
Additionally, the semantics of $\NFLIP$ then form a convex set.

\paragraph{Let Statement}
The semantics of $\LET{x=e_1}{e_2}$ is a set of distributions that emerges from all resolutions of the nondeterminism in both $e_1$ and $e_2$.
For this, we first pick one distribution $d_1$ from the set $\llbracket e_1 \rrbracket$, thus resolving the nondeterminism in $e_1$.
Now, since there may be many values $v$ with $d_1(v) > 0$, and $e_2$ may depend on $v$, we must next pick a distribution $d_v$ from $\llbracket e_2 [x\mapsto v]\rrbracket$ -- where $e_2 [x\mapsto v]$ is $e_2$ with all free occurrences of $x$ replaced by $v$ -- for all possible intermediate values $v$.
We have now fully resolved the nondeterminism within the let statement, and the probability to output a value $v'$ can be found by considering all possible execution paths, so that $e_1$ first terminates in an intermediate value $v$, and then $e_2[x\mapsto v]$ terminates in $v'$.
Like this, we can determine the distribution for \textit{one} possible resolution to the nondeterminism. 
The semantics of the let-statement as a whole are generated by considering \textit{all} possible resolutions to the nondeterminism, which is done by considering all combinations of distributions which could be picked from the semantics of $e_1$ and $e_2$.
To illustrate this, consider the following example:
\begin{align*}
	\LET{x = \NFLIP}{\LET{y = \FLIP{\tfrac{2}{3}}}{x \leftrightarrow y}} \tag{ExLet1}\label{ExLet1}\\
	\llbracket ExLet1 \rrbracket = \left\{p | \TRUE \rangle + (1-p)| \FALSE \rangle \,|\, p \in \left[\tfrac{1}{3},\tfrac{2}{3}\right]\right\}
\end{align*}
To generate a distribution from $\llbracket \text{\ref{ExLet1}} \rrbracket$, we first pick a distribution from $\llbracket \NFLIP \rrbracket$, for example $d_1 = \frac{1}{4}|\TRUE\rangle + \frac{3}{4}|\FALSE\rangle$.
This gives us a two values with $d_1(v) > 0$, namely $\TRUE$ and $\FALSE$.  
Thus, we have to pick distributions from the two singleton sets $\llbracket \LET{y = \FLIP{\tfrac{2}{3}}}{{\color{orange}\TRUE} \leftrightarrow y} \rrbracket = \left\{\frac{2}{3}|\TRUE\rangle + \frac{1}{3}|\FALSE\rangle\right\}$ and $\llbracket \LET{y = \FLIP{\tfrac{2}{3}}}{{ \color{orange} \FALSE} \leftrightarrow y} \rrbracket = \left\{\frac{1}{3}|\TRUE\rangle + \frac{2}{3}|\FALSE\rangle\right\}$.
Combining the three distributions then generates a possible distribution for the overall let statement, namely:
$$
\left(\tfrac{1}{4}\cdot \tfrac{2}{3} + \tfrac{3}{4}\cdot \tfrac{1}{3}\middle)\middle|\TRUE\right\rangle + 
\left(\tfrac{1}{4}\cdot \tfrac{1}{3} + \tfrac{3}{4}\cdot \tfrac{2}{3}\middle)\middle|\FALSE\right\rangle 
= \left. \tfrac{5}{12}\middle|\TRUE\right\rangle + \left. \tfrac{7}{12}\middle |\FALSE \right\rangle
\in \llbracket \text{\ref{ExLet1}} \rrbracket 
$$
Notably, the behavior of this expression changes if we swap the order of the flips:
\begin{align*}
	\LET{x = \FLIP{\tfrac{2}{3}}}{\LET{y = \NFLIP}{x \leftrightarrow y}}  \tag{ExLet2}\label{ExLet2}\\
	\llbracket ExLet2 \rrbracket = \{p | \TRUE \rangle + (1-p)| \FALSE \rangle \,|\, p \in [0,1]\}
\end{align*}
Intuitively, the additional distributions arise as $\NFLIP$ can act depending on the outcome of the previous flip.
More formally, the probabilistic flip may return two different values, which results in \textit{two} opportunities to pick a distribution from the semantics of the nondeterministic flip.
In particular, we may pick a different distribution depending on the outcome of  the probabilistic flip, and thus we could pick $1|\TRUE\rangle$ or $1|\FALSE\rangle$ to always match the previous flip or not.
This does not occur for \ref{ExLet1} as the outcome of the probabilistic flip is still unclear when the nondeterministic flip happens.
Hence, nondeterministic choices can depend on past outcomes, but they cannot predict the future.%

\paragraph{Observations and Conditioning}
The observe statement allows us to condition the behavior of expressions on the assumption that all observations succeed, i.e.\ the arguments to all observe statements were $\TRUE$.
Our semantics follow prior work and "filter out" failed observations by assigning them a probability of $0$ \cite{claret2013bayesian, fierens2015inference, borgstrom2011measure, huang2016application}.
We will illustrate this with the following example:
\begin{align*}
	\LET{x = \NFLIP}{\LET{ y = \FLIP{\tfrac{2}{3}} }{ \LET{t =\OBSERVE{x\lor y}}{y}} } \tag{ExObs}\label{ExObs}\\
	\llbracket ExObs \rrbracket = \left\{\left. \tfrac{2}{3}\middle|\TRUE \right\rangle + \left(\tfrac{1}{3} \cdot p\middle)\middle|\FALSE\right\rangle \;\middle|\; p\in \left[0, 1\right]\right\}
\end{align*}
The semantics of \ref{ExObs} can be explained as follows: With a probability of $\frac{2}{3}$ the second flip returns $\TRUE$, meaning the observation succeeds and the program returns $\TRUE$.
But, if the probabilistic flip returns $\FALSE$, then the observation only succeeds if the first flip returned $\TRUE$.
Thus, the probability to return $\FALSE$ depends on the resolution of the nondeterminism, which is represented by the parameter $p$.

Notably, $\llbracket \text{\ref{ExObs}} \rrbracket$ contains \textit{sub}-distributions like $\frac{2}{3}|\TRUE \rangle + \frac{1}{6}|\FALSE\rangle$, where the missing probability mass amounts to the likelihood for the observation to fail as both flips returned $\FALSE$.
We thus call the semantics in \Cref{fig:semantics} \textit{unnormalized semantics}. 
Retrieving the normalized semantics of an expression $e$ is straightforward, as we simply normalize the distributions in $\llbracket e \rrbracket$.
To do so, we define the normalizing constant $d_D$ of a distribution $d$ and the normalized semantics $\llbracket e \rrbracket_D$ of an expression $e$:
$$ d_D \coloneqq \sum_{v} d(v) \qquad \qquad \llbracket e \rrbracket_D \coloneqq \left\{d' \,\left|\; \exists d\in \llbracket e \rrbracket \colon \forall v \colon d'(v) = \frac{d(v)}{d_D} \right. \right\}$$
For $d_D = 0$ we define the fraction to be $0$.
The semantics $\llbracket e \rrbracket_D$ describe all output distributions of $e$, conditioned by all observations succeeding.
For example, consider $\llbracket \text{\ref{ExObs}} \rrbracket_D$ compared to $\llbracket \text{\ref{ExObs}} \rrbracket$, where $\TRUE|\frac{2}{3}\rangle + \frac{1}{3} |\FALSE \rangle$ remains unchanged, but $\frac{2}{3}|\TRUE \rangle$ will be normalized to $1|\TRUE\rangle$.
Considering unnormalized semantics first and normalizing at the end is standard, and we share this approach with \DICE and other PPLs \cite{claret2013bayesian, fierens2015inference}.

\paragraph{Function Calls and Programs}
The semantics of a function $\FUNCNAME(x \colon \tau_1)$ takes an input value $v$ of type $\tau_1$ and returns the semantics of $\FUNCNAME$'s function body $e$ (i.e.\ a set of distributions), but in which each occurrence of $x$ is replaced by $v$.
Hence, the semantics of a \NDICE function $\FUNCNAME$ is of type $\llbracket \FUNC{\FUNCNAME(x\colon \tau_1)\colon \, \tau_2}{e} \rrbracket: \mathit{Val} \rightarrow \mathcal{P}(\mathcal{D}(\mathit{Val}))$ and given formally by:
$$\llbracket \FUNC{\FUNCNAME(x\colon \tau_1)\colon \, \tau_2}{e} \rrbracket (v) \triangleq \llbracket e[x \mapsto v]\rrbracket$$
In \Cref{fig:semantics}, we use this semantical object as stored in the function table $T$ to define the semantics for function calls: 
We look up the identifier $\FUNCNAME$ of the called function in the function table $T$ and return the corresponding set of unnormalized distributions for the input value $v$.

We can now define the semantics of a \NDICE program $p$ as $\llbracket p \rrbracket^T \colon  \mathcal{P}(\mathcal{D}(\mathit{Val}))$.
We determine $\llbracket p \rrbracket^T$ by first adding each function to the function table, until eventually the base expression $\bullet \, e$ is reached.
Finally, the semantics of the program as a whole are the semantics of the base expression $e$ in the context of the accumulated function table $T$.
This behavior is formalized by the following rules, where $\bullet$ denotes the empty sequence and $\eta(fun)$ retrieves the function's identifier:
$$ \llbracket \bullet \, e \rrbracket^T \triangleq \llbracket e \rrbracket^T \qquad \llbracket fun \; p \rrbracket^T \triangleq \llbracket p \rrbracket^{T \cup \{\eta(fun) \rightarrow \llbracket fun \rrbracket^T\}} $$

\section{Technical Background of the Inference Process}\label{sec:background}
 This section presents the necessary background information and prior work which is needed for the technical aspects of the \NDICE inference algorithm.
 In particular, we will recap the \DICE-style compilation procedure from \citet{holtzen2020scaling} in \Cref{sec:prelims_Dice}, and we will introduce Decision Diagrams and Markov Decision Processes in \Cref{sec:prelims_ADD} and \Cref{sec:prelims_MDP}.

\subsection{\DICE-style Compilation}\label{sec:prelims_Dice}

We will now summarize \DICE's compilation procedure from \citet{holtzen2020scaling} by recapping how \DICE expressions are compiled to Boolean formulas.
For our purposes, \DICE expressions will be defined as \NDICE expressions \textit{without} the statement $\NFLIP$.

A \emph{compilation judgment} for a \DICE expression $e$ has the form $\Gamma \vdash e \colon \tau \leadsto (\dot{\varphi},\, \gamma,\, w)$. 
We say that, given the type environment $\Gamma$, the expression $e$ of type $\tau$ compiles to the triple $(\dot{\varphi},\, \gamma,\, w)$.
The type environment $\Gamma$ maps variables $x$ to their respective type $\tau_x$, where $\tau_x$ is needed to correctly encode program variables in the compilation result.
We call $\dot{\varphi}$ the \textit{model formula}, as it models the the output value of $e$ given some assignment to flips and free variables, while ignoring all observations.
The added dot in $\dot{\varphi}$ denotes a (possibly nested) \emph{tuple} of Boolean formulas, as opposed to a \emph{single} formula~$\varphi$.
The \textit{accepting formula} $\gamma$ is a single Boolean formula, which is satisfied if and only if all observations in $e$ would succeed for a given assignment to flips and variables.
Each Boolean variable in $\dot{\varphi}$ and $\gamma$ represents a free variable of the compiled expression or a flip operation. 
Hence, we need to remember how the corresponding flip for each variable behaved.
\DICE thus compiles a set of weights $w$, which (for our purposes) is a set of pairs $f_i\colon p_i$, where $f_i$ is the Boolean variable which represents a flip and the label $p_i$ denotes the flip's probability to return $\TRUE$.

Rules for compilation of well-typed \DICE expressions without function calls from \citet{holtzen2020scaling} are shown in  \Cref{fig:dice_Comp}.
We will now briefly go over these rules.
For examples and more explanations we refer to the original work by \citet{holtzen2020scaling}.

\begin{figure}
	\begin{center}
		\begin{adjustbox}{max width=\textwidth}
			\renewcommand{\arraystretch}{3.2}
			\begin{tabular}{c c}
				
				\multicolumn{2}{c}{
				\inference[\ifthenelse{\boolean{showRuleNames}}{\texttt{(C-Val)}}{}]
				{}{  \Gamma \vdash v \colon \tau \leadsto (v, \TRUE, \{\})}
				$\quad$
				
				\inference[\ifthenelse{\boolean{showRuleNames}}{\texttt{(C-Ident)}}{}]
				{\Gamma(x) = \tau}{  \Gamma \vdash x \colon \tau \leadsto (F_\tau(x), \TRUE, \{\})}
				
				$\quad$
				
				\inference[\ifthenelse{\boolean{showRuleNames}}{\texttt{(C-Flip)}}{}]
				{\text{variable $f$ fresh}}{  \Gamma \vdash \FLIP{\theta} \colon \BOOL \leadsto (f, \TRUE, \{f\colon \theta\})}
			}
				
				\\
				
				\inference[\ifthenelse{\boolean{showRuleNames}}{\texttt{(C-Tup)}}{}]
				{\Gamma(x_1) = \tau_1 \qquad \Gamma(x_2) = \tau_2}{  \Gamma \vdash (x_1, x_2) \colon \tau_1 \times \tau_2 \leadsto ((F_{\tau_1}(x_1), F_{\tau_2}(x_2)) , \TRUE, \{\})}
				&
				
				\inference[\ifthenelse{\boolean{showRuleNames}}{\texttt{(C-Obs)}}{}]
				{\Gamma \vdash \AEXP \colon \BOOL \leadsto (\varphi, \TRUE, [])}
				{  \Gamma \vdash \OBSERVE{\AEXP} \colon \BOOL \leadsto (\TRUE, \varphi, \{\})}
				\\
				
				\inference[\ifthenelse{\boolean{showRuleNames}}{\texttt{(C-Fst)}}{}]
				{\Gamma(x) = \tau_1 \times \tau_2}
				{ \Gamma \vdash \FST x \colon \tau_1 \leadsto (F_{\tau_1}(x_l), \TRUE, \{\})}
				&
				
				\inference[\ifthenelse{\boolean{showRuleNames}}{\texttt{(C-Snd)}}{}]
				{\Gamma(x) = \tau_1 \times \tau_2}
				{ \Gamma \vdash \SND x \colon \tau_1 \leadsto (F_{\tau_2}(x_r), \TRUE, \{\})}
				\\
				
				\multicolumn{2}{c}{
					\inference[\ifthenelse{\boolean{showRuleNames}}{\texttt{(C-ITE)}}{}]
					{\Gamma \vdash \AEXP \colon \BOOL \leadsto (\varphi_g, \TRUE, \{\}) 
						\qquad \Gamma \vdash e_T \colon \tau \leadsto (\dot{\varphi}_T, \gamma_T, w_T) 
						\qquad \Gamma \vdash e_F \colon \tau \leadsto (\dot{\varphi}_F, \gamma_F, w_F)}
					{  \Gamma \vdash \ITE{\AEXP}{e_T}{e_F} \colon \tau \leadsto \left( 
						(\varphi_g \overset{\tau}{\land} \dot{\varphi}_T) \overset{\tau}{\lor} (\overline{\varphi}_g \overset{\tau}{\land} \dot{\varphi}_F), \;
						(\varphi_g \land \gamma_T) \lor (\overline{\varphi}_g \land \gamma_F) ,\;
						w_T \,\cup\, w_F \right)}
				}\\
				
				\multicolumn{2}{c}{
					\inference[\ifthenelse{\boolean{showRuleNames}}{\texttt{(C-Let)}}{}]
					{\Gamma \vdash e_1 \colon \tau_1 \leadsto (\dot{\varphi}_1, \gamma_1, w_1) \qquad 
						\Gamma \cup \{x \mapsto \tau_1 \} \vdash e_2 \colon \tau_2 \leadsto (\dot{\varphi}_2, \gamma_2, w_2)}
					{  \Gamma \vdash\LET{x = e_1}{e_2} \colon \tau_2 \leadsto \left(
						\dot{\varphi}_2[x\overset{\tau_1}{\mapsto} \dot{\varphi}_1], \;
						\gamma_1 \land \gamma_2[x\overset{\tau_1}{\mapsto} \dot{\varphi}_1],\;
						w_1 \,\cup\, w_2 \right)}
				}
				
			\end{tabular}
		\end{adjustbox}
	\end{center}
	\caption{
		\DICE-Compilation rules for expressions. These assume, without loss of generality but for simplicity, that fst, snd, and tuple construction are only ever performed with identifiers as arguments.}
	\label{fig:dice_Comp} 
\end{figure}%

\paragraph{Values}
The rule \texttt{(C-Val)} compiles constant values.
Since the output of $v$ is the value~$v$ itself, $v$ is the model formula for $v$.
Furthermore, a value contains neither observations nor flips, thus the accepting formula is simply $\TRUE$ and the set of weights is the empty set.

\paragraph{Free Variables and Tuples}
For variables, the structure of the model formula has to match the structure of the variable's type; thus we need to unfold the variable into its components.
Hence, the rule \texttt{C-Ident} uses the \textit{form function} $F_\tau$, which given the variable $x$ of type $\tau_x$ builds a \enquote{template} for~$x$ by introducing a free variable for each boolean component of $x$.
$F_\tau$ is defined inductively as $F_\BOOL (x) \triangleq x$ and $ F_{\tau_1 \times \tau_2}(x) \triangleq (F_{\tau_1}(x_l), F_{\tau_2}(x_r))$, where we use the subscripts $x_l$ and $x_r$ to distinguish the left and right elements of $x$.
The rules for tuples also make use of $F_\tau$, and \texttt{C-Fst} and \texttt{C-Snd} build representations of $x_l$ and $x_r$ to refer to the correct component of $x$.

\paragraph{Flips}
The rule \texttt{C-Flip} introduces a fresh variable $f$ to represent the flip and uses $f$ as the model formula.
Afterwards, \texttt{C-Flip} pairs $f$ with the flip's likelihood to return $\TRUE$, and defines the set of weights as a singleton set with this pair as its only element.

\paragraph{Observations}
The input to an observe statement is a Boolean atomic expression, and thus either a variable identifier or a value.
\texttt{C-Obs} hence uses the model formula of the input as the accepting formula of the observe statement, as the observation succeeds if the input evaluates to $\TRUE$. 
The model formula is always $\TRUE$ in accordance with the semantics from \Cref{fig:semantics}.

\paragraph{If Statements}
The if statement $\ITE{g}{e_T}{e_F}$ behaves like $e_T$ whenever the guard $g$ is $\TRUE$, and otherwise like $e_F$.
The compiled formulas mirror this idea of switching between $e_T$ and $e_F$ based on the behavior of the guard $g$, as exemplified by the accepting formula $(\varphi_g \land \gamma_T) \lor (\overline\varphi_g \land \gamma_F)$:
Whenever the guard returns $\TRUE$, i.e.\ $\varphi_g$, holds, then we use the accepting behavior of $e_T$ as encoded in $\gamma_T$, otherwise we consider the accepting behavior of $e_F$ as encoded in $\gamma_F$.
The model formula of the if-statement is constructed from the model formulas of $e_T$ and $e_F$ in a similar way, where we once again use $\varphi_g$ to choose between $e_T$ and $e_F$. 
However, $\dot{\varphi}_T$ and $\dot{\varphi}_F$ could be tuples of formulas, and thus we define how the Boolean function $\varphi_g$ interacts with tuples of formulas via broadcast conjunction $\varphi_a \overset{\tau}{\land} \dot{\varphi}_b$ and point-wise disjunction $\dot{\varphi}_a \overset{\tau}{\lor} \dot{\varphi}_b$:
\begin{align*}
	\varphi_a \overset{\BOOL}{\land} \varphi_b &\coloneq \varphi_a \land \varphi_b
	&
	\varphi \overset{\tau_1 \times \tau_2}{\land} (\dot{\varphi}_l,\, \dot{\varphi_r}) &\coloneq( \varphi \overset{\tau_1}{\land} \dot{\varphi}_l, \varphi \overset{\tau_2}{\land} \dot{\varphi}_r) \\
	\varphi_a \overset{\BOOL}{\lor} \varphi_b &\coloneq \varphi_a \lor \varphi_b
	&
	(\dot{\varphi}_{a,l},\, \dot{\varphi}_{a,r}) \overset{\tau_1 \times \tau_2}{\lor} (\dot{\varphi}_{b,l},\, \dot{\varphi}_{b,r}) &\coloneq(  \dot{\varphi}_{a,l} \overset{\tau_1}{\lor} \dot{\varphi}_{b,l},\; \dot{\varphi}_{a,r} \overset{\tau_2}{\lor} \dot{\varphi}_{b,r})
\end{align*}
The sets of weights are combined by simply taking the (disjoint) union.

\paragraph{Let Statement}
The let-statements sequentially executes $e_1$ and $e_2$, where $e_1$ decides the value of~$x$ which is used in $e_2$ to determine the output of the let-statement as a whole.
\texttt{C-Let} mimics this in the model formula with substitutions $\varphi_2[x\mapsto \varphi_1]$ to describe that $x$ behaves according to the model formula of $e_1$ within the model formula of $e_2$.
These substitutions replace all occurrences of~$x$ in $\varphi_2$ with the term $\varphi_1$, and we define typed substitutions with or into tuples as follows:
$$
\varphi_2 [x \overset{\BOOL}{\mapsto} \varphi_1] \morespace{\coloneqq} \varphi_2 [x \mapsto \varphi_1]
\qquad \qquad
\varphi_2 [x \overset{\tau_1 \times \tau_2}{\mapsto} (\dot{\varphi}_{l},\, \dot{\varphi}_{r})] \morespace{\coloneqq} \varphi_2 [x_l \overset{\tau_1}{\mapsto} \dot{\varphi}_{l}][x_r \overset{\tau_2}{\mapsto} \dot{\varphi}_{r}]$$ 
$$(\dot{\varphi}_a,\, \dot{\varphi}_b)[x\overset{\tau}{\mapsto}\dot{\varphi}] \morespace{\coloneqq} (\dot{\varphi}_a[x\overset{\tau}{\mapsto}\dot{\varphi}],\, \dot{\varphi}_b[x\overset{\tau}{\mapsto}\dot{\varphi}])$$
The accepting formula follows the same principle, but we add $\gamma_1$ to ensure that all observations in $e_1$ succeed.
Once again, the sets of weights are combined by taking the (disjoint) union.

\paragraph{Functions and Programs}
We have only discussed how \DICE \textit{expressions} are compiled, but the \DICE language also includes function calls and function definitions which can be used to form programs.
In \citet{holtzen2020scaling}, such \DICE programs are addressed with additional compilation rules, but the result of the compilation is still a triple of the form $(\dot{\varphi},\, \gamma,\, w)$.
Hence, all subsequent inference steps are identical between expressions and programs, and thus we will not cover the compilation of \DICE programs here.
We refer to \citet{holtzen2020scaling} for more details.

\subsection{Algebraic Decision Diagrams}\label{sec:prelims_ADD}
Decision Diagrams are a key component in the inference process, and we will briefly introduce the necessary background on both Algebraic Decision Diagrams (ADDs) and Binary Decision Diagrams (BDDs).
We refer to \citet{husung2024oxidd,andersen1996introduction} for a more extensive introduction.
An ADD (or Multi-Terminal BDD \cite{fujita1997multi}) encodes a function $f \colon \BOOL^k \rightarrow S$ with boolean input variables $x_1,\dots, x_k$ and a set of output values $S$, and is defined as follows:%
\begin{definition}[Algebraic Decision Diagrams]
	An ADD $\mathcal{A} = (V,\, T,\, r,\, E)$ is a directed acyclic graph~(DAG) with root node $r$, a set of inner nodes $V$, a set of terminal nodes $T$, and a set of labeled edges $E$.
	All inner nodes $u \in V$ contain a variable $\mathit{var}(u)$, and $u$ has exactly two outgoing edges: 
	One labeled $\mathit{then}$ and one labeled $\mathit{else}$.  
	All terminal nodes $u \in T$ contain a value $\mathit{val}(u) \in S$.
	\hfill$\triangleleft$
\end{definition}%
\noindent%
For each node $u$, we refer to the node reached by taking the edge labeled $\mathit{then}$ as $\mathit{then}(u)$, and analogously $\mathit{else}(u)$.
Binary Decision Diagrams \cite{bahar1993ADDs} are ADDs with $S = \BOOL$.
We only consider \textit{ordered} ADDs, i.e.\ there exists some order on the input variables $x_i$ so that the variables along any path from the root to some terminal node are in that order.
The semantics of an ADD node are defined recursively as follows:
A terminal node $u$ represents the constant function $f_u = \mathit{val}(u) \in S$.
An inner node $v$ encodes the following function:
$$f_v = \begin{cases}
	f_{\mathit{then}(v)}, & \text{if } \mathit{var}(v) = \TRUE \\
	f_{\mathit{else}(v)}, & \text{if } \mathit{var}(v) = \FALSE
\end{cases}$$%
However, we are less interested in the semantics of ADDs, than in their ability to compactly encode any Boolean function as a DAG.
In fact, ADDs can represent \emph{any} function of type $f \colon \BOOL^k \rightarrow S$, and generic algorithms can be used to construct the ADD for any given function \cite{fujita1997multi}.
The compactness of the resulting ADD stems from the following restrictions:
\begin{itemize}
	\item \textit{No redundant tests:} For any inner node $u$ we have $\mathit{then}(u) \neq \mathit{else}(u)$.
	\item \textit{No duplicate tests:} For any two distinct inner nodes $u$ and $v$ we have either $\mathit{var}(u) \neq \mathit{var}(v)$, $\mathit{then}(u) \neq \mathit{then}(v)$, or $\mathit{else}(u) \neq \mathit{else}(v)$. \label{DD:no_dups}
	\item \textit{No duplicate outputs:} For any two distinct terminal nodes $u$ and $v$ we have $\mathit{val}(u) \neq \mathit{val}(v)$.
\end{itemize}
An ADD adhering to these restrictions is called \textit{reduced}.
For a fixed variable ordering, the reduced ADD of a function is unique and often quite small, making ADDs a suitable representation for many functions.
However, the chosen variable ordering can heavily impact how compactly a given function can be represented \cite{bryant1986bdd}.
In extreme cases, the reduced ADD of some functions may be linear in size to the number of variables for some variable orderings, but exponentially large with other orderings.
Finding the optimal ordering is NP-hard \cite{bollig1996improving}, and there are also functions where the ADD always is exponentially large \cite{bryant1986bdd}.

\subsection{Markov Decision Processes and Conditional Reachability}\label{sec:prelims_MDP}
We will briefly introduce Markov Decision Processes (MDPs) and conditional reachability properties, we refer to \citet{baier2014computing,puterman2014markov} for more details.
We first define MDPs:%
\begin{definition}[Markov Decision Process]
	A Markov Decision Process (MDP) is a six-tuple $(S,\, s_{init},\, \mathit{Act},\, P,\, AP,\, L)$, where $S$ is a set of states, $s_{init} \in S$ is the initial state, $AP$ is a set of atomic propositions, $L\colon S \rightarrow\mathcal{P}(AP)$ is a labeling function, $\mathit{Act}$ is a set of actions and $P\colon S\times \mathit{Act} \times S \rightarrow [0,1]$ is a transition function, so that for all $s\in S$ and $\alpha \in \mathit{Act}$ we have:%
	\begin{align*}
		\sum_{s' \in S} P(s, \alpha, s') \in \{0,1\}
		\tag*{$\triangleleft$}
	\end{align*}
\end{definition}%
\noindent%
An action $\alpha$ is called enabled in $s$ if there is some $s'$ with $P(s, \alpha, s') > 0$.
We write $\smash{s \xrightarrow{\alpha, p} s'}$ for $P(s, \alpha, s') = p$. 
For any state $s$, $L(s)$ is the set of atomic propositions which hold in that state, we also use labels $a \in AP$ to refer to the set $\{s \in S \,|\, a \in L(s)\}$.
We define a path $\pi \in \mathit{Paths}(s_0)$ with starting state $s_0 \in S$ in the MDP as a (finite) sequence of states, written as $\pi = s_0\, s_1 \dots s_n$.
Reasoning about the probability of a path in the MDP requires us to resolve the nondeterministic choice between the different enabled actions in a state.
This is done by a scheduler $\Theta\colon S \rightarrow \mathcal{D}(\mathit{Act})$, sometimes also called a policy or adversary, which associates each state in the MDP with a distribution over (enabled) actions.
We only consider memoryless schedulers as they are sufficient for unbounded reachability properties \cite{forejt2011automated}. 
Given a scheduler $\Theta$, we determine the probability of a path $\pi = s_0\, s_1 \dots s_n$ as the product of all transition probabilities along the path:
$$ Pr^\Theta(\pi) \coloneqq \prod_{0\leq i < n} \; \sum\limits_{\alpha \in \mathit{Act}} \Theta(s_i)(\alpha) \cdot P(s_i,\, \alpha,\, s_{i + 1})$$
We also define the probability to reach a set of states $F \subseteq S$ starting from the initial state $s_{init}$ in $\mathcal{M}$ under a scheduler $\Theta$.
To do so, we consider the set $\lozenge F \subseteq \mathit{Paths}(s_{init})$, which contains all (finite) paths which start in $s_{init}$ and traverse states which are not in $F$ until a state in $F$ is reached. 
Formally, $\lozenge F$ is defined as $\lozenge F \coloneqq \{ s_0\, s_1 \dots s_n \,|\, \forall i < n\colon s_i \notin F \land s_n \in F \}$.
We then define the probability to eventually reach $F$ in the MDP $\mathcal{M}$ under the scheduler $\Theta$, as well as the maximal probability to reach $F$ in $\mathcal{M}$ as follows:%
\begin{align*}
	Pr^\Theta_\mathcal{M}(\lozenge F) \coloneqq \sum_{\pi \in \lozenge F} Pr^\Theta(\pi) 
	\qquad \qquad 
	Pr^{\mathit{max}}_\mathcal{M}(\lozenge F) \coloneqq \max\limits_{\Theta} Pr^\Theta_\mathcal{M}(\lozenge F) 
\end{align*}
Furthermore, we define the conditional reachability probability to reach a set of states $F$ in an MDP $\mathcal{M}$, under the condition of reaching a set of states $G$ for a given scheduler $\Theta$, as follows:
$$ Pr^{\Theta}_\mathcal{M} (\lozenge F \;|\; \lozenge G) \coloneqq \frac{Pr^{\Theta}_\mathcal{M} (\lozenge F \land \lozenge G)}{Pr^{\Theta}_\mathcal{M} (\lozenge G)} 
$$
We define the fraction to be $0$ in the case of $Pr^{\Theta}_\mathcal{M} (\lozenge G) = 0$.
There are two generic algorithms to compute conditional reachability probabilities on MDPs.
For one, \citet{baier2014computing} showed that any MDP $\mathcal{M}$ can be transformed into an MDP $\mathcal{M}'$ so that the conditional reachability probabilities on $\mathcal{M}$ align with the unconditional reachability probabilities on $\mathcal{M}'$.
Such ordinary reachability properties have been studied extensively, and a variety of algorithms to compute them have been developed \cite{hartmanns2023practitioner}.

Alternatively, the Storm Model Checker recently implemented  a bisection based algorithm to calculate conditional reachability probabilities \citeBisection. 
This algorithm is centered around a binary search for the correct conditional probability, which continues until sufficiently small error bounds on the results can be guaranteed.
To facilitate the binary search, an (unconditional) expected reward has to be calculated to decide whether the true conditional probability is in the upper or lower half of the current search space.
This expected reward can be calculated very efficiently on acyclic MDPs, which makes this approach highly suited for our eventual use-case.


\section{Compilation and Inference Algorithm}\label{sec:inference}
This section describes the inference algorithm for closed and well-formed \NDICE programs in several steps.
First, we describe how \NDICE expressions and programs are compiled into Boolean formulas. 
Afterwards, we demonstrate how these formulas can be used to represent the behavior of the compiled expression as an \textit{Algebraic Decision Diagram} (ADD).
We then lift this ADD to a \textit{Markov Decision Process} (MDP) and relate the semantics of the original \NDICE expression to conditional reachability properties on this MDP.

\subsection{Compilation of \NDICE Expressions to Boolean Formulas}\label{sec:nDice_comp}

	\noindent
	We will now present how the \DICE compilation procedure from \Cref{sec:prelims_Dice} can be extended to compile \NDICE expressions.
	In this extended compilation process, a \emph{compilation judgment} for a \NDICE expression $e$ has the form $\Gamma \vdash e \colon \tau \leadsto (\dot{\varphi},\, \gamma,\, t)$.  
	Here, the model formula $\dot{\varphi}$ and accepting formula $\gamma$ are unchanged compared to the $\DICE$ compilation.
	However, since nondeterministic behavior depends on the order of flip operations, an \textit{unordered} set of weights is no longer sufficient.
	We thus compile an annotated trace of flips $t$, which both tracks the order of the flips in $e$ and stores their behavior.
	To do so, $t$ is a list of pairs $f_i\colon a_i$, where $f_i$ is the Boolean variable which represents a flip and the label $a_i$ denotes the flip's behavior:
	We use $a_i = n$ if $f_i$ belongs to a nondeterministic flip, and $a_i = p$ if $f_i$ represents a probabilistic flip $\FLIP{p}$.
	
	The rules for \NDICE compilation of expressions can be found in \Cref{nDice Compilation}, and they are largely identical to the rules for \DICE~\cite{holtzen2020scaling} as seen in \Cref{fig:dice_Comp}.
	However, there are two notable changes:
	For one, there is an additional rule \texttt{nC-NFlip} to compile $\NFLIP$ statements. 
	This rule behaves fully analogous to the rule for probabilistic flips, but the boolean variable $f$ is now annotated with an $n$ to mark that it corresponds to a nondeterministic flip.
	Secondly, all rules now consider a trace of literals instead of a set of weights, which has little effect on most rules.
	However, when compiling a let-statement with the rule \texttt{nC-Let}, the trace of flips has to concatenate the trace for $e_1$ \textit{before} the trace for $e_2$ to capture the order of flip-operations.
	As for if-statements, the traces for the branches are combined by simply concatenating them, but any interleaving of the traces would be sound as long as the order of the individual sub-traces is preserved.

%
%
%
%
\begin{figure}
	\begin{center}
		\begin{adjustbox}{max width=\textwidth}
			\renewcommand{\arraystretch}{3.2}
			\begin{tabular}{c c}
				
				\inference[\ifthenelse{\boolean{showRuleNames}}{\texttt{(nC-Val)}}{}]
				{}{  \Gamma \vdash v \colon \tau \leadsto (v, \TRUE, [])}
				&
				
				\inference[\ifthenelse{\boolean{showRuleNames}}{\texttt{(nC-Ident)}}{}]
				{\Gamma(x) = \tau}{  \Gamma \vdash x \colon \tau \leadsto (F_\tau(x), \TRUE, [])}
				\\
				
				\inference[\ifthenelse{\boolean{showRuleNames}}{\texttt{(nC-Flip)}}{}]
				{\text{variable $f$ fresh}}{  \Gamma \vdash \FLIP{\theta} \colon \BOOL \leadsto (f, \TRUE, [f\colon \theta])}
				&
				
				\inference[\ifthenelse{\boolean{showRuleNames}}{\texttt{(nC-NFlip)}}{}]
				{\text{variable $f$ fresh}}{  \Gamma \vdash \NFLIP \colon \BOOL \leadsto (f, \TRUE, [f\colon n])}
				\\
				
				\inference[\ifthenelse{\boolean{showRuleNames}}{\texttt{(nC-Tup)}}{}]
				{\Gamma(x_1) = \tau_1 \qquad \Gamma(x_2) = \tau_2}{  \Gamma \vdash (x_1, x_2) \colon \tau_1 \times \tau_2 \leadsto ((F_{\tau_1}(x_1), F_{\tau_2}(x_2)) , \TRUE, [])}
				&
				
				\inference[\ifthenelse{\boolean{showRuleNames}}{\texttt{(nC-Obs)}}{}]
				{\Gamma \vdash \AEXP \colon \BOOL \leadsto (\varphi, \TRUE, [])}
				{  \Gamma \vdash \OBSERVE{\AEXP} \colon \BOOL \leadsto (\TRUE, \varphi, [])}
				\\
				
				\inference[\ifthenelse{\boolean{showRuleNames}}{\texttt{(nC-Fst)}}{}]
				{\Gamma(x) = \tau_1 \times \tau_2}
				{ \Gamma \vdash \FST x \colon \tau_1 \leadsto (F_{\tau_1}(x_l), \TRUE, [])}
				&
				
				\inference[\ifthenelse{\boolean{showRuleNames}}{\texttt{(nC-Snd)}}{}]
				{\Gamma(x) = \tau_1 \times \tau_2}
				{ \Gamma \vdash \SND x \colon \tau_1 \leadsto (F_{\tau_2}(x_r), \TRUE, [])}
				\\
				
				\multicolumn{2}{c}{
					\inference[\ifthenelse{\boolean{showRuleNames}}{\texttt{(nC-ITE)}}{}]
					{\Gamma \vdash \AEXP \colon \BOOL \leadsto (\varphi_g, \TRUE, []) 
						\qquad \Gamma \vdash e_T \colon \tau \leadsto (\dot{\varphi}_T, \gamma_T, t_T) 
						\qquad \Gamma \vdash e_F \colon \tau \leadsto (\dot{\varphi}_F, \gamma_F, t_F)}
					{  \Gamma \vdash \ITE{\AEXP}{e_T}{e_F} \colon \tau \leadsto \left( 
						(\varphi_g \overset{\tau}{\land} \dot{\varphi}_T) \overset{\tau}{\lor} (\overline{\varphi}_g \overset{\tau}{\land} \dot{\varphi}_F), \;
						(\varphi_g \land \gamma_T) \lor (\overline{\varphi}_g \land \gamma_F) ,\;
						t_T \,@\, t_F \right)}
				}\\
				
				\multicolumn{2}{c}{
					\inference[\ifthenelse{\boolean{showRuleNames}}{\texttt{(nC-Let)}}{}]
					{\Gamma \vdash e_1 \colon \tau_1 \leadsto (\dot{\varphi}_1, \gamma_1, t_1) \qquad 
						\Gamma \cup \{x \mapsto \tau_1 \} \vdash e_2 \colon \tau_2 \leadsto (\dot{\varphi}_2, \gamma_2, t_2)}
					{  \Gamma \vdash\LET{x = e_1}{e_2} \colon \tau_2 \leadsto \left(
						\dot{\varphi}_2[x\overset{\tau_1}{\mapsto} \dot{\varphi}_1], \;
						\gamma_1 \land \gamma_2[x\overset{\tau_1}{\mapsto} \dot{\varphi}_1],\;
						t_1 \,@\, t_2 \right)}
				}
				
			\end{tabular}
		\end{adjustbox}
	\end{center}
	\caption{\NDICE Compilation rules for expressions. These assume, without loss of generality but for simplicity, that fst, snd, and tuple construction are only ever performed with identifiers as arguments.}
	\label{nDice Compilation} 
\end{figure}%

\begin{example}
	As an example, consider the following expression and its compilation judgment:
	\begin{align*}
		\LET{x = \FLIP{\tfrac{2}{3}}}{\LET{y = \NFLIP}{\LET{z = \OBSERVE{x \lor y}}{(x \land y,\, y)}}}  \tag{ExComp}\label{ExComp} \\
		\{\}\vdash \text{\ref{ExComp}} \colon \BOOL \times \BOOL \leadsto \left((f_1 \land f_2,\, f_2),\; f_1 \lor f_2,\; \left[f_1\colon {\tfrac{2}{3}},\, f_2\colon n \right]\right)
	\end{align*}
	In the compilation result, $f_1$ represents the first flip, which is probabilistic, and $f_2$ represents the second flip, which is nondeterministic.
	This behavior (along with the order of the flips) is stored in the trace $\left[ f_1 \colon {\tfrac{2}{3}}, \; f_2\colon n\right]$.
	The model formula $(f_1 \land f_2,\, f_2)$ represents the output of the compiled expression, and each component of the model formula describes the behavior of the corresponding component in the output of \ref{ExComp}.
	Finally, the accepting formula $f_1 \lor f_2$ encodes the only observe statement in \ref{ExComp}, hence if $f_1 \lor f_2$ is satisfied, then all observations in \ref{ExComp} succeed.
	\hfill$\triangleleft$
\end{example}

\paragraph{Compilation of Functions and Programs}
We are not discussing the compilation rules for \NDICE programs at this point, since they are fully identical to the rules from \DICE~\cite{holtzen2020scaling}.
Furthermore, programs are still compiled to triples of the form $(\dot{\varphi}, \gamma, t)$, thus all subsequent inference steps are identical between the inference process for expressions and programs.
For completeness, rules for functions and programs are given in \ext{\Cref{app:function_comp}}{the extended version \citeExtended}.

\subsection{Algebraic Decision Diagrams}\label{sec:ndice_add}
This section presents how the previously compiled Boolean formulas can be used to build a decision diagram which compactly represents the compiled program.
Thereafter, we will enrich this decision diagram to an MDP on which we then do inference.

Consider a compilation judgment $\{\} \vdash e \colon \tau \leadsto (\dot{\varphi}, \gamma, t)$ for a closed \NDICE expression $e$. 
Our aim is now to use the compilation result $(\dot{\varphi}, \gamma, t)$ to define a function $(\dot{\varphi} \,|\, \gamma)_t\colon \BOOL^{len(t)}\rightarrow \llbracket\tau \rrbracket \, \cup \{R\}$, which mimics the behavior of $e$ for any given outcome of the flip operations in $e$. 
Here, the notation $\llbracket\tau \rrbracket$ refers to the set of all values of type $\tau$.

As a first observation, notice that if we assign either $\TRUE$ or $\FALSE$ to each flip operation in $e$,  then $e$ evaluates to a single value $v\in \llbracket\tau \rrbracket$ (assuming all observations in $e$ would succeed).
Similarly, we can assign $\TRUE$ or $\FALSE$ to each variable in $\dot{\varphi}$ to evaluate $\dot{\varphi}$ to a value $v\in \llbracket\tau \rrbracket$.  
Notably, for closed expressions each variable in $\dot{\varphi}$ and $\gamma$ corresponds to a flip statement in $e$, and hence each assignment to flips in $e$ defines an assignment to the variables in  $\dot{\varphi}$ and $\gamma$ and vice versa.
Now, given any assignment to flip operations in $e$, we thus want the function $(\dot{\varphi} \,|\, \gamma)_t$ to return the same value $v\in \llbracket \tau \rrbracket$ as $e$ if all observations in $e$ would succeed, and otherwise $(\dot{\varphi} \,|\, \gamma)_t$ should return $R$ to denote a "rejected" run.

Second of all, the function $(\dot{\varphi} \,|\, \gamma)_t$ takes a vector of Boolean values $(b_1,\dots,b_k) \colon \BOOL^k$ with $k=len(t)$ as input.
Thus, we need to define how $\gamma$ and $\dot{\varphi}$ are evaluated with a vector $(b_1,\dots,b_k)$ before we define the function $(\dot{\varphi} \,|\, \gamma)_t$.
In particular, we need to decide which variable in $\gamma$ and $\dot{\varphi}$ is substituted with which Boolean value $b_i$.
To this end, we use the trace of flips $t$ and substitute the $i$-th flip $f_i$ from $t$ with $b_i$, written as $\gamma \statesubst{f_i}{b_i}$.
The Boolean formulas resulting from these substitutions are ground, and can be evaluated to a Boolean value.
Thus, the Boolean formula $\gamma$ represents a function of type $\BOOL^k\rightarrow \BOOL$, and the tuple of Boolean formulas $\dot{\varphi}$ can be evaluated recursively as a function of type $\BOOL^k\rightarrow \tau$:
\begin{align*}
\gamma (b_1,\dots,b_k) &\coloneqq \gamma\statesubst{f_1}{b_1}\dots\statesubst{f_k}{b_k}\\
 \left(\dot{\varphi}_1,\; \dot{\varphi}_2\right)(b_1,\dots,b_k) &\coloneqq \left(\dot{\varphi}_1 (b_1,\dots,b_k), \;\dot{\varphi}_2(b_1,\dots,b_k)\right)
\end{align*}
We can now combine the two formulas into a function $(\dot{\varphi} \,|\, \gamma)_t \colon \BOOL^k\rightarrow \tau\, \cup \{R\}$ by mimicking the behavior of \NDICE expressions:
If $\gamma$ returns \TRUE, then all observations succeeded and we return the output of $\dot{\varphi}$.
If $\gamma$ returns \FALSE, then an observation failed and the run will be rejected, so we return the special value $R$.
Formally, we define this as:
$$ (\dot{\varphi} \,|\, \gamma)_t (b_1,\dots,b_k)\coloneqq \begin{cases}
	\dot{\varphi} (b_1,\dots,b_k), & \text{if } \gamma(b_1,\dots,b_k) = \TRUE\\
	R, & \text{if } \gamma(b_1,\dots,b_k) = \FALSE
\end{cases}$$%
Once this \enquote{guarding} operator $(\cdot \,|\, \cdot)$ is defined, the ADD for $(\dot{\varphi} \,|\, \gamma)_t$ can be constructed from decision diagrams for $\dot{\varphi}$ and $\gamma$ using generic algorithms \cite{fujita1997multi}.
Thus, we can represent $(\dot{\varphi} \,|\, \gamma)_t$ as an ADD, and if an ADD 
uses the variable order defined by $t$ and represents $(\dot{\varphi} \,|\, \gamma)_t$, then we call it an ADD for the compiled formulas $(\dot{\varphi}, \gamma, t)$.
If an ADD for $(\dot{\varphi}, \gamma, t)$ is a reduced ADD, then we call it the reduced ADD for $(\dot{\varphi}, \gamma, t)$.

\begin{remark}[Permissible Variable Orderings]
	As mentioned in \Cref{sec:prelims_ADD}, the variable ordering which is used during construction has a stark effect on the size of the ADD.
	This appears to call for experiments with different variable orderings to speed up inference, as was done for \DICE \cite{cheng2021flip}.
	However, changing the variable order amounts to changing the order of flips, which can alter the semantics of a \NDICE expression.
	In fact, variable orders have to adhere to the natural program order to be sound for inference, and the trace $t$ provides one such order. 
	\hfill $\triangleleft$
\end{remark}

\subsection{Lifting ADDs to Markov Decision Processes}\label{sec:MDP_inf}

Given a \NDICE expression $e$ which compiles to a triple $(\dot{\varphi}, \gamma, t)$, the last part of the inference process is to lift the previously constructed ADD for $(\dot{\varphi}, \gamma, t)$ to an MDP by re-introducing the behavior of the individual flip operations into the ADD.
To do so, we consider the variable stored in an inner ADD node, and determine the behavior of the corresponding flip operation by retrieving the annotated variable from the trace $t$.
The ADD node will then be modeled by an MDP state, which behaves either probabilistic or nondeterministic depending on the nature of the flip operation.
Afterwards, the reachability probabilities in this lifted MDP will align with the probabilities that the compiled expression $e$ returns a given value.

We define the \textit{unnormalized} MDP $\mathcal{M}^u = (S, s_{init}, \mathit{Act}, P, AP, L)$ lifted from an ADD $\mathcal{A} = (V, T, r, E)$ for $(\dot{\varphi}, \gamma, t)$ as follows: 
We define $S = V \cup T$ as the set of nodes of the ADD, $s_{init}$ as the root $r$ of the ADD, $\mathit{Act} = \{l, r, d\}$ and we define the transition relation as follows:
\begin{itemize}
	\item If $s \in T$ was lifted from a terminal node, then we only have the transition $s\xrightarrow{d, 1} s$
	\item If $s\in V$ was lifted from an inner node with probabilistic variable $f$ with $(f\colon \theta) \in t$, then we have the transitions $s\xrightarrow{d, \theta} \mathit{then}(s)$ and $s\xrightarrow{d, 1 - \theta} \mathit{else}(s)$
	\item If $s\in V$ was lifted from an inner node with nondeterministic variable $f$ with $(f\colon n) \in t$, then we have the transitions $s\xrightarrow{l, 1} \mathit{then}(s)$ and $s\xrightarrow{r, 1} \mathit{else}(s)$
\end{itemize}
Above, we re-use the notations $\mathit{then}(s)$ and $\mathit{else}(s)$ to refer to the MDP state lifted from the respective ADD successor node.
We define $AP = \llbracket\tau \rrbracket \cup \{A, R\}$, and we define the labeling function $L$ as follows:
$$L(s) = \begin{cases}
	\{v, A\} &, \text{$s$ lifted from a terminal node with value $\mathit{val}(s) = v \neq R$}\\
	\{R\} &, \text{$s$ lifted from a terminal node with value $\mathit{val}(s) = R$}\\
	\emptyset & \text{otherwise}
\end{cases}$$
Here, states which were lifted from terminal nodes for values $v\in \llbracket\tau \rrbracket$ are labeled with their corresponding output in the ADD and the proposition $A$ to model an accepted run.
States lifted from terminal nodes containing the value $R$ are labeled with the proposition $R$.

Now consider some \NDICE expression $e$ with $\{\} \vdash e \colon \tau \leadsto (\dot{\varphi}, \gamma, t)$, as well as the unnormalized MDP $\mathcal{M}^u_e$ lifted from an ADD for $(\dot{\varphi}, \gamma, t)$.
One can prove that the \textit{unnormalized} semantics of $e$ are closely related to reachability probabilities in $\mathcal{M}^u_e$.
However, the central inference question for PPLs is to infer the behavior of the program under the condition that all observations succeeded, which corresponds to the \textit{normalized} semantics of $e$.
Hence, we are not interested in the likelihood to just reach a particular value $v$, but we are instead interested in the likelihood to reach a particular $v$ under the condition that all observations pass, as modeled by reaching any state labeled with $A$.

We are thus interested in the conditional reachability probability $Pr^{max}_{\mathcal{M}^u_e} (\lozenge v \;|\; \lozenge A)$.
As discussed at the end of \Cref{sec:prelims_MDP}, there are two approaches to determine these conditional probabilities:
The reduction to ordinary reachability properties by \citet{baier2014computing}, and the recent, bisection-based approach implemented by the Storm Model Checker \citeBisection.
We will be using the bisection-based algorithm, as it is highly suited for our specific use-case.
Namely, the algorithm is very efficient on acyclic MDPs, as our MDPs naturally are.
Furthermore, the runtime of the algorithm is agnostic to the probability of an accepting run, hence the runtime of the inference process will be unaffected by conditioning with rare events.
This is a desirable trait, since inference engines commonly struggle to infer probabilities which are conditioned by low-probability observations.

 \subsection{Soundness}
 It is now time to formally consider the connection between the semantics of a $\NDICE$ program $p$ and the unnormalized MDP lifted from its compiled Boolean formulas.
 In practice it is infeasible to infer all distributions in $\llbracket p \rrbracket_D$, and thus we provide an upper bound for their behavior instead.
 Formally, we show that the maximal probability to terminate with a value $v$ among all distributions in $\llbracket p \rrbracket_D$ is equal to the maximal conditional probability to reach a state labeled with $v$ in the unnormalized MDP lifted from the compiled formulas of $e$:

\begin{theorem}[Inference Soundness]\label{ProgramSoundness}
	Let $p$ be a \NDICE program  
	and let $\{\}, \{\} \vdash p \colon \tau \leadsto (\dot{\varphi}, \gamma, t)$. 
	Then the reduced ADD for $(\dot{\varphi}, \gamma, t)$ can be lifted to an unnormalized MDP $\mathcal{M}_p$ so that:
	$$ Pr^{max}_{M_p} (\lozenge v \,|\, \lozenge A) = \max \left\{d(v) \,|\, d \in \llbracket p\rrbracket^{\{\}}_D\right\}$$
\end{theorem}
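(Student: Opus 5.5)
The plan is to route the proof through an intermediate object that sits between the program and the MDP: a \emph{decision-tree semantics} over the trace $t$. First, function calls are inlined. The program is non-recursive and the \NDICE compilation of programs coincides with \DICE's, so this reduces the statement to a closed expression $e$ with $\{\} \vdash e \colon \tau \leadsto (\dot{\varphi}, \gamma, t)$. For such $e$, define a \emph{resolution} $\sigma$: it assigns to every nondeterministic flip $f_i$ in $t$, and every prefix assignment $(b_1,\dots,b_{i-1})$ to earlier flips in $t$, a probability $\sigma_i(b_1,\dots,b_{i-1}) \in [0,1]$. Each $\sigma$ induces a distribution $\mu_\sigma$ on $\BOOL^{len(t)}$, where probabilistic flips use their annotation $\theta$ and nondeterministic flips use $\sigma$.

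The first key lemma, proved by structural induction on $e$ along the rules of \Cref{nDice Compilation}, is the following identity:
\[ \llbracket e \rrbracket = \bigl\{\, v \mapsto \mu_\sigma\bigl(\{\vec b \mid (\dot{\varphi}\,|\,\gamma)_t(\vec b) = v\}\bigr) \;\bigm|\; \sigma \text{ a resolution} \,\bigr\}. \]
The base cases \texttt{nC-Flip}, \texttt{nC-NFlip}, \texttt{nC-Obs} and the value/tuple cases are direct; \texttt{nC-NFlip} in particular gives exactly all $\theta \in [0,1]$. The \texttt{nC-ITE} case is easy since the guard is a value in closed A-normal form. The main work is \texttt{nC-Let}, where the trace is $t_1 \,@\, t_2$. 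A resolution of the concatenated trace splits into a resolution of $t_1$ and, for each full assignment to $t_1$, a resolution of $t_2$. Grouping assignments to $t_1$ by the value $v = \dot{\varphi}_1(\vec b_1)$ matches the existential structure of the let-semantics: one $d_1$, then a $d_v$ per value.

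One direction is immediate. The other needs a convexity argument: different $t_1$-assignments yielding the same $v$ may pick different resolutions of $t_2$, and their weighted mixture must again lie in $\llbracket e_2[x\mapsto v]\rrbracket$. So I will prove alongside the induction that every $\llbracket e \rrbracket$ is convex. This is again by induction, and it follows from the right-hand side above, since resolutions can be mixed by mixing at the first flip where the two resolutions branch. The substitution lemma $\dot{\varphi}_2[x\mapsto\dot{\varphi}_1](\vec b) = \dot{\varphi}_2^{x:=\dot{\varphi}_1(\vec b)}(\vec b)$ and the factorisation of $\gamma_1 \land \gamma_2[\dots]$ handle the observation mass. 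I expect the \texttt{nC-Let} case with this convexity/mixing step to be the main obstacle.

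Second, relate resolutions to schedulers of $\mathcal{M}_p$ lifted from the reduced ADD. The ADD uses the order of $t$ and represents $(\dot{\varphi}\,|\,\gamma)_t$, so paths from the root correspond to cylinder sets of assignments. Reduction only skips variables on which the function does not depend from that point on, and skipping a nondeterministic variable loses nothing. Every memoryless scheduler $\Theta$ induces a resolution: at a prefix, use $\Theta$ at the ADD node reached, if it tests that variable, and pick arbitrarily otherwise. Then $Pr^\Theta(\lozenge v) = \mu_\sigma(\ldots = v)$ and $Pr^\Theta(\lozenge A)$ equals the total accepted mass $d_D$. This gives $\le$ for the left side and the right side. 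Conversely, a resolution may depend on the full prefix, while a memoryless $\Theta$ only sees the node. Here I will use that the MDP is acyclic and that the conditional objective $\frac{x}{y}$ is maximised by a deterministic memoryless scheduler, invoking \citet{baier2014computing}/\citetBisection. Concretely, by convexity of the program semantics and a quasi-linearity argument on the fractional objective, the maximum of $d(v)/d_D$ over $\llbracket e \rrbracket$ is attained at a deterministic resolution. Deterministic resolutions can be improved node by node to depend only on the ADD node reached, since the subfunction at a node, and thus all future outcomes, is the same for all prefixes reaching it. Each node switch keeps the ratio from decreasing by the standard exchange argument for ratio objectives. Combining both inequalities with the definitions of $\llbracket\cdot\rrbracket_D$ and of the conditional probability, with both $0/0$ conventions agreeing, yields the theorem.
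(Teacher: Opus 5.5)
Your proof is correct, but it takes a genuinely different route from the paper.

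The paper never introduces a strategy semantics over the flip trace. It proves, by structural induction, that one particular \emph{non-reduced} ADD lifts to an MDP whose memoryless schedulers correspond exactly to $\llbracket e\rrbracket$. In the let-case, every $v$-terminal of the diagram for $e_1$ is redirected to the root of a single shared diagram $M_v$ for $e_2[x\mapsto v]$. A memoryless scheduler therefore automatically makes ``one choice of $d_v$ per value'', so no convexity argument is needed. The ratio objective is then removed with a restart MDP, in which states labelled $R$ jump back to the root. There, the conditional probability of a memoryless scheduler equals a plain reachability probability. Memoryless optimality for reachability and a Bellman-equation argument then show that each ADD reduction step preserves the maximum, which eventually yields the reduced ADD. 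Functions are handled via compatible function tables rather than by inlining.

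You instead characterise $\llbracket e\rrbracket$ as the outputs of behavioural strategies on the trace. Because such strategies may condition on more than the value of $x$, this forces your convexity lemma. You then compare these strategies directly with memoryless schedulers of the reduced ADD's MDP, using an exchange (mediant) argument on $d(v)/d_D$ in place of the restart MDP and the reduction lemma.

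What each route buys:
\begin{itemize}
\item Yours works on the reduced diagram from the start. It explains at once why merging nodes across contexts costs nothing, for example different values of $x$ with the same residual function. It also shows that even history-dependent choices cannot beat memoryless ones.
\item The paper's keeps the induction simpler and delegates the optimisation step to standard MDP facts.
\end{itemize}

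Three points need care when you write it up.

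\textbf{Open expressions.} The key identity must be stated for open expressions under value substitutions, since the let-case applies it to $e_2[x\mapsto v]$.

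\textbf{Convexity.} ``Mixing at the first flip where the resolutions branch'' does not work literally when both resolutions randomise there, because the outcome no longer tells you which resolution to follow. Either use posterior-weighted (Kuhn-style) mixing, or prove convexity of $\llbracket e\rrbracket$ directly. In the let-case, mix $d_1$ with $d_1'$, and mix $d_v$ with $d'_v$ using weights $p\,d_1(v)$ and $(1-p)\,d'_1(v)$.

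\textbf{Memoryless optimality.} Memoryless optimality of conditional objectives should not be cited as a black box from \citet{baier2014computing}: in general memory can be needed, as the paper itself notes. Your exchange argument is what actually carries this step. It is valid here because the targets are absorbing terminals, every $v$-state is also labelled $A$, and all prefixes reaching an ADD node share the same residual function. The continuation copied to all such prefixes must be the one that is best for the \emph{global} ratio, not the locally best one (cf.\ \Cref{sec:subOpt}).
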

\begin{proof}
	 \ext
	 {See \Cref{ProgramAppendix}, where the theorem is restated as \Cref{app:ProgramSoundness}.}
	 {We refer to the extended version \citeExtended.}
\end{proof}
\noindent%
The theorem above refers to the \textit{reduced} ADD for the formulas, meaning we may use the smallest ADD.
Still, the construction of this ADD is PSPACE-hard \cite{holtzen2020scaling}, and the final ADD itself may be exponentially large in the number of flip operations in the expression.
Additionally, we then still need to lift the ADD to an MDP and determine the necessary reachability probabilities.
Lifting the ADD to an MDP can be done in a single pass, while determining conditional reachability probabilities takes polynomial time relative to the MDP's size \cite{puterman2014markov, baier2014computing}.

On paper, this makes \NDICE inference very expensive.
However, \citet{holtzen2020scaling} show how the \DICE compilation naturally exploits structural properties like conditional independence, determinism or context-specific independence to build compact BDDs quickly.
\NDICE reuses the compilation of \DICE, and thus \NDICE exploits the same program structures for compact ADDs.
While we refer to \citet{holtzen2020scaling} for details on structures exploited by \DICE, we still want to briefly review one central source of efficiency for \DICE and \NDICE: Conditional independence.

Intuitively, variables $x$ and $y$ are conditionally independent given $z$, if the behavior of $x$ does not depend on $y$ if a value for $z$ has been fixed.
Programs contain conditional independence naturally since it arises from constructs like \texttt{let} statements or function calls, as seen in the introductory program from \Cref{fig:plane}:
Here, all calls to \texttt{step} are conditionally independent of each other given their return values.
After all, subsequent function calls and the behavior of the program only depend on the functions output.
Hence, all intermediate values, like e.g. the flip operations in the \texttt{move} function, are conditionally independent of each other given the output of \texttt{step}.
This conditional independence prevents an exponential blow-up during the ADD construction, as each function call only adds another conditionally independent \enquote{layer} of nodes.


\newboolean{useComma}
\setboolean{useComma}{true}

\section{Implementation and Empirical Evaluation}\label{sec:implementation}
In this section, we present a prototypical implementation of our inference engine and evaluate its efficiency.
The implementation is written in Ocaml and largely based on the existing \DICE library~\cite{holtzen2020scaling}, thus inheriting many of its features:
For instance, \NDICE inherits \DICE's relaxation of the A-normal form as well as its support for standard Boolean connectives and bounded numbers.
As the semantics of nondeterministic expressions depend on the order of evaluation, we define tuple elements and arguments of function calls to be evaluated from left to right.
By now, \DICE also supports bounded lists as well as an improved representation of numbers \cite{cao2023scaling}. 
We further provide a \texttt{choose}-statement to model the nondeterministic choice of a number from an interval.
Internally, all statements which choose or sample numbers are syntactic sugar for a sequence of flip operations and nested if-statements.
\DICE also switched the back-end for BDD operations from CUDD~\cite{somenzi2009cudd} to RSDD~\cite{RSDD}.
Since RSDD does not support ADDs, we implemented the ADD construction without a specialized library.
We further directly implemented Storm's bisection based algorithm \citeBisection, and use it to determine conditional reachability probabilities on MDPs to an accuracy of $10^{-6}$. 
We also use two optimizations\ext{}{, whose soundness is shown in the extended version \citeExtended}:

 \paragraph{Reduction to Boolean Programs}
 Consider some \NDICE expression $e$ of type $\BOOL \times \BOOL$, where we would like to infer the maximum probability to return $(\TRUE, \TRUE)$.
 Here, the standard construction creates an ADD with five output nodes: One for each value and one for the special value $R$.
 However, this construction creates unnecessary states in the MDP, as even though we care about the probability to reach $(\TRUE, \TRUE)$, additional states are included to distinguish whether the result will be $(\TRUE, \FALSE)$ or $(\FALSE, \TRUE)$.
 We can remove these states by instead determining the maximum probability that the expression $\LET{x = e}{x \leftrightarrow (\TRUE, \TRUE)}$ returns $\TRUE$.
 \ext{The soundness of this transformation is proven in \Cref{app:boolean_reduction}.}{}

 \paragraph{Probabilistic State Compression}
\begin{wrapfigure}[8]{r}{0.45\textwidth}
	\vspace*{-1\intextsep}%
	\begin{center}
		\begin{tikzpicture}[yscale=1, xscale=0.9]
			\node (i) at (0,0)   [draw, circle] {$r$};
			\node (s) at (-0.9,-1)   [draw, circle] {$s$};
			\node (dots) at (1,-1)   {$\dots\;\;$};
			\node (t) at (-1.8,-2.25)   [draw, circle] {$t$};
			\node (u) at (-0,-2.25)   [draw, circle] {$u$};
			
			\node (ro) at (0, -0.65)  [circle,fill,inner sep=1.5pt] {};
			\path[->]          (i)  edge   [bend right=0]   node[right, pos=0.5] {$\alpha$} (ro);
			\path[->]          (ro)  edge   [bend right=15]   node[above, pos=0.75] {$q$} (s);
			\path[->]          (ro)  edge   [bend right=-15]   node[above, pos=0.5] {} (dots);

			\node (so) at (-0.9, -1.7)  [circle,fill,inner sep=1.5pt] {};
			\path[->]          (s)  edge   [bend right=0]   node[left, pos=0.5] {$d$} (so);
			\path[->]          (so)  edge   [bend right=20]   node[left, pos=0.4] {$p\;$} (t);
			\path[->]          (so)  edge   [bend right=-20]   node[right, pos=0.4] {$\,1 - p$} (u);
			
			

			\def\offset{3};
			
			\node (i) at (0 + \offset,0)   [draw, circle] {$r$};
			\node (dots) at (0.75 + \offset,-1)   {$\dots$};
			\node (t) at (-0.95 + \offset,-2.25)   [draw, circle] {$t$};
			\node (u) at (0.0 + \offset,-2.25)   [draw, circle] {$u$};
			
			\node (io) at (0 + \offset, -0.75)  [circle,fill,inner sep=1.5pt] {};
			\path[->]          (i)  edge   [bend right=0]   node[left, pos=0.5] {$\alpha$} (io);
			\path[->]          (io)  edge   [bend right=30]   node[left, pos=0.5] {$q \cdot p$} (t);
			\path[->]          (io)  edge   [bend right=0]   node[right, pos=0.65] {$q \cdot (1-p)$} (u);
			\path[->]          (io)  edge   [bend right=-10]   node[above, pos=0.5] {} (dots);
			
		\end{tikzpicture}
	\end{center}
\end{wrapfigure}
Intuitively, this removes states with a single enabled action by forwarding all incoming edges:
In the example on the right, state $s$ has a single enabled action $d$. 
Whenever we transit from $r$ via $\alpha$ to $s$ with probability $q$, we must subsequently transit via $d$ to either $t$ or $u$ with probabilities $p$  and $1-p$. 
Thus, we can remove $s$ from the MDP entirely and let $r$ transition via $\alpha$ directly to $t$ or $u$ with probabilities $q \cdot p$ and $q \cdot (1-q)$.
State compression can drastically reduce the size of MDPs in our use-case, as every state which represents a probabilistic flip can be removed.
\ext{We describe state compression formally in \Cref{app:compression}.}{}

\subsection{Empirical Performance Evaluation}\label{sec:ndet_empirical}
We will now perform an in-depth empirical evaluation aimed at investigating the overall efficiency and scalability of our inference method. 
To do so, we will consider a wide range of benchmarks of various sizes and with varying degrees of nondeterminism.
Regarding the evaluation itself, we will focus on answering the following questions:
\begin{itemize}
	\item[\textbf{Q1}:] \textbf{Scaling Behavior of the Compilation.}
	How quickly can the ADDs for \NDICE be compiled, and how does the needed time compare to the original compilation process of \DICE?
	\item[\textbf{Q2}:] \textbf{Reduction in the MDP State Space.}
	How much smaller are the MDPs constructed by \NDICE compared to a direct encoding of the same problem as an MDP? 
	\item[\textbf{Q3}:] \textbf{Inference Efficiency.} 
	How efficient is \NDICE inference compared to directly applying a probabilistic model checker?
	\item[\textbf{Q4}:] \textbf{Influence of Nondeterminism.}
	How does the amount of nondeterminism in a probabilistic program affect the runtime of \NDICE?
\end{itemize}
\noindent
For \textbf{Q1}, we will compare \NDICE with \DICE on purely probabilistic benchmarks.
To answer \textbf{Q2} and \textbf{Q3}, we encode each probabilistic program directly as an MDP, and we use Storm~\cite{hensel2022probabilistic} to infer the probabilities of the possible program outputs.
Storm offers multiple \enquote{engines} to internally represent MDPs, as well as different algorithms to perform model checking for conditional reachability probabilities.
The efficiency of model checking can heavily depend on the used configuration, and we made a best-effort attempt to identify and use the best combination.
As for benchmarks, PPLs typically do not consider nondeterminism -- existing benchmarks hence do not feature nondeterminism.
Thus, we introduce the following benchmark classes:

\begin{table}
	\caption{Performance of \NDICE on the benchmarks.
		The \DICE, \NDICE and Storm columns report the total runtime of the respective algorithm, \NDICE is run with state compression and split into \textit{compilation} and \textit{model checking}.
		$\NDICE_u$ reports the total inference time without state compression.
		All times are in ms, with a timeout (TO) after 20 minutes.
		\DICE runtimes are only reported for purely probabilistic benchmarks.
		The size of the MDP used by Storm is reported as $\text{MDP}_S$, whereas ADD Size and $\text{MDP}_n$ Size report the total size of the constructed ADDs and compressed MDPs for \NDICE.}
		
	\label{tab:Ndet}
	\begin{adjustbox}{max width=\textwidth}
	\begin{tabu}{c | c | ccc | c | c | c c c}
		\toprule
		Benchmark & \DICE & $\textit{comp.}$ & $\textit{checking}$ & \NDICE & $\NDICE_u$ & Storm & $\text{MDP}_S$ Size & ADD Size & $\text{MDP}_n$ Size \\
		\midrule
		Runway-7 & - & $8$ & $\approx 0$ & {$8$} & {$8$} & $18$ & $1.2 \times 10^3$ & $284$ & $40$ \\
		Runway-15 & - & $93$ & {$1 $} & {$94 $} & {$94 $} & {$131$} & $4.3 \times 10^4$ & $1.4\times 10^3 $ & $274$ \\
		Runway-30 & - & $531$ & {$7 $} & {$538 $} & {$538$ } & {$1\,620$} & $3.2 \times 10^5$ & $1.3\times 10^4 $ & $1.0\times 10^3$ \\
		Runway-40 & - & $999$ & {$11 $} & {$1\,010$} & {$1\,011 $} & {$5\,212$} & $7.5 \times 10^5$ & $2.4\times 10^4 $ & $1.8\times 10^3$ \\
		Runway-45 & - & $1\,514$ & {$15$} & {$1\,529$} & {$1\,530 $} & {$8\,673$} & $1.0 \times 10^6$ & $3.0 \times 10^4 $ & $2.2\times 10^3$ \\
				
		\midrule
		CouponProb-10 & 
				$721$ &
		$1\,379$ & {$\approx 0$} & {$1\,379$} & 
		{$1\,379$} & $5\,894$ &
		 $1.2 \times 10^6$ & $3.4\times10^5$ & $8$ \\
		CouponProb-11 & 
				$1\,958$ & 
		$3\,904$ & {$\approx 0$} & {$3\,904$} & 
		{$3\,904$} & $18\,839$ &
		 $3.4 \times 10^6$ & $8.3\times10^5$ & $8$ \\
		CouponProb-12 & 
				$5\,798$ & 
		$10\,484$ & {$\approx 0$} & {$10\,484$} & 
		{$10\,484$} & $56\,434$ &
		 $8.7 \times 10^6$ & $2.0\times10^6$ & $8$ \\
		CouponProb-13 & 
				$20\,691$ & 
		$32\,153$ & {$\approx 0$} & {$32\,153$} & 
		{$32\,153$} & $171\,358$ &
		 $2.2 \times 10^7$ & $4.8\times10^6$ & $8$ \\
		CouponProb-14 & 
				$63\,418$ & 
		$96\,068$ & {$\approx 0$} & {$96\,068$} & 
		{$96\,068$} & $641\,614$ &
		 $5.5 \times 10^7$ & $1.1\times10^7$ & $8$ \\
		
		\midrule
		CouponNdet-6 & - & $105$ &  {$14$} & {$119$} & {$120$} & $162$ & $6.9 \times 10^4$ & $1.9\times10^4 $ & $818$ \\
		CouponNdet-7 & - & $276$ & {$41$} & {$317$} & {$317$} & $607$ & $2.1 \times 10^5$ & $5.6\times10^4 $ & $1.9\times10^3$ \\
		CouponNdet-8 & - & $731$ & {$96$} & {$827$} & {$824$} & $2\,208$ & $6.3 \times 10^5$ & $1.5\times10^5 $ & $5.2\times10^3$ \\
		CouponNdet-9 & - & $2\,264$ & {$389$} & {$2\,653$} & {$2\,662$} & $8\,110$ & $1.7 \times 10^6$ & $4.3\times10^5 $ & $1.4\times10^4$ \\
		CouponNdet-10 & - & $6\,105$ & {$1\,257$} & {$7\,362$} & {$7\,293$} & $25\,862$ & $4.8 \times 10^6$ & $1.0\times10^6 $ & $3.9\times10^4$ \\
		
		\midrule
		Network-2000 & - & $2\,218$ & {$24$} & {$2\,242$} & {$2\,242$} & $25$ & $1.2 \times 10^4$ & $2.4\times10^4$ & $8.0\times10^3$ \\
		Network-4000 & - & $9\,530$ & {$55$} & {$9\,585$} & {$9\,585$} & $39$ & $2.4 \times 10^4$ & $4.8\times10^4$ & $1.6\times10^4$ \\
		Network-6000 & - & $22\,270$ & {$84$} & {$22\,354$} & {$22\,358$} & $54$ & $3.6 \times 10^4$ & $7.2\times10^4$ & $2.4\times10^4$ \\
		Network-8000 & - & $40\,763$ & {$122$} & {$40\,885$} & {$40\,885$} & $68$ & $4.8 \times 10^4$ & $9.6\times10^4$ & $3.2\times10^4$ \\
		Network-10000 & - & $67\,081$ & {$153$} & {$67\,234$} & {$67\,243$} & $84$ & $6.0 \times 10^4$ & $1.2\times10^5$ & $4.0\times10^4$ \\
		
		\midrule
		Rabin-50 & - & $324$ & {$\approx 0$} & {$ 324$} & {$ 324 $} & $17$ & $5.2 \times 10^3 $ & $524 $ & $206 $ \\
		Rabin-100 & - & $1\,962 $ & {$1$} & {$ 1\,963 $} & {$ 1\,963 $} & $23$ & $2.0 \times 10^4$  & $1.0 \times 10^3 $ & $406 $ \\
		{Rabin-250} & - & $23\,355 $ & {$2$} & {$ 23\,357 $} & {$ 23\,357 $} & $82$ & $1.2 \times 10^5$ & $2.8 \times 10^3 $ & $1.0 \times 10^3$ \\
		{Rabin-500} & - & $233\,494 $ & {$6$} & {$233\,500$} & {$233\,501$} & $327$ & $5.0 \times 10^5$ & $5.8 \times 10^3 $ & $2.0 \times 10^3 $ \\
		
		\midrule 
		Survey & $1$ & $1$ & $\approx 0$ & $1$ & $1$ & $14$ & $238$ & $71$ & $14$ \\
		Hepar2 & $99$ & $309$ & $\approx 0$ & $309$ & $310$ & $23\,094$ & $1.8 \times 10^{19} $ & $1.8 \times 10^4$ & $8$ \\
		Insurance & $580$ & $2\,227$ & $\approx 0$ & $2\,227$ & $2\,232$ & $9\,460$ & $4.2 \times 10^{11}$ & $1.5 \times 10^5$ & $16$ \\
		Water & $9\,241$ & $9\,288$ & $\approx 0$ & $9\,288$  & $9\,347$   & $12\,545$ & $2.4 \times 10^{9} $ & $5.2 \times 10^4$ & $16$ \\
		Alarm & $160\,104$ & $161\,385$ & $\approx 0$ & $161\,385$ & $161\,385$ & $5\,335$ & $1.8 \times 10^{16}$ & $1.2 \times 10^5$ & $14$ \\
		
		\midrule 
		Survey-Ndet & - & $1$ & $\approx 0$ & $1$ & $1$ & $14$ & $238$ & $71$ & $26$ \\
		Hepar2-Ndet & - & $280$ & $14$ & $294$ & $295$ & $23\,909$ & $1.8 \times 10^{19} $  & $1.8 \times 10^4$ & $2.0 \times 10^3$ \\
		Insurance-Ndet & - & $2\,217$ & $\approx 0$ & $2\,217$ & $2\,220$ & $10\,345$ & $4.2 \times 10^{11}$ & $1.5 \times 10^5$ & $56$ \\
		Water-Ndet & - & $9\,296$ & $\approx 0$ & $9\,296$ & $9\,298$ & $13\,153$ & $2.4 \times 10^{9} $ & $5.2 \times 10^4$ & $56$ \\
		Alarm-Ndet & - & $161\,786$ & $47$ & $161\,633$ & $161\,636$ & $5\,219$ & $1.8 \times 10^{16}$ & $1.2 \times 10^5$ & $6.9 \times 10^3$ \\

		\midrule
		3sat-0 & $19\,511$ & $19\,535$  & {$\approx 0$} & {$19\,535$} & {$19\,535$} & TO & $1.2 \times10^{13}$ & $4.7\times10^4$ & $8$ \\
		3sat-25 & - & $19\,528$ & {$43$} & {$19\,571$} & {$19\,571$} & TO & $1.2 \times10^{13}$ & $4.7\times10^4$ & $1.2\times10^4$ \\
		3sat-50 & - & $19\,544$ & {$69$} & {$19\,613$} & {$19\,612$} & TO & {$1.2 \times10^{13}$} & $4.7\times10^4$ & $2.3\times10^4$ \\
		3sat-75 & - & $19\,514$ & {$72$} & {$19\,586$} & {$19\,584$} & TO & {$1.2 \times10^{13}$} & $4.7\times10^4$ & $3.6\times10^4$ \\
		3sat-100 & - & $19\,567$ & {$107$} & {$19\,674$} & {$19\,676$} & TO & {$1.2 \times10^{13}$} & $4.7\times10^4$ & $4.7\times10^4$ \\
		\bottomrule
	\end{tabu}
	\end{adjustbox}
\end{table}

\paragraph{Runway}
We consider the adapted vehicle tracking problem from \citet{junges2021runtime} as presented in \Cref{sec:overview}, but with more locations for the ground vehicle, more measurements, and a vehicle which may move up to two locations per time step.
In particular, the benchmark \textit{runway-n} refers to the scenario with $n$ possible locations for the ground vehicle and $n$ measurements of the vehicle position.
In this class of benchmarks, roughly $10\%$ of all flip operations are nondeterministic.

\paragraph{Coupon Collector}
We consider a variation of the Coupon Collector Problem inspired by \citet{jansen2016bounded}:
A collector aims to collect $N$ distinct coupons within $N$ rounds.
In every round, the collector may draw \emph{two} coupons from a bowl.
An observe statement enforces that the two coupons are distinct.
We consider two instances of this problem: One where the coupons are always drawn from a uniform distribution (\textit{CouponProb}), and one where we nondeterministically choose between three bowls with different distributions each round (\textit{CouponNdet}).
\textit{CouponProb} is purely probabilistic, whereas for \textit{CouponNdet} roughly $14\%$ of flip operations are nondeterministic. 

\paragraph{Network Reachability} 
We consider a network reachability problem adapted by \citet{holtzen2020scaling} from \citet{gehr2018bayonet}, and then further extended with nondeterminism by us.
Here, server~$S_1$ forwards an incoming packet to either $S_2$ or $S_3$, which in turn forward the packets to $S_4$, but $S_3$ drops packets with probability $0.1\%$.
For load balancing, $S_1$ chooses randomly between $S_2$ or $S_3$, and nondeterministically between three protocols.
This benchmark is scaled up by repeating this forwarding process, and roughly $25\%$ of all flips are nondeterministic.

\paragraph{Fair Exchange}
We consider a case study of the Prism Model Checker \cite{KNP11}, namely the probabilistic fair exchange protocol from \citet{rabin1983transaction}.
The protocol allows two parties to exchange commitments to a pre-defined contract with the aid of a trusted third party.
Nondeterminism is used to model a bad actor, and we infer the maximum probability that one party commits to the contract while the bad actor doesn't.
Here, over $95\%$ of flip operations are nondeterministic.

\paragraph{Bayesian Networks}
Similar to \DICE~\cite{holtzen2020scaling}, we evaluate \NDICE on some well-known Bayesian networks from the the Bayesian Network Repository\footnote{\url{https://www.bnlearn.com/bnrepository/}}, namely \textit{Survey}, \textit{Hepar2}, \textit{Insurance}, \textit{Water} and \textit{Alarm}.
We consider two instances for each network: A purely probabilistic version which directly encodes the Bayesian network as a \NDICE program, and a version were some parent-less nodes behave nondeterministically.
Here, nondeterminism can model behaviors which may be influenced by non-probabilistic factors.
For example, a  patient may lie about their alcoholism, and a network which models their diagnosis can reflect this uncertainty with nondeterminism.
The amount of nondeterminism varies heavily between networks, ranging from $5\%$ of flips being nondeterministic for \textit{Insurance-Ndet} to up to $30\%$ of flips being nondeterministic for \textit{Alarm-Ndet}.

\paragraph{3SAT} 
We consider a 3SAT problem based on a benchmark retrieved from \citet{hoos2000satlib}
with $40$ variables and $114$ clauses encoded as a \NDICE program.
This benchmark is not scaled by increasing the program size, but instead we increase the amount of nondeterministic variables. 
For example, \textit{3sat-25} is an instance where $25\%$ of the $40$ variables are assigned a value nondeterministically and the remaining variables randomly.
\medskip

The performance of \DICE, \NDICE and Storm on these benchmarks is presented in \Cref{tab:Ndet}.
The considered inference task was to determine the maximum probability for all outputs of the input program.
\NDICE is run with state compression and split into $\mathit{comp.}$ for the construction of the MDPs and \textit{checking} to determine conditional reachability probabilities.
We report the total runtime without state compression as $\NDICE_u$.
We also list the number of states in the MDP used by Storm as \enquote{$\text{MDP}_S$ Size}, as well as the total size of the ADDs and the total number of MDP states \emph{after} state compression as used by \NDICE.
All experiments were run on macOS Sonoma 14.5 with a 3.49 GHz Apple M2 Pro chip and 32~GB RAM.
Benchmarks were run 5 times and \Cref{tab:Ndet} reports the average runtime; standard deviations are reported in \ext{\Cref{app:extended_results}}{the extended version \citeExtended}.

For Storm, we used the following engines: The benchmarks classes \textit{Runway}, \textit{CouponProb} and \textit{CouponNdet} required conditional reachability properties, which are only supported by the \enquote{sparse} engine.	
Here, the \textit{Coupon} benchmarks performed best with the \enquote{restart} algorithm by \citet{baier2014computing}, whereas the \textit{Runway} benchmarks used Storm's custom \enquote{bisection} algorithm to determine conditional probabilities. 
Beyond that, the \textit{Network} and \textit{Rabin} benchmarks as well as the \textit{Survey} Bayesian network performed best with the \enquote{sparse} engine, whereas the remaining benchmarks preferred the symbolic \enquote{dd} engine.
We now turn our attention to the previously posed questions:

\subsubsection*{Q1: Scaling Behavior of the Compilation.}
To answer this question, we will primarily consider the performance of \DICE and \NDICE on the probabilistic Coupon Collector and Bayesian network problems. 
We observe that the compilation time of \NDICE is slower than \DICE, likely because \NDICE does not use a dedicated decision diagram library to construct ADDs.
Still, the compilation time of noDice scales similarly to Dice on the Coupon Collector benchmarks, where \NDICE takes about twice as long as the more specialized \DICE, and their compilation times are comparable on most Bayesian networks.
Since the model checking times on these benchmarks are negligible for noDice, we can conclude that  \NDICE is slower than \DICE on purely probabilistic programs, but both approaches scale similarly for larger programs.

\subsubsection*{Q2: Reduction in the MDP State Space.}
Overall, \NDICE is consistently able to construct significantly smaller MDPs than the direct encoding for the same benchmark problems.
More precisely, the \DICE-style compilation is able to avoid an exponential blow-up of the ADDs by exploiting local structure in the programs, leading to the ADDs themselves usually already being much smaller than the MDPs build by Storm.
The \textit{Network} benchmarks form an exception to this, since \NDICE requires more MDP states than Storm to infer all values.
After all, these benchmarks contain no redundancy which is not already exploited by Storm.
Thus, both \NDICE and Storm build MDPs of the same size, but Storm constructs one MDP which is used to infer the probabilities for both output values, whereas \NDICE builds separate MDPs for the two values, leading to a total state space which is twice as large.
Beyond that, state compression is able to reduce the state space of the MDPs even further depending on the amount of nondeterminism in the benchmark.
In particular, state compression yields larger reductions on problems with less nondeterminism.

\subsubsection*{Q3: Inference Efficiency.}
	Looking at \NDICE, we observe that compilation takes the vast majority of the runtime, whereas model checking takes little to no time in comparison.
	This behavior is consistent with \DICE, which also spends most of the runtime on the compilation, but performs the weighted model count, meaning \DICE's equivalent to model checking, within milliseconds.
	For both \NDICE and \DICE, quick model checking is the result of the large reduction in the state space due to the use of decision diagrams in the compilation process.
	However, the further reduction of the MDP size with state compression did not speed up \NDICE inference in a meaningful way, since the runtimes of \NDICE and $\NDICE_u$ are almost identical.
	
	We will now compare the performance of \NDICE and Storm.
	 We first note that \NDICE outperforms Storm on several problems, namely the \textit{Runway}, \textit{Coupon Collector} and \textit{3sat} benchmarks along with most Bayesian networks.
	On these problems, the small MDPs of \NDICE allow for highly efficient model checking, whereas Storm constructs much larger MDPs which are expensive to construct and model check.
	Still, \NDICE's compilation is not always worth it, as seen with the \textit{Network} and \textit{Rabin} benchmarks:
	Here, the compilation of a small MDP by \NDICE is far too costly compared to Storm, which solves these problems in milliseconds.
	
	This raises the question as to when \NDICE is faster than simply using Storm.
	Judging from the results in \Cref{tab:Ndet}, we believe that a key distinguishing characteristic between the performance of Storm and \NDICE is the dimensionality of the state space.
	Put differently, Storm performs better on problems where only a few variables are needed to describe the state space, whereas \NDICE favors benchmarks which include many variables.
	As an example, the \textit{Network} benchmark can be encoded with just 2 variables: The current status of the package and the remaining number of rounds. 
	In contrast, the \textit{Coupon} benchmarks require many variables more, as every coupon requires a variable to track whether it has already been obtained.
	Storm struggles with the resulting high dimensionality, since it enumerates the reachable states, while \NDICE combines states which behave similarly.
	We do however note that the \textit{Alarm} Bayesian network forms an exception to this intuition, since \NDICE performs surprisingly bad whereas Storm performs surprisingly well.

\subsubsection*{Q4: Influence of Nondeterminism.}
Overall, the share of nondeterministic flip operations does not significantly influence the runtime of \NDICE.
This is best demonstrated with the \textit{3SAT} benchmarks:
Here, both the compilation time as well as the ADD size are unaffected by the number of nondeterministic variables, whereas the compressed MDP size and the time spent on model checking increase.
The same effect can also be observed when comparing the probabilistic and nondeterministic instances of the Bayesian network benchmarks, as compilation times are largely unaffected and model checking is a little slower.
This is to be expected, as the ADD construction is agnostic to the nature of flip operations, whereas model checking treats the flips differently depending on their nature.
However, the overall time spent on model checking remains negligible compared to the compilation, and thus the amount of nondeterminism in a benchmark does not significantly affect the overall runtime.

\medskip

In summary, \NDICE is able to construct compact MDPs for rather large programs, speeding up the inference procedure as a whole.
Thanks to this, the inference process is able to outperform state-of-the-art model checkers on a variety of benchmarks, but the construction of the MDP can be very expensive for other benchmarks classes.
Due to this, we believe that \NDICE is particularly suited for problems which contain a large number of variables, leading to a high dimensionality of the state space.
Beyond that, state compression can further reduce the size of the MDP for \NDICE, but this reduction does not translate to faster inference.
Overall, we believe that our prototype implementation serves as a proof of concept that a \NDICE-style algorithm can efficiently perform inference for probabilistic programs with nondeterminism.


\section{Related Work}\label{sec:related_work}

\paragraph{Expectation Transformers}
Nondeterminism in probabilistic programs with loops and unbounded numbers (but without conditioning) has been studied by \citet{mciver2005abstraction}, through \textit{expectation transformers}.
This work was later extended by \citet{olmedo2018conditioning} to discrete probabilistic programs with conditioning and loops but without nondeterminism.
Additionally,  \citet{olmedo2018conditioning}  showed that nondeterminism and conditioning cannot both be covered by \emph{inductive} expectation transformers.
They (and previously \citet{shoup2009computational}) also showed that loops can model observations, essentially by emulating rejection sampling.
However, this requires reasoning about loops, which is particularly difficult in the presence of randomness~\cite{DBLP:journals/acta/KaminskiKM19}.

\paragraph{Probabilistic Programming Languages \textnormal{(PPLs)}}
A variety of PPLs for Bayesian reasoning has been developed.
In contrast to our work, these often focus on continuous distributions.
In fact, some PPLs like Aqua~\cite{huang2021aqua}, Pyro's Variational Inference~\cite{bingham2019pyro}, or Stan~\cite{carpenter2017stan} \emph{only} support continuous distributions.
Nonetheless, there are many languages which support discrete random variables alongside continuous distributions, for example Anglican~\cite{wood-aistats-2014}, Hakaru~\cite{narayanan2016probabilistic}, SOGA~\cite{randone2024inference} or WebPPL~\cite{dippl}.
However, \emph{none of the above PPLs support nondeterminism}.

A notable PPL which allows nondeterminism via decision making is ProbLog~\cite{fierens2015inference}. 
ProbLog features both conditioning via evidence and an extension to determine optimal decisions~\cite{van2010dtproblog}, but explicitly forbids combining them.
Another example is IBAL~\cite{pfeffer2007design}, which includes a choose-statement to pick the \textit{locally} optimal value to maximize a utility function.
However, \Cref{sec:subOpt} shows that local reasoning can be insufficient. 

Languages like Psi~\cite{gehr2020lambdapsi} allow unspecified inputs, resulting in parameterized inference.
Parameters are fundamentally different from nondeterminism.
In particular, parameters are fixed at program start and hence the semantics of \ref{ExLet1} and \ref{ExLet2} cannot easily be distinguished.

PMAF~\cite{wang2018pmaf} is a framework for the analysis of probabilistic programs with unstructured control flow, recursion, nondeterminism, and observations.
PMAF can be instantiated with a variety of analyses, and it may be possible to perform \NDICE inference within the framework. 
However, the semantics of observations differ between PMAF and \NDICE \cite{wang2018denotational}:
Both assign probability $0$ to failed observations, but PMAF does not normalize the resulting distributions.
Hence, the PMAF equivalent to $\OBSERVE{\FLIP{\frac{2}{3}}}$ only returns $\TRUE$ with probability $\frac{2}{3}$.

\paragraph{Probabilistic Model Checking and Bisimilarity}
Since \NDICE reduces inference to a conditional reachability property on an MDP, every problem which can be modeled by a \NDICE program could also be modeled directly as an MDP.
Then, a \textit{probabilistic model checker} can be used to perform model checking as was seen in the empirical evaluation.
This approach offers many advantages compared to \NDICE:
For one, a variety of mature model checkers has been developed over the years \cite{hahn20192019}.
Secondly, MDPs themselves are more expressive than \NDICE programs as they can model cycles, concurrency and symbolic unbounded numbers.
Leveraging these advantages, \citet{jansen2016bounded} performed inference on probabilistic programs with nondeterminism by simply unrolling the operational semantics of programs as an MDP, and letting a probabilistic model checker determine the conditional reachability probabilities.
\Cref{sec:implementation} shows that this can be an effective approach in practice, but the constructed MDPs tend to be very large.

Hence, it is reasonable to consider approaches like \NDICE, which aim to reduce the size of the final MDP in order to speed up model checking.
Here, model checkers also use ADDs \cite{baier1997symbolic}, but contrary to our work ADDs are used to compactly represent the full state space, rather than to remove states.
Still, probabilistic model checkers have developed a variety of other techniques to speed up model checking \cite{hartmanns2025revised}.
Here, \emph{probabilistic bisimulation} \cite{larsen1989bisimulation, segala1995probabilistic} is particularly relevant to our work, since it presents a generic way to minimize MDPs.
Most notably, the redundant states removed by \NDICE and bisimulations differ, and both approaches remove redundancies which the other misses:
Bisimulations remove symmetries such as e.g. distinct flip operations which use the same probabilities, whereas the decision diagrams of \NDICE remove \enquote{deterministic} states with only a single successor.

Regarding the empirical evaluation in \Cref{sec:implementation}, we found that using bisimulations to reduce the directly encoded MDPs used by Storm did not yield any speedups compared to applying Storm directly.
In fact, calculating the minimized MDP was usually very expensive, and frequently resulted in time-outs.
Consequently, we used Storm without minimizing the MDP first.

\paragraph{MDP Modeling Languages}
	In some ways, \NDICE can be seen as a modeling language for (acyclic) MDPs, and so we will compare it to existing MDP modeling languages like Jani~\cite{budde2017jani} or Prism~\cite{KNP11}.
	Jani and Prism are used to specify the input for model checking tools like Storm \cite{hensel2022probabilistic} or Modest \cite{hartmanns2014modest}, and they allow one to define general MDPs which may contain cycles or be constructed from multiple modules running in parallel. 
	Regarding modeling itself, \NDICE specifies models with the syntax and structure of programs, Prism uses a state-based language whereas Jani encodes models in the JSON data format. 
	However, both Prism and Jani use the same modeling approach:	
	
	First, a set of variables is specified, and the state space of the MDP is defined as the set of possible variable assignments. 
	Transitions then manipulate the variables, and their effects are defined through symbolic expressions which specify the updated values of the variables.
	Whether or not a transition is enabled in a state is defined with a guard, i.e. a boolean condition on the variables. 
	A transition can only be taken when its guard condition holds for the state, but beyond that transitions are always enabled.
	This is useful when modeling e.g. a server which should always be waiting for different requests, but is inconvenient for scenarios where events have to occur in a strict order, such as \Cref{fig:plane}.
	 Hence, MDPs often need an additional state variable to explicitly capture which transitions should be enabled next, see for example \citet{d1997bounded}.
	 In contrast, \NDICE automatically defines the order of operations through the natural program order.

\paragraph{Marginal MAP}
The goal of a marginal maximum a posteriori (MMAP) query is closely related to our inference task: 
Find an optimal assignment to some variables -- which can be interpreted as a nondeterministic choice -- in order to maximize the marginal probability over the remaining variables \cite{liu2013variational, choi2022solving}.
Contrary to this very similar problem statement, we believe that MMAP problems and \NDICE inference require distinct solution approaches.

For one, \NDICE inference cannot easily be performed by MMAP solvers, as the goal of MMAP queries is to find an optimal \textit{assignment} to variables.
An assignment fixes a single value for every nondeterministic choice, which in the context of \NDICE programs means statically replacing every $\NFLIP$ statement with $\TRUE$ or $\FALSE$.
Our model of nondeterminism is more general, hence formulating \NDICE inference as a MMAP query may require enumerating all contexts for $\NFLIP$ as a separate choice, leading to exponentially larger input programs.
Similarly, \NDICE cannot easily solve MMAP queries, as \NDICE may consider too many behaviors.
Still, MMAP queries could be compiled into \NDICE programs by performing all nondeterministic choices at the start of the program to avoid acting based on probabilistic outcomes. 
However, this removes some of the program structure which is exploited by ADDs, which would negatively affect the inference efficiency.

\section{Conclusion and Future Work}\label{sec:future}
We have introduced syntax and semantics of \NDICE, a probabilistic programming language which conservatively extends \DICE with nondeterminism.
We also presented a novel inference algorithm for probabilistic programs with nondeterminism.
The algorithm compiles programs to Boolean formulas and reduces inference queries to conditional reachability in an MDP.
We evaluated a prototype implementation and showed that our algorithm can feasibly do inference for large programs.
We believe our work is a promising step towards efficient inference for probabilistic programs with nondeterminism and conditioning.
As future work, we deem the following worthwhile.

\paragraph{Unbounded Loops}
PPLs like \DICE, \NDICE or Psi \cite{gehr2016psi} can only approximate the behavior of unbounded loops with bounded loops.
However, \citet{torres2024iteration} extended \DICE with a statement for \textit{unbounded} iteration, as well as an exact inference method for programs with this new statement.
We are eager to see if their work can be extended with nondeterminism.

\paragraph{Determining Optimal Programs}
Our work only determines the maximum probability to return a value, but it may also be of interest to retrieve a purely probabilistic program which generates those probabilities.
Synthesizing optimal deterministic programs has been considered without conditioning in \citet{batz2024programmatic}, but it is unclear if this approach can be generalized.

	\newpage

	\section*{Data-Availability Statement}
	The prototype implementation from \Cref{sec:implementation} is available along with the benchmark programs used in the empirical evaluation \cite{nodice_artefact}.

	\bibliographystyle{ACM-Reference-Format}
	\bibliography{bibliography}

	\ifthenelse{\boolean{isExtended}}{
		
	\newpage
	\appendix

\section{\NDICE Compilation Rules for Function Calls}\label{app:function_comp}

In this section we will briefly present the \NDICE compilation rules for function definitions, function calls and \NDICE programs.
Notably, the rules are very similar to the compilation rules for \DICE by \citet{holtzen2020scaling}.

\begin{figure}
	\begin{center}
		\begin{adjustbox}{max width=\textwidth}
			\renewcommand{\arraystretch}{4.0}
			\begin{tabular}{c c}
				\inference[\ifthenelse{\boolean{showRuleNames}}{\texttt{(C-Fun)}}{}]
				{\Gamma\cup\{x_1 \colon \tau_1\}, \Phi \vdash e \colon \tau_2 \leadsto (\dot{\varphi}, \gamma, t)}
				{ \Gamma, \Phi \vdash \FUNC{\FUNCNAME (x_1 \colon \tau_1) \colon \tau_2}{e} \leadsto (\dot{\varphi}, \gamma, t)} 
				
				& 
				\inference[\ifthenelse{\boolean{showRuleNames}}{\texttt{(C-Prog1)}}{}]
				{\Gamma, \Phi \vdash e \colon \tau \leadsto (\dot{\varphi}, \gamma, t)}
				{ \Gamma, \Phi \vdash \bullet\, e \colon \tau\leadsto (\dot{\varphi}, \gamma, t)}
				
				\\
				\multicolumn{2}{c}{
					\inference[\ifthenelse{\boolean{showRuleNames}}{\texttt{(C-Prog2)}}{}]
					{\Gamma, \Phi \vdash \FUNC{\FUNCNAME (x_1 \colon \tau_1) \colon \tau_2}{e} \leadsto (\dot{\varphi}_\FUNCNAME, \gamma_\FUNCNAME, t_\FUNCNAME)\\
						\Gamma \cup \{\FUNCNAME \mapsto \tau_1 \rightarrow \tau_2\}, \Phi \cup \{\FUNCNAME \mapsto (x_1, \dot{\varphi}_\FUNCNAME, \gamma_\FUNCNAME, t_\FUNCNAME)\} \vdash p \colon \tau\leadsto (\dot{\varphi}, \gamma, t)
					}
					{\Gamma, \Phi \vdash  \FUNC{\FUNCNAME (x_1 \colon \tau_1) \colon \tau_2}{e}\; p \colon \tau\leadsto (\dot{\varphi}, \gamma, t)}
				}\\

				\multicolumn{2}{c}{
					\inference[\ifthenelse{\boolean{showRuleNames}}{\texttt{(C-Call)}}{}]
					{\Gamma(x_1) = \tau_1 \qquad \Gamma(\FUNCNAME) = \tau_1\rightarrow \tau_2\\
						\Phi(\FUNCNAME) = (x_{arg}, \dot{\varphi}, \gamma, t) \qquad (\dot{\varphi}', \gamma', t') = \texttt{RefreshFlips}(x_{arg}, \dot{\varphi}, \gamma, t)
					}
					{\Gamma, \Phi \vdash  \FUNCNAME(x_1) \colon \tau_2 \leadsto (\dot{\varphi}'[x_{arg}\mapsto x_1], \gamma'[x_{arg}\mapsto x_1], t')}
				}
			\end{tabular}
		\end{adjustbox}
	\end{center}
	\caption{
		\NDICE Compilation rules for functions and programs. We assume without loss of generality but for simplicity that function calls are only ever given identifiers as arguments.}
	\label{nDice Function Compilation} 
\end{figure}
Compiling programs with function definitions and calls requires extending our compilation judgment with an additional context piece $\Phi$, called the \textit{compiled function table}.
$\Phi$ maps function identifiers to tuples $(x_{arg}, \dot{\varphi}, \gamma, t)$, where $x_{arg}$ is the formal name of the function's argument, and the other elements are the function body's model formula, accepting formula and annotated trace of flips.
Thus, compilation judgments for expressions now have the form $\Gamma, \Phi \vdash e \leadsto (\dot{\varphi}, \gamma, t)$.
\Cref{nDice Function Compilation} presents the new compilation rules for programs and function calls, all other rules are exactly as in \Cref{nDice Compilation} with the added context $\Phi$.
Once again, the rules are very similar to the compilation rules for \DICE~\cite{holtzen2020scaling}.

The judgment $\Gamma, \Phi \vdash p  \colon \tau \leadsto (\dot{\varphi}, \gamma, t)$ compiles programs inductively:
First, \texttt{C-Prog2} uses \texttt{C-Fun} to compile function definitions and accumulates the compiled function table.
Here, \texttt{C-Fun} compiles function definitions into the usual triples $(\dot{\varphi}, \gamma, t)$ by compiling the function's body in the appropriate type environment, and later \texttt{C-Prog2} adds the function to the compiled function table.
After all function definitions have been processed we consider the main expression $\bullet e$, which is compiled as an expression with the accumulated type environment and compiled function table using the rule \texttt{C-Prog1}.
For function calls, \texttt{C-Call} retrieves the compilation result for the called function from $\Phi$ and instantiates the function's formal argument $x_{arg}$ with the real argument $x_1$.
Additionally, $\texttt{RefreshFlips}$ renames all variables aside from $x_{arg}$ in the formulas and trace of flips to ensure that flips remain independent between different calls of the same function.

The added compilation rules are the only difference between inference for expressions and full programs, and both determining the reduced ADD and lifting it to an MDP are done as described in the Cref{sec:inference}.
We now present the Soundness Theorem for \NDICE programs:
\begin{theorem}[Inference Soundness]\label{ProgramSoundnessApp}
	Let $p$ be a \NDICE program  
	and let $\{\}, \{\} \vdash p \colon \tau \leadsto (\dot{\varphi}, \gamma, t)$. 
	Then the reduced ADD for $(\dot{\varphi}, \gamma, t)$ can be lifted to an unnormalized MDP $\mathcal{M}_p$ so that:
	$$ Pr^{max}_{M_p} (\lozenge v \,|\, \lozenge A) = \max \left\{d(v) \,|\, d \in \llbracket p\rrbracket^{\{\}}_D\right\}$$
\end{theorem}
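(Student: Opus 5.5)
My plan is to prove the theorem at the level of \emph{unnormalized} objects first, and only at the end pass to conditional probabilities and normalized semantics. The key intermediate claim is this: the set $\llbracket p \rrbracket^{\{\}}$ coincides with the set of sub-distributions
\[
v \mapsto Pr^{\Theta}_{\mathcal{M}_p}(\lozenge v),
\]
where $\Theta$ ranges over a suitable class of schedulers. Here each such sub-distribution has total mass $Pr^{\Theta}_{\mathcal{M}_p}(\lozenge A)$. Since $A$-states and value-states are terminal (absorbing), $Pr^{\Theta}(\lozenge v \land \lozenge A) = Pr^{\Theta}(\lozenge v)$. Normalizing by $d_D = Pr^{\Theta}(\lozenge A)$ then turns both sides into the same quotient. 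Maximizing over $\Theta$ yields the stated equality, including the degenerate case, since both sides define the fraction to be $0$ when the denominator vanishes.

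To establish the intermediate claim, I would not work with the reduced ADD directly. Instead I would go through the \emph{full decision tree} over the trace order $t$: a complete binary tree branching on $f_1,\dots,f_k$ in trace order, with leaves labelled $(\dot\varphi\,|\,\gamma)_t(b_1,\dots,b_k)$. There are three steps.

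First, I would prove by structural induction on expressions, using the rules of \Cref{nDice Compilation} and \texttt{C-Call} with \texttt{RefreshFlips}, the following statement. For every closed expression $e$ with $\{\}\vdash e \leadsto (\dot\varphi,\gamma,t)$, the set $\llbracket e\rrbracket$ equals the set of leaf sub-distributions induced by \emph{history-dependent} randomized resolutions of the nondeterministic nodes of the tree for $t$, where rejected leaves are dropped. Since open subexpressions arise under \texttt{nC-Let}, the induction should be stated for expressions with free variables $x$ substituted by values $v$, using that $\dot\varphi_2[x\mapsto v]$ is the compilation of $e_2[x\mapsto v]$, which is a substitution lemma. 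The cases are as follows.
\begin{itemize}
\item For \texttt{nC-NFlip}, the tree has a single nondeterministic node, and randomized choice gives exactly $\{\theta|\TRUE\rangle+(1-\theta)|\FALSE\rangle\}$.
\item For \texttt{nC-Let}, the trace is $t_1@t_2$. A tree for $t_1$ is followed, below each leaf of value $v$, by the tree for $e_2[x\mapsto v]$. A history-dependent scheduler on the composite tree is exactly a choice of $d_1$ together with an independent choice of $d_v$ for each $v$, which matches the let-semantics. One subtlety is that flips of $e_2$ also occur below leaves with $d_1(v)=0$ or below rejected leaves; they are irrelevant there.
\item For \texttt{nC-ITE}, the guard is a value, so one branch's variables are dummies that do not affect the output.
\end{itemize}
Programs then reduce to their main expression via \texttt{C-Prog1}/\texttt{C-Prog2}.

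Second, I would relate the tree to the reduced ADD. The ADD arises from the tree by merging isomorphic subtrees and skipping redundant tests, and each of these reductions preserves the set of achievable leaf sub-distributions. Skipping a node whose two children are equal loses nothing, since a mixed choice there yields the same subtree distribution. For merging, I would argue by backward induction on the acyclic MDP. The set of achievable value-vectors from a node is determined by its subfunction, and at a nondeterministic node it is the convex hull of the children's sets, while at a probabilistic node it is their Minkowski-weighted combination. This shows that memoryless schedulers on the DAG realize every vector that history-dependent schedulers on the tree realize. The converse direction is immediate.

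The main obstacle is this second step, specifically the fact that memory is lost when merged nodes are reached along different histories. In general, memoryless schedulers suffice for a single reachability objective but do not realize all vectors of joint outcomes. This is exactly why I would formulate the claim as an equality of \emph{sets of achievable vectors} and prove convexity inductively. Convexity is what makes memoryless randomized choice suffice: at a merged node, a single memoryless mixture can simulate the average of the different history-dependent behaviours, because the achievable set at a node depends only on the subfunction there and is convex. Since the ratio objective is not linear, I would rely on this set equality rather than on any optimality argument for conditional objectives. Finally, existence of the maximum on both sides follows from compactness of these polytopes together with continuity of the ratio away from a zero denominator. Where the denominator vanishes the value is $0$ on both sides, so that case matches as well.
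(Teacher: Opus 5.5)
\textbf{Step~2 has a gap as written.} Your route differs from the paper's and, once repaired, proves more. The recursion $\mathcal{S}_u=\mathrm{conv}(\mathcal{S}_{\mathit{then}(u)}\cup\mathcal{S}_{\mathit{else}(u)})$ at nondeterministic nodes, and $\mathcal{S}_u=\theta\,\mathcal{S}_{\mathit{then}(u)}+(1-\theta)\,\mathcal{S}_{\mathit{else}(u)}$ at probabilistic ones, correctly describes what \emph{history-dependent} schedulers achieve, on the tree or on the DAG. There the behaviour below the two children can be chosen independently. For \emph{memoryless} schedulers on the DAG, this recursion is exactly what must be shown, and backward induction cannot deliver it. Suppose $\mathit{then}(u)$ and $\mathit{else}(u)$ share a descendant $w$. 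The memoryless schedulers realizing $y_1$ and $y_2$ at the two children may disagree at $w$. Replacing them by the reach-weighted average at $w$ (your convexity idea) changes what is required below $w$, which may itself be shared. So knowing each child's memoryless set does not give you a single scheduler for $u$.

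What is missing is a \emph{forward} construction via state--action frequencies. For any scheduler $\sigma$ on the acyclic MDP, set $\Theta(s)(\alpha):=\Pr^{\sigma}(\text{visit } s \text{ and choose } \alpha)/\Pr^{\sigma}(\text{visit } s)$, arbitrary when the denominator is $0$. Then prove, along a topological order, that $\Theta$ and $\sigma$ have the same visiting probabilities and hence the same absorption vector. With this standard lemma, plus your convexity argument that merging and skipping leave the history-dependent set unchanged, Step~2 holds.

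Two smaller repairs are also needed. First, in the let case of Step~1, a history-dependent scheduler picks a behaviour of $e_2$ per $t_1$-\emph{leaf}, not per value. It is therefore not ``exactly'' a pair $(d_1,(d_v)_v)$. You must average all leaves with value $v$ into one $d_v$, which requires convexity of $\llbracket e_2[x\mapsto v]\rrbracket$; carry this in the induction hypothesis, where it follows from your recursive characterization. Second, a call $\FUNCNAME(x)$ gets no hypothesis from structural induction on expressions. You need an invariant linking $T$ and $\Phi$ (the paper's table compatibility), re-established for each definition along \texttt{C-Prog2}.

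For comparison, the paper sidesteps both convexity issues by construction. It builds a non-reduced ADD in which all $e_1$-leaves with value $v$ are redirected to \emph{one shared} copy of the ADD for $e_2[x\mapsto v]$, so memoryless schedulers match the let-semantics exactly. It then adds reset edges from $R$ to the initial state, turning conditional reachability under a memoryless scheduler into plain reachability. Next it uses Bellman equations to show that each ADD reduction preserves the \emph{maximal} plain reachability, and finally converts back. This needs only textbook MDP facts, but it establishes only equality of the maxima.

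Your repaired route gives the stronger fact that memoryless schedulers on the reduced ADD's MDP realize exactly $\llbracket p\rrbracket^{\{\}}$. Normalization then becomes a per-scheduler computation, and the same argument covers minima or any other functional of the normalized distribution. It also shows directly that history-dependent schedulers cannot do better on the conditional objective.
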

\begin{proof}
	See \Cref{ProgramAppendix}, where the theorem is restated as \Cref{app:ProgramSoundness}.
\end{proof}

\section{Soundness Proof for the Inference Algorithm}

\subsection{Correspondence between unnormalized MDP and unnormalized Semantics}

\begin{lemma}[Value Correctness]\label{ValueCorrectnessLemma}
	Let $v \colon \tau$ be a value. Then there exists an ADD for $(v, \TRUE, [])$ that can be lifted to an unnormalized MDP $\mathcal{M}^u_v$ so that for all schedulers $\Theta$ we have 
	$$Pr^\Theta_{\mathcal{M}^u_v}(\lozenge v') = \begin{cases}
		1 &, v' = v\\
		0 &, v' \neq v
	\end{cases}$$
\end{lemma}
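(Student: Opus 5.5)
The plan is to construct the ADD explicitly and read the reachability probabilities off directly; no induction is needed.

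\textbf{The ADD.} Since the trace is empty, $k = len([]) = 0$, so $(v \,|\, \TRUE)_{[]}$ is a function $\BOOL^0 \rightarrow \llbracket\tau\rrbracket \cup \{R\}$. Its domain is the single empty tuple. Evaluating the accepting formula gives $\gamma() = \TRUE$. The model formula $v$ is a ground (possibly nested) tuple of Boolean constants, and by the recursive definition of evaluating $\dot{\varphi}$ it evaluates to $v$ itself. This needs a one-line structural induction on the type $\tau$: for $\tau = \BOOL$ the formula is the constant itself, and for $\tau_1 \times \tau_2$ it is the pair of evaluations. Hence $(v \,|\, \TRUE)_{[]}() = v$. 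I take the ADD $\mathcal{A} = (\emptyset, \{r\}, r, \emptyset)$ consisting of a single terminal node $r$ with $\mathit{val}(r) = v$. It trivially respects the (empty) variable order from $t$, represents the function, and is reduced.

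\textbf{The MDP.} Lifting gives $S = \{r\}$ and $s_{init} = r$. The only transition is $r \xrightarrow{d,1} r$, and the labeling is $L(r) = \{v, A\}$, since $v \neq R$.

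\textbf{Reachability.} Fix any scheduler $\Theta$; it must put all mass on $d$, the only enabled action. For $v' = v$, the one-state path $\pi = r$ lies in $\lozenge v$, because $n = 0$ and $r \in v$. Every longer path already has $s_0 = r \in v$, violating the condition $s_i \notin F$ for $i < n$, so $\pi$ is the only path in $\lozenge v$. Its probability is the empty product, $1$, so $Pr^\Theta(\lozenge v) = 1$. For $v' \neq v$, no state carries the label $v'$, so $\lozenge v' = \emptyset$ and the probability is $0$.

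\textbf{Main obstacle.} The only slightly delicate point is the bookkeeping: checking that the nested-tuple model formula $v$ evaluates to the value $v$, and that the degenerate zero-variable ADD is admissible under the definitions. Both follow immediately from the stated definitions.
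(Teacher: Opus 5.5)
Your proposal is correct and follows the same route as the paper's proof: you take the single-terminal ADD for $v$ and lift it to a one-state MDP labeled $\{v, A\}$ with a $d$-self-loop. Every scheduler then reaches $v$ with probability $1$ and any other label with probability $0$. The extra bookkeeping you add, namely evaluating the ground model formula and identifying the length-zero path as the unique element of $\lozenge v$, makes explicit what the paper simply calls trivial.
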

\begin{proof}
	Regardless of the type of $v$, the reduced ADD for $v$ will always consist of only a single leaf node, namely the one with the value $v$. 
	Thus the unnormalized lifted MDP $\mathcal{M}^u_v$ will consist of a single state labeled $v$ with a single transition, and there is only a single scheduler.
	The desired property then holds trivially for $\mathcal{M}^u_v$ as we will always reach a state labeled $v$, but never any states with a different label.
\end{proof}

\begin{lemma}[Substitution Correctness]\label{SubstitutionCorrectnessLemma}
	For all $x\colon \tau$ and $v \colon \tau$ we have $F_\tau(x) [x \overset{\tau}{\mapsto} v] = v$
\end{lemma}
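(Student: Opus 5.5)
We prove the statement by structural induction on the type $\tau$, following the inductive definitions of the form function $F_\tau$ and of typed substitution $[x \overset{\tau}{\mapsto} \cdot\,]$. A value of type $\tau$ is either a Boolean constant (for $\tau = \BOOL$) or a pair $(v_l, v_r)$ with $v_l \colon \tau_1$ and $v_r \colon \tau_2$ (for $\tau = \tau_1 \times \tau_2$). Here $v$ is read as its Boolean-tuple encoding, as in rule \texttt{(nC-Val)}.

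\emph{Base case $\tau = \BOOL$.} By definition $F_\BOOL(x) = x$ and $[x \overset{\BOOL}{\mapsto} v] = [x \mapsto v]$. Hence $F_\BOOL(x)[x \overset{\BOOL}{\mapsto} v] = x[x\mapsto v] = v$ immediately.

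\emph{Inductive case $\tau = \tau_1 \times \tau_2$.} Here $v = (v_l, v_r)$ and $F_{\tau}(x) = (F_{\tau_1}(x_l), F_{\tau_2}(x_r))$. We proceed in three steps.

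First, we unfold the substitution into a tuple as a componentwise substitution:
\[
(F_{\tau_1}(x_l), F_{\tau_2}(x_r))[x \overset{\tau}{\mapsto} v] = \bigl(F_{\tau_1}(x_l)[x \overset{\tau}{\mapsto} v],\; F_{\tau_2}(x_r)[x \overset{\tau}{\mapsto} v]\bigr).
\]

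Second, we unfold the substitution by a pair value:
\[
[x \overset{\tau}{\mapsto} (v_l, v_r)] = [x_l \overset{\tau_1}{\mapsto} v_l][x_r \overset{\tau_2}{\mapsto} v_r].
\]

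Third, in the left component the substitution $[x_r \overset{\tau_2}{\mapsto} v_r]$ is vacuous and only $[x_l \overset{\tau_1}{\mapsto} v_l]$ acts; symmetrically in the right component. Applying the induction hypothesis to $x_l\colon\tau_1$ and to $x_r\colon\tau_2$ gives $(v_l, v_r) = v$.

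The main obstacle is this vacuity claim in the third step. It needs an auxiliary fact, itself proved by induction on $\tau$: the free Boolean variables of $F_{\tau}(y)$ are exactly the leaf names obtained by extending $y$ with $l/r$ subscripts. The leaf names under $x_l$ and those under $x_r$ are disjoint, since they differ in the first subscript after $x$. Hence substituting for the $x_r$-leaves leaves $F_{\tau_1}(x_l)$ unchanged, and vice versa.

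Once this disjointness is established, the iterated substitutions commute and distribute over the tuple components, and the inductive case closes.
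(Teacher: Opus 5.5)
Your proof is correct and follows the paper's proof essentially step for step: induction on $\tau$, splitting $[x \overset{\tau_1\times\tau_2}{\mapsto} (v_l,v_r)]$ into $[x_l \overset{\tau_1}{\mapsto} v_l][x_r \overset{\tau_2}{\mapsto} v_r]$, discarding the vacuous substitution in each component, and closing with the induction hypothesis. The differences are only in emphasis. The paper proves the splitting identity for arbitrary nested tuples by a small induction on $\dot{\varphi}$; your second step is definitional only for single formulas, and you just gesture at distribution over components. Conversely, you justify the vacuity step carefully via disjointness of leaf names, which the paper merely asserts.
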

\begin{proof}
	Proof by induction on $\tau$:
	\begin{itemize}
		\item $\tau = \BOOL$: $F_\BOOL(x) [x \overset{\BOOL}{\mapsto} v] = x [x \overset{\BOOL}{\mapsto} v] = v$
		\item $\tau = \tau_1 \times \tau_2$: Let $v = (v_l, v_r)$, we then get:
		\begin{align*}
			F_{\tau_1 \times \tau_2}(x) [x \overset{\tau_1 \times \tau_2}{\mapsto} v] &= (F_{\tau_1 }(x_l),\, F_{ \tau_2}(x_r)) [x \overset{\tau_1 \times \tau_2}{\mapsto} v]\\
			&=_1 (F_{\tau_1 }(x_l)[x_l \overset{\tau_1}{\mapsto} v_l][x_r \overset{\tau_1}{\mapsto} v_r],\, F_{ \tau_2}(x_r)[x_l \overset{\tau_1}{\mapsto} v_l][x_r \overset{\tau_1}{\mapsto} v_r])\\
			&=_2 (F_{\tau_1 }(x_l)[x_l \overset{\tau_1}{\mapsto} v_l],\, F_{ \tau_2}(x_r)[x_r \overset{\tau_2}{\mapsto} v_r])\\
			&=_{IH} (v_l, v_r)\\
			&= v
		\end{align*}
		For $=_1$, one can prove by induction on $\dot{\varphi}$ in a straightforward manner that 
		$$\dot{\varphi}[x \overset{\tau_1 \times \tau_2}{\mapsto} (v_l,v_r)] = \dot{\varphi}[x_l \overset{\tau_1}{\mapsto} v_l][x_r \overset{\tau_2}{\mapsto} v_r]$$ 
		holds for general tuples of boolean formulas.
		The base case follows from the definition of substitution, and the inductive case follows from the inductive hypothesis and the definition of typed substitutions:
		\begin{align*}
			(\dot{\varphi}_1,\, \dot{\varphi}_2)[x \overset{\tau_1 \times \tau_2}{\mapsto} (v_l,v_r)]
			&= (\dot{\varphi}_1[x \overset{\tau_1 \times \tau_2}{\mapsto} (v_l,v_r)],\, \dot{\varphi}_2[x \overset{\tau_1 \times \tau_2}{\mapsto} (v_l,v_r)])\\
			&=_{IH} (\dot{\varphi}_1[x_l \overset{\tau_1}{\mapsto} v_l][x_r \overset{\tau_2}{\mapsto} v_r],\, \dot{\varphi}_2[x_l \overset{\tau_1}{\mapsto} v_l][x_r \overset{\tau_2}{\mapsto} v_r])\\
			&= (\dot{\varphi}_1[x_l \overset{\tau_1}{\mapsto} v_l],\, \dot{\varphi}_2[x_l \overset{\tau_1}{\mapsto} v_l])[x_r \overset{\tau_2}{\mapsto} v_r]\\
			&= (\dot{\varphi}_1,\, \dot{\varphi}_2)[x_l \overset{\tau_1}{\mapsto} v_l][x_r \overset{\tau_2}{\mapsto} v_r]
		\end{align*}
		Afterwards $=_2$ follows as $x_r$ will be unaffected by the substitution into $x_l$ and vice versa. 
	\end{itemize}
\end{proof}

\begin{theorem}[Unnormalized Correspondence]\label{UnnormalizedCorrespondence}
	Let $e$ be a \NDICE expression without function calls 
	and let $\{x_i\colon \tau_i\} \vdash e \colon \tau \leadsto (\dot{\varphi}, \gamma, t)$. Then given any values $\{v_i\colon \tau_i\}$ there exists an ADD for $(\dot{\varphi}[x_i\mapsto v_i], \gamma[x_i\mapsto v_i], t)$ that can be lifted to an unnormalized MDP $\mathcal{M}^u_e$ so that:
	\begin{itemize}
		\item for all distributions $d \in \llbracket e[x_i\mapsto v_i] \rrbracket$ there exists a memoryless scheduler $\Theta$ for $\mathcal{M}^u_e$ so that for all values $v$ we have: $d(v) = Pr^\Theta_{\mathcal{M}^u_e}(\lozenge v)$
		\item for all memoryless schedulers $\Theta$ for $\mathcal{M}^u_e$ there exists a distribution $d \in \llbracket e[x_i\mapsto v_i] \rrbracket$ so that for all values $v$ we have: $d(v) = Pr^\Theta_{\mathcal{M}^u_e}(\lozenge v)$
	\end{itemize} 
\end{theorem}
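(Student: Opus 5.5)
We prove the theorem by structural induction on $e$, following the compilation rules of \Cref{nDice Compilation}. For each case we choose an ADD for the substituted triple that need \emph{not} be reduced. The statement only asks that \emph{some} ADD exist, so we may build one compositionally and then read off the lifted MDP $\mathcal{M}^u_e$. Substituting the values $v_i$ for the free variables makes every formula closed, since \Cref{SubstitutionCorrectnessLemma} gives $F_\tau(x)[x\mapsto v]=v$. Hence values, identifiers, tuples, $\FST$/$\SND$ and observations all compile, after substitution, to ground formulas. For these we take the single-leaf ADD. For values, identifiers, tuples and projections, \Cref{ValueCorrectnessLemma} yields the Dirac distribution, matching the singleton semantics. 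For $\OBSERVE{v}$ the leaf is $\TRUE$ if $v=\TRUE$, giving $\{1|\TRUE\rangle\}$, and $R$ otherwise. Reaching $R$ with probability one assigns every value probability $0$, which matches $\{0|\TRUE\rangle\}$. For $\FLIP\theta$ and $\NFLIP$ the ADD is a single node over $f$ with leaves $\TRUE$ and $\FALSE$. Memoryless schedulers at the nondeterministic node are exactly the distributions $\theta|l\rangle+(1-\theta)|r\rangle$, so they range over all of $\llbracket\NFLIP\rrbracket$.

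For $\ITE{v_g}{e_T}{e_F}$, the guard formula $\varphi_g$ becomes a constant after substitution, say $\TRUE$. The guarded function $((\varphi_g\land\dot\varphi_T)\lor(\overline\varphi_g\land\dot\varphi_F)\mid(\varphi_g\land\gamma_T)\lor\ldots)$ then coincides with $(\dot\varphi_T\mid\gamma_T)$ on all inputs, and it ignores the variables of $t_F$. We therefore reuse the inductively obtained ADD for $e_T$. Its variable order is a subsequence of $t_T@t_F$, so it is an ADD for the combined triple, and the claim transfers directly from the induction hypothesis.

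The main case is $\LET{x=e_1}{e_2}$. Here we build the ADD by \emph{stacking*}: take the ADD $\mathcal{A}_1$ for $e_1$ (the induction hypothesis with the same $v_i$). Each leaf of $\mathcal{A}_1$ carries either a value $v$ or $R$. Replace each leaf labelled $v$ by the ADD $\mathcal{A}_{2,v}$ obtained from the induction hypothesis for $e_2$ with the extra substitution $x\mapsto v$. Leaves labelled $R$ are left in place. Flip variables are fresh, so the order is consistent with $t_1@t_2$. Correctness of the function follows from the definition of $\gamma_1\land\gamma_2[x\mapsto\dot\varphi_1]$ and $\dot\varphi_2[x\mapsto\dot\varphi_1]$, evaluated pointwise.

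The obstacle is the scheduler correspondence, and it is the reason we use unreduced stacked ADDs rather than reduced ones. The copies $\mathcal{A}_{2,v}$ must be kept \emph{disjoint*} for distinct $v$, so the lifted MDP is a tree of sub-MDPs. If identical subdiagrams were merged, a memoryless scheduler could no longer choose differently for different $v$.

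With disjoint copies, the two directions go as follows. Given $d\in\llbracket\LET{x=e_1}{e_2}\rrbracket$, it is witnessed by some $d_1$ and a family $d_v$. The induction hypothesis gives schedulers $\Theta_1$ on $\mathcal{M}_1$ and $\Theta_v$ on each copy, and we glue them into one memoryless scheduler on disjoint state sets. The probability of reaching $v'$ then decomposes over the leaf of $\mathcal{A}_1$ through which a path passes, namely $\sum_v Pr^{\Theta_1}(\lozenge v)\cdot Pr^{\Theta_v}(\lozenge v')$. This equals $\sum_v d_1(v)d_v(v')$. Conversely, we restrict a given scheduler to each part and apply the induction hypothesis to obtain $d_1$ and the $d_v$. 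Values $v$ with $d_1(v)=0$ may be dropped, or chosen arbitrarily, without affecting the sum.

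The path-decomposition identity holds because every path to a $v'$-leaf passes through exactly one $\mathcal{A}_1$-leaf. The MDP is acyclic apart from the self-loops at terminals, and the path probabilities multiply along the path. This is the key routine calculation.
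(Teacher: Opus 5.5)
Your proposal is correct and follows essentially the same route as the paper. The paper also uses structural induction, single-leaf ADDs for the deterministic constructs, the explicit three-node ADDs for $\FLIP{\theta}$ and $\NFLIP$, and the guard-collapse argument reducing the if-statement to one branch. For the let-statement it likewise redirects the value leaves of the $e_1$-diagram into per-value sub-diagrams for $e_2$, glues and restricts memoryless schedulers, and decomposes reachability through those leaves. Your remark that the sub-diagrams for distinct values of $x$ must be kept disjoint, so that a memoryless scheduler can choose independently for each $v$, makes precise what the paper only asserts as ``no overlap by design.''
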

\begin{proof}
	Proof by structural induction on $e$:
	\begin{itemize}
		\item $e = T$ and $e = F$: Follows directly from \Cref{ValueCorrectnessLemma}\\
		
		\item $e = x$: Assume $\Gamma(x) = \tau$.
		We thus have $\Gamma \vdash x \colon \tau \leadsto (F_\tau(x), \TRUE, [])$.
		Now let $v_x\colon \tau$ be the value substituted for $x$. 
		We thus get $\llbracket x[x \mapsto v_x] \rrbracket = \llbracket v_x \rrbracket = \{1|v_x\rangle\}$. 
		We then apply the substitution to our compilation result to get $(F_\tau(x) [x \overset{\tau}{\mapsto} v_x], \TRUE, []) = (v_x, \TRUE, [])$ via \Cref{SubstitutionCorrectnessLemma}.
		The rest now follows from \Cref{ValueCorrectnessLemma}.\\
		
		\item $e = \FST a$: Here, $a$ is an atomic expression from \AEXP, and thus either a value $v$ or a variable identifier $x$. 
		We assume that $a$ is a variable $x$, as the case where $a$ is a value $v$ is subsumed by this case. 
		
		Assume $\Gamma(x) = \tau_1 \times \tau_2$. 
		We then get $\Gamma \vdash \FST x \colon \tau_1 \leadsto (F_{\tau_1}(x_l), \TRUE, [])$.
		Now let $v_x = (v^l_x, v^r_x)$ be the value substituted for $x$, then we get $\llbracket \FST x[x \overset{\tau}{\mapsto} v_x] \rrbracket = \llbracket v^l_r\rrbracket = \{1|v^l_x\rangle\}$.
		We then apply the substitution to our compilation result to get $(F_{\tau_1}(x_l)[x \overset{\tau}{\mapsto} v_x], \TRUE, []) = (v^l_x, \TRUE, [])$, which can be justified as follows: 
		\begin{align*}
			F_{\tau_1}(x_l)[x \overset{\tau}{\mapsto} v_x] 
			&=_1 F_{\tau_1}(x_l)[x_l \overset{\tau}{\mapsto} v^l_x][x_r \overset{\tau}{\mapsto} v^r_x] \\
			&=_2 F_{\tau_1}(x_l)[x_l \overset{\tau}{\mapsto} v^l_x] \\
			&=_3 v^l_x
		\end{align*}
		Where $=_1$ has already been shown in the proof of \Cref{SubstitutionCorrectnessLemma}, $=_2$ follows as $x_l$ is unaffected by substitutions into $x_r$, and $=_3$ follows from  \Cref{SubstitutionCorrectnessLemma}.
		The rest now follows from \Cref{ValueCorrectnessLemma}.\\

		\item $e = \SND a$: Fully analogous to the case for $\FST a$\\

		\item $e =(a_1, a_2)$ Similarly to the previous cases, we only show the case where the atomic expressions $a_1$ and $a_2$ are identifiers $x_1$ and $x_2$. 
		Assume $\Gamma(x_1) = \tau_1$ and $\Gamma(x_2) = \tau_2$. 
		We then get $\Gamma \vdash (x_1, x_2) \colon \tau_1 \times \tau_2 \leadsto ((F_{\tau_1}(x_1), F_{\tau_2}(x_2)) , \TRUE, [])$.
		Now let $v_1$ be the value substituted into $x_1$ and $v_2$ be the value substituted into $x_2$, then we get $\llbracket (x_1, x_2)[x_1 \overset{\tau_1}{\mapsto} v_1][x_2 \overset{\tau_2}{\mapsto} v_2] \rrbracket = \llbracket( v_1, v_2)\rrbracket = \{1|( v_1, v_2)\rangle\}$.
		We then apply the substitution to our compilation result to get $((F_{\tau_1}(x_1), F_{\tau_2}(x_2))[x_1 \overset{\tau_1}{\mapsto} v_1][x_2 \overset{\tau_2}{\mapsto} v_2], \TRUE, []) = ((v_1, v_2), \TRUE, [])$  via \Cref{SubstitutionCorrectnessLemma}. 
		The rest now follows from \Cref{ValueCorrectnessLemma}.\\

		\item $e =\FLIP{\theta}$: We have $\llbracket \FLIP{\theta} \rrbracket = \{\theta|\TRUE\rangle + (1-\theta)|\FALSE\rangle\}$ and $\Gamma \vdash \FLIP{\theta} \colon \BOOL \leadsto (f, \TRUE, [f\colon \theta])$. 
		The reduced and ordered ADD for $(f, \TRUE, [f\colon \theta])$ only has three nodes, an inner node with the variable $f$ and an output node for $\TRUE$ and $\FALSE$ respectively. This ADD will be lifted to the following MDP $\mathcal{M}^u_f$:
		$$\mathcal{M}^u_f \coloneqq (S, s_{init}, \mathit{Act}, P, AP, L)$$
		\begin{align*}
			S &=  \{f, \TRUE, \FALSE\} & s_{init} &= f \\
			\mathit{Act} &= \{l,r,d\} & P &= \{f\xrightarrow{d,\theta}\TRUE, f\xrightarrow{d,1 - \theta}\FALSE, \TRUE\xrightarrow{d,1}\TRUE, \FALSE \xrightarrow{d,1}\FALSE\} \\
			AP &= \{A,R, \TRUE, \FALSE\} & L &= \{f \mapsto \emptyset,\, \TRUE \mapsto \{\TRUE, A\},\, \FALSE \mapsto \{\FALSE, A\}\}
		\end{align*}
		This MDP only has a single memoryless scheduler $\Theta$, with $Pr^\Theta_{\mathcal{M}_f}(\lozenge \TRUE) = \theta$ and $Pr^\Theta_{\mathcal{M}_f}(\lozenge \FALSE) = 1 - \theta$, which lines up with the only distribution in $\llbracket \FLIP{\theta} \rrbracket$.\\

		\item $e =\NFLIP$: We have $\llbracket \NFLIP \rrbracket = \{p|\TRUE\rangle + (1-p)|\FALSE\rangle \;|\; p\in [0,1]\}$ and $\Gamma \vdash \NFLIP \colon \BOOL \leadsto (f, \TRUE, [f\colon n])$. 
		The reduced and ordered ADD for $(f, \TRUE, [f\colon n])$ only has three nodes, an inner node with the variable $f$ and an output node for $\TRUE$ and $\FALSE$ respectively. This ADD will be lifted to the following MDP $\mathcal{M}^u_f$:
		$$\mathcal{M}^u_f \coloneqq (S, s_{init}, \mathit{Act}, P, AP, L)$$
		\begin{align*}
			S &=  \{f, \TRUE, \FALSE\} & s_{init} &= f \\
			\mathit{Act} &= \{l,r,d\} & P &= \{f\xrightarrow{l,1}\TRUE, f\xrightarrow{r,1}\FALSE, \TRUE\xrightarrow{d,1}\TRUE, \FALSE \xrightarrow{d,1}\FALSE\} \\
			AP &= \{A,R, \TRUE, \FALSE\} & L &= \{f \mapsto \emptyset,\, \TRUE \mapsto \{\TRUE, A\},\, \FALSE \mapsto \{\FALSE, A\}\}
		\end{align*}
		Any memoryless schedulers $\Theta_p$ of this MDP always has the following structure with $p\in [0,1]$:
		$$ \Theta_p(\FALSE) = \Theta_p(\FALSE) = 1|d\rangle \qquad \Theta_p(f) = p|l\rangle \, + \, (1-p)|r\rangle $$
		Thus, we have $Pr^{\Theta_p}_{\mathcal{M}_f}(\lozenge \TRUE) = p$ and $Pr^{\Theta_p}_{\mathcal{M}_f}(\lozenge \FALSE) = 1-p$.
		It is now clear that for each scheduler $\Theta_p$ we can find a matching distributions in $\llbracket \NFLIP \rrbracket$ and vice versa.\\

		\item $e = \OBSERVE{a}$: Here, $a$ is an atomic expression from \AEXP, and thus either a value $v$ or a variable identifier $x$. 
		We assume that $a$ is a variable $x$, as the case where $a$ is a value $v$ is subsumed by this case. 
		
		Now assume $\Gamma \vdash x \colon \BOOL  \leadsto (F_\BOOL(x), \TRUE, []) = (x, \TRUE, [])$, which allows us to compile the observe statement as $\Gamma \vdash \OBSERVE{x} \colon \BOOL  \leadsto (\TRUE, x, [])$.
		Now let $v_x\colon \BOOL$ be the value substituted for $x$. 
		We make a case distinction on $v_x$:
		\begin{itemize}
			\item $v_x = \TRUE$: We have $\llbracket \OBSERVE{x}[x\mapsto \TRUE]\rrbracket = \llbracket \OBSERVE{\TRUE}\rrbracket = \{1|\TRUE\rangle\}$.
			The lifted MDP for the reduced ADD of $(\TRUE, x[x\mapsto \TRUE]\, []) = (\TRUE, \TRUE, [])$ only consists of a single state for $\TRUE$, and thus we trivially have $Pr^{\Theta}_{\mathcal{M}^u_o}(\lozenge \TRUE) = 1$ for all schedulers via \Cref{ValueCorrectnessLemma}.
			\item $v_x = \FALSE$: We have $\llbracket \OBSERVE{x}[x\mapsto \FALSE]\rrbracket = \llbracket \OBSERVE{\FALSE}\rrbracket = \{0|\TRUE\rangle\}$.
			The lifted MDP for the reduced ADD of $(\TRUE, [x\mapsto \FALSE]\, []) = (\TRUE, \FALSE, [])$ only consists of a single state for $R$, and thus we trivially have $Pr^{\Theta}_{\mathcal{M}^u_o}(\lozenge \TRUE) = 0$ and $Pr^{\Theta}_{\mathcal{M}^u_o}(\lozenge \FALSE) = 0$ for all schedulers. \\
		\end{itemize}

		\item $e = \ITE{a}{e_T}{e_F}$: Here, $a$ is an atomic expression from \AEXP, and thus either a value $v$ or a variable identifier $x$. 
		We assume that $a$ is a variable $x$, as the case where $a$ is a value $v$ is subsumed by this case. 

		We assume $\Gamma \vdash x \colon \BOOL \leadsto (x, \TRUE, [])$, as well as the compilations $\Gamma \vdash e_T \colon \tau \leadsto (\dot{\varphi}_T,\, \gamma_T,\, t_T)$ and $\Gamma \vdash e_F \colon \tau \leadsto (\dot{\varphi}_F,\, \gamma_F,\, t_F)$.
		We thus can conclude 
		$$\Gamma \vdash \ITE{x}{e_T}{e_F} \colon \tau \leadsto ( 
		(x \overset{\tau}{\land} \dot{\varphi}_T) \overset{\tau}{\lor} (\overline{x} \overset{\tau}{\land} \dot{\varphi}_F), \,
		(x \land \gamma_T) \lor (\overline{x}\land \gamma_F) ,\,
		t_T \,@\, t_F)$$
		Now let $v_x$ be the value substituted for $x$ and assume for the sake of readability that the substitutions on all other free variables have already been preformed.
		We now make a case distinction on the value of $v_x$.
		We assume $v_x = \TRUE$, the case for $v_x = \FALSE$ is fully analogous.
		We thus get $\llbracket ( \ITE{x}{e_T}{e_F})[x \mapsto \TRUE]\rrbracket = \llbracket  \ITE{\TRUE}{e_T}{e_F}\rrbracket = \llbracket e_T\rrbracket$. We further get:
		\begin{align*}
			&( 
			(x \overset{\tau}{\land} \dot{\varphi}_T) \overset{\tau}{\lor} (\overline{x} \overset{\tau}{\land} \dot{\varphi}_F), \,
			(x \land \gamma_T) \lor (\overline{x}\land \gamma_F) ,\,
			t_T \,@\, t_F) [x \mapsto \TRUE] \\
			&= ( 
			(\TRUE \overset{\tau}{\land} \dot{\varphi}_T) \overset{\tau}{\lor} (\overline{\TRUE} \overset{\tau}{\land} \dot{\varphi}_F), \,
			(\TRUE \land \gamma_T) \lor (\overline{\TRUE}\land \gamma_F) ,\,
			t_T \,@\, t_F)\\
			&= ( 
			\TRUE \overset{\tau}{\land} \dot{\varphi}_T, \,
			\TRUE \land \gamma_T ,\,
			t_T \,@\, t_F)\\
			&= ( 
			\dot{\varphi}_T, \,
			\gamma_T ,\,
			t_T \,@\, t_F)
		\end{align*}
		Note that the final expression above aligns with the compilation result for $e_T$, except for the longer trace of flips. 
		However, by design of the compilation process none of the variables in $t_F$ occur in the formulas $\dot{\varphi_T}$ and $\gamma_T$, meaning no variable in $t_F$ affects the value of $(\dot{\varphi_T} \,|\, \gamma_T)_{t_T \, @ \, t_F}$.
		Therefore any ADD for $(\dot{\varphi_T} \,|\, \gamma_T)_{t_T}$ is also an ADD for $(\dot{\varphi_T} \,|\, \gamma_T)_{t_T \, @ \, t_F}$, since any inner node $u$ with $\mathit{var}(u) \in t_F$ will always be a redundant  test.
		The behavior of the if-statement can thus be fully reduced to the behavior of $e_T$, both in the semantics and the compiled expression.
		The rest thus follows from the inductive hypothesis for $e_T$.\\

		\item $e = \LET{x = e_1}{e_2}$:
		Assume the two compilation judgments $\Gamma \vdash e_1 \colon \tau_1 \leadsto (\dot{\varphi}_1, \gamma_1, t_1)$ and $\Gamma \cup \{x \mapsto \tau_1 \} \vdash e_2 \colon \tau_2 \leadsto (\dot{\varphi}_2, \gamma_2, t_2)$. 
		We thus get: 
		$$\Gamma \vdash\LET{x = e_1}{e_2} \colon \tau_2 \leadsto (
		\varphi_2[x\overset{\tau_1}{\mapsto} \varphi_1], 
		\gamma_1 \land \gamma_2[x\overset{\tau_1}{\mapsto} \varphi_1],\, 
		t_1 \,@\, t_2)$$
		For notational simplicity, we now assume that substitutions to all free variables except for $x$ in $e_2$ have been made in both the expressions and the compiled formulas.
		
		By the inductive hypothesis for $e_1$, there exists an ADD $M^u_1$ for $(\dot{\varphi}_1, \gamma_1, t_1)$, so that the distributions generated by the schedulers for the lifted MDP for $M^u_1$ correspond to the distributions in $\llbracket e_1 \rrbracket$.
		Now consider the compiled formulas for $e_2$, where $x$ has been replaced by some value $v$.
		Those formulas will always have the form $(\dot{\varphi}_2[x \mapsto v], \gamma_2[x \mapsto v], t_2)$.
		By the inductive hypothesis for $e_2$, there exists a corresponding ADD $M^u_v$ for each $(\dot{\varphi}_2[x \mapsto v], \gamma_2[x \mapsto v], t_2)$.
		
		We now construct the ADD $M^ u_e$ for $e$ by starting with $M^u_1$, and redirecting all edges into terminal nodes with value $v \neq R$ to the root of the corresponding ADD $M^u_v$.
		We now lift $M^u_e$ to an MDP, and prove the desired property:
		\begin{itemize}
			\item Let $d \in \llbracket e \rrbracket$ be a distribution. 
			Then, by definition of $\llbracket e \rrbracket$ there exist distributions $d_1 \in \llbracket e_1 \rrbracket$ and $\forall v', d_1(v') > 0: \,\exists d_{v'} \in \llbracket e_2[x \mapsto v'] \rrbracket$, so that:
			$ \forall v : d(v) = \sum\limits_{v',\;d_1(v') > 0} d_1(v') \cdot d_{v'}(v) $.
			By the inductive hypothesis, there exist (memoryless) schedulers for $M^u_1$ and all the $M^u_v$ so that:
			\begin{align*}
				\sum\limits_{v',\;d_1(v') > 0} d_1(v') \cdot d_{v'}(v) &=_{IH} \sum\limits_{v',\;d_1(v') > 0} Pr^{\Theta_1}_{\mathcal{M}^u_1}(\lozenge v') \cdot Pr^{\Theta_{v'}}_{\mathcal{M}^u_{v'}}(\lozenge v)\\
				&= Pr^{\Theta_e}_{\mathcal{M}^u_e}(\lozenge v)
			\end{align*}
			The second equation above can be justified by considering how we constructed $M^u_e$ and it's structure as a DAG:
			In $M^u_e$, there is (by design) no overlap between any inner nodes of $M^u_1$ and all the $M^u_v$, thus we can build the scheduler $\Theta_e$ by "composing" the (memoryless) schedulers for the individual MDPs.
			This "composition" of schedulers to construct $\Theta_e$ essentially behaves like $\Theta_1$ while we are in states which belong to $M^u_1$ and like $\Theta_w$ if we are in states which belong to $M^u_w$.
			Furthermore, any path in $M^u_e$ from the initial state to a state $v$ has to pass through $M^u_1$ within $M^u_e$ first, until $M^u_1$ "terminates" in some value $v'$ with $d_1(v') > 0$ and then continue with $M^u_{v'}$, making this splitting of $\lozenge v$ within $M^u_e$ sound.
			Notably, there could be some states which are not covered by these schedulers, as we only consider sub-MDPs $\mathcal{M}^u_{w}$ with $d_1(w) > 0$.
			However, all of those states are unreachable and $\Theta_e$ can soundly pick arbitrary actions for them.
			\item The second point is analogous to the first point: 
			We consider a memoryless scheduler for $M^u_e$ and decompose it into memoryless schedulers for $M^u_1$ and all the $M^u_{v'}$.
			By the inductive hypothesis these schedulers correspond to some distributions generated by $e_1$ and all the $e_2[x \mapsto v']$, which allows us to construct a matching distribution $d_e \in  \llbracket e\rrbracket$.\\
		\end{itemize}
		
		One thing remains to show, namely that the constructed ADD $M^u_e$ is actually an ADD for the compiled formula $(\varphi_2[x\overset{\tau_1}{\mapsto} \varphi_1], 
		\gamma_1 \land \gamma_2[x\overset{\tau_1}{\mapsto} \varphi_1],\, 
		t_1 \,@\, t_2)$.
		It should be clear that $M^u_e$ adheres to the variable ordering defined by $t_1 \,@\, t_2$.
		Now let $k = len(t_1)$ and let $l = len(t_2)$.
		We will now consider some boolean vector $b = (b_1,\dots, b_k, b_{k + 1}, \dots, b_{k + l})$ as an assignment from flips in $t_1 \,@\, t_2$ to $\TRUE$ or $\FALSE$. 
		We will show that the constructed ADD $M^u_e$ evaluated using $b$ returns the same value $v$ as the function for the compiled formulas evaluated with $b$. 
		To this end, we have to consider the definitions from \Cref{sec:ndice_add}, which define the function for the compiled formulas as follows:
		$$\left(\varphi_2[x\overset{\tau_1}{\mapsto} \varphi_1] \middle|  
		\gamma_1 \land \gamma_2[x\overset{\tau_1}{\mapsto} \varphi_1]\right)_{t_1 @ t_2}$$
		
		We will begin by dividing $b$ into two vectors $b' = (b_1,\dots, b_k)$ and $b'' = (b_{k+1},\dots, b_{k+l})$.
		We can now consider $(\dot{\varphi}_1 | \gamma_1)_{t_1} (b')$, which will be either a value $v'$ or the special value $R$.
		\\
		
		Assume now that $(\dot{\varphi}_1 | \gamma_1)_{t_1} (b')$ evaluated to some value $v'$, then we can conclude that $\gamma_1  (b') = \TRUE$ and $\varphi_1  (b') = v'$.
		Next, we simplify the function for the compiled formulas by substituting the variables from $t_1$ with the values from $b'$, which results in the following equation:
		$$
			\left(\varphi_2[x\overset{\tau_1}{\mapsto} \varphi_1] \middle|  
			\gamma_1 \land \gamma_2[x\overset{\tau_1}{\mapsto} \varphi_1]\right)_{t_1 @ t_2} (b)
			=
			\left(\varphi_2[x\overset{\tau_1}{\mapsto} v'] \middle|  
			\gamma_2[x\overset{\tau_1}{\mapsto} v']\right)_{t_2} (b'')
		$$
		Next, we will also partially evaluate the ADD $M^u_e$ with $b'$.
		If we now consider the structure of $M^u_e$, then we note that we first have to step through the ADD $M^u_1$, which by the inductive hypothesis is an ADD for $(\dot{\varphi}_1 | \gamma_1)_{t_1}$.
		Thus, evaluating $M^u_1$ with $b'$ will necessarily lead to the output node $v_1$ as well, but in the case of $M^u_e$ this output node has been replaced by the ADD $M^u_{v'}$.
		Finally, by the inductive hypothesis $M^u_{v'}$ is an ADD for the function $\left(\varphi_2[x\overset{\tau_1}{\mapsto} v'] \middle|  
		\gamma_2[x\overset{\tau_1}{\mapsto} v']\right)_{t_2}$, which means both $M^u_{v'}$ and this function will return the same value for $b''$.
		Thus, $M^u_e$ and the compiled formulas for $e$ return the same value for $b$.
		\\
		
		We now consider the case $(\dot{\varphi}_1 | \gamma_1)_{t_1} (b') = R$.
		We can conclude that $\gamma_1 (b') = \FALSE$, which in particular means $\left(\gamma_1 \land \gamma_2[x\overset{\tau_1}{\mapsto} \varphi_1]\right)(b) = \FALSE$.
		Therefore, the function for the compiled formulas as a whole will evaluate to $R$.
		Similarly to the previous case, we can reduce the evaluation of  $M^u_e$ to the ADD $M^u_1$.
		Again by the inductive hypothesis $M^u_1$ will also reach an output node containing $R$, which remains an output node in $M^u_e$.

	\end{itemize}
\end{proof}

\subsection{Correctness of Normalization and Reduction}

The previous section showed that there exists some ADD for the compiled formula which can be lifted to an unnormalized MDP, so that the unnormalized semantics align with the reachability probabilities by the schedulers in the MDP.
This section will build towards the actual soundness theorem in two steps:
We first show that the compiled formulas can be lifted to a normalized MDP which then corresponds to the normalized semantics of the compiled expression.
Afterwards we show that all reduction operations on ADDs can be applied without changing the maximum reachability probabilities in the MDP, which allows us to use the reduced ADD instead.

	We will not immediately continue with the soundness proof of the \NDICE inference process, but instead we will first introduce the notion of \textit{normalized} MDPs.
	Normalized MDPs will be a useful tool to allow us to reason about ordinary reachability probabilities instead of conditional reachability probabilities when we consider the soundness of reduction operation on Decision Diagrams relative to the lifted MDP.
	To this end, normalized MDPs are constructed from \textit{unnormalized} MDPs (see \ref{sec:MDP_inf}) as follows:
	
	We first remind ourselves that the ultimate goal of our inference process is to determine probabilities of the form $Pr^{max}_{\mathcal{M}^u_e} (\lozenge v \;|\; \lozenge A)$.
	Notably, \citet{baier2014computing} showed how these conditional reachability properties can be reduced to ordinary reachability properties by transforming the underlying MDP.
	However, we will not use their work directly, as our MDPs offer a lot of structure to exploit for an easier transformation.
	Instead, we use a simplified transformation which was inspired by \citet{junges2021runtime}.
	
	The key idea is to re-distribute the probability mass of paths which do not reach an accepting state, to paths which do reach an accepting state.
	To this end, we will mimic rejection sampling: 
	Whenever we reach a state $s_r$ with $R\in L(s_r)$ , i.e.\ a state which corresponds to the output $R$ in the ADD, then we "restart the execution" and return to the initial state of the MDP.
	We thus lift an ADD for $(\dot{\varphi}, \gamma, t)$ to a \textit{normalized} MDP $\mathcal{M}_e$ by lifting it to the \textit{unnormalized} MDP $\mathcal{M}^u_e$, and replacing the transition $s_r \xrightarrow{d, 1} s_r$ with $s_r \xrightarrow{d, 1} s_{init}$.
	Notably, this minor transformation is sufficient to show that for any memoryless scheduler $\Theta$ we have $Pr^{\Theta}_{\mathcal{M}^u_e} (\lozenge v \;|\; \lozenge A) = Pr^{\Theta}_{\mathcal{M}_e} (\lozenge v)$. 
	We formally prove this later as Equation~\ref{Equation2}.

\begin{lemma}[Normalized Correspondence]\label{NormalizedCorrespondenceLemma}
	Let $e$ be a \NDICE expression without function calls 
	and let $\{x_i\colon \tau_i\} \vdash e \colon \tau \leadsto (\dot{\varphi}, \gamma, t)$. Then given any values $\{v_i\colon \tau_i\}$ there exists an ADD for $(\dot{\varphi}[x_i\mapsto v_i], \gamma[x_i\mapsto v_i], t)$ that can be lifted to a \textit{normalized} MDP $\mathcal{M}_e$ so that for all $v \colon \tau$ we have:
	$$ Pr^{max}_{M_e} (\lozenge v) = \max \{d(v) \,|\, d \in \llbracket e[x_i\mapsto v_i]\rrbracket_D\}$$
\end{lemma}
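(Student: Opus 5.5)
We build on \Cref{UnnormalizedCorrespondence}. Take the ADD it provides for $(\dot{\varphi}[x_i\mapsto v_i], \gamma[x_i\mapsto v_i], t)$, with unnormalized lift $\mathcal{M}^u_e$, and let $\mathcal{M}_e$ be the normalized lift obtained by replacing $s_r \xrightarrow{d,1} s_r$ with $s_r \xrightarrow{d,1} s_{init}$ for every $R$-labelled state $s_r$. Both MDPs have the same states and actions, and every non-terminal state keeps its enabled actions. Hence the memoryless schedulers of $\mathcal{M}^u_e$ and $\mathcal{M}_e$ are in bijection; we identify them and write $\Theta$ for both. The proof then rests on two equalities for every memoryless $\Theta$:
\begin{align*}
Pr^{\Theta}_{\mathcal{M}_e}(\lozenge v) = Pr^{\Theta}_{\mathcal{M}^u_e}(\lozenge v \mid \lozenge A)
\qquad\text{and}\qquad
Pr^{\Theta}_{\mathcal{M}^u_e}(\lozenge v \mid \lozenge A) = \frac{d(v)}{d_D},
\end{align*}
where $d$ is the distribution matched to $\Theta$ by \Cref{UnnormalizedCorrespondence}.

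\textbf{First equality.} Write $a = Pr^{\Theta}_{\mathcal{M}^u_e}(\lozenge A)$, $q = Pr^{\Theta}_{\mathcal{M}^u_e}(\lozenge R)$ and $p_v = Pr^{\Theta}_{\mathcal{M}^u_e}(\lozenge v)$.
\begin{itemize}
\item The unnormalized MDP is a DAG apart from terminal self-loops, so every path eventually reaches a terminal. This gives $a + q = 1$.
\item Every $v$-labelled state is also labelled $A$, so $\lozenge v \land \lozenge A = \lozenge v$. The conditional probability is therefore $p_v / a$.
\item In $\mathcal{M}_e$, paths to $v$ decompose by the number $k$ of restarts through $R$. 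Because $\Theta$ is memoryless, each round behaves identically. Summing the geometric series gives $Pr^{\Theta}_{\mathcal{M}_e}(\lozenge v) = \sum_{k\ge 0} q^k p_v = p_v/(1-q) = p_v/a$ whenever $a>0$.
\item If $a=0$, then $p_v=0$ and both sides are $0$, matching the convention that the fraction is $0$.
\end{itemize}
The same calculation is the Equation~\ref{Equation2} that the paper promises.

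\textbf{Second equality.} Since $d(v') = Pr^\Theta_{\mathcal{M}^u_e}(\lozenge v')$ for all values $v'$, we get $d_D = \sum_{v'} d(v') = a$. Then $d(v)/d_D = p_v/a$, which is exactly the normalized distribution $d'\in\llbracket e[x_i\mapsto v_i]\rrbracket_D$ evaluated at $v$. The case $d_D = 0$ matches $a=0$ under both fraction-is-zero conventions.

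\textbf{Equality of maxima.} Both directions of \Cref{UnnormalizedCorrespondence} give a surjection in each direction between memoryless schedulers and elements of $\llbracket e[x_i\mapsto v_i]\rrbracket$. Combined with the two equalities, the sets $\{Pr^\Theta_{\mathcal{M}_e}(\lozenge v)\}_\Theta$ and $\{d'(v) \mid d'\in\llbracket e[x_i\mapsto v_i]\rrbracket_D\}$ coincide as sets of reals. Hence their suprema agree.

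\textbf{Main obstacle.} The subtle step is the maximum itself. $Pr^{max}_{\mathcal{M}_e}$ is defined over memoryless schedulers, so we need the supremum over them to be attained. For the unconditional reachability question in the finite normalized MDP, optimal memoryless deterministic schedulers exist \cite{forejt2011automated}, so the supremum on the left is a maximum. Because the two sets are equal, the maximum on the right is attained as well. The other point that needs care is the geometric-series argument for a memoryless scheduler on the cyclic $\mathcal{M}_e$. I would make it rigorous by partitioning $\lozenge v$ according to the number of visits to $s_{init}$.
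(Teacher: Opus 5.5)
Your proposal is correct and follows essentially the same route as the paper. The paper also takes the ADD from \Cref{UnnormalizedCorrespondence}, proves your second equality as its Equation~(I) and your first equality, via the same $p_v, p_a, p_r$ geometric-series argument, as its Equation~(II), and concludes via optimality of memoryless schedulers for ordinary reachability in the normalized MDP. Your added observation, that this same optimality result is what guarantees the maximum on the semantic side is attained, is a small but welcome extra precision.
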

\begin{proof}
	This proof will be divided into multiple steps:
	We first connect the conditional reachability probabilities of memoryless schedulers in the unnormalized MDP to the normalized semantics of the compiled expression via Equation \ref{Equation1}.
	Afterwards, we show that these conditional reachability probabilities of memoryless schedulers in the unnormalized MDP are equal to ordinary reachability probabilities for memoryless schedulers in the normalized MDP with Equation \ref{Equation2}.
	Finally, we use the two equations and some standard facts from MDP theory to prove the theorem at the very end.\\
	
	By \Cref{UnnormalizedCorrespondence}, there exists an ADD $\mathcal{A}$ for $(\dot{\varphi}[x_i\mapsto v_i], \gamma[x_i\mapsto v_i], t)$ which can be lifted to an unnormalized MDP $\mathcal{M}^u_e$ that corresponds to the unnormalized semantics of $\llbracket e[x_i\mapsto v_i] \rrbracket$.
	We will begin by showing that the normalized semantics of $e$ are connected to conditional reachability probabilities in $\mathcal{M}^u_e$:
	
	Now let $d' \in \llbracket e[x_i\mapsto v_i] \rrbracket_D$ be a normalized distribution, then by the definition of the normalized semantics there is some $d \in \llbracket e[x_i\mapsto v_i] \rrbracket$ with $d'(v) = \frac{1}{d_D} d(v)$.
	By \Cref{UnnormalizedCorrespondence}, there exists a memoryless scheduler $\Theta$ with $d(v) = Pr^\Theta_{\mathcal{M}^u_e}(\lozenge v)$ for all values $v$.
	By definition of the normalizing constant $d_D$ for $d$, we have $d_D = \sum_{v} d(v)$, and by construction of $\mathcal{M}^u_e$ we thus have:
	$$d_D = \sum_{v} d(v) = \sum_{v \neq R} Pr^\Theta_{\mathcal{M}^u_e}(\lozenge v) = Pr^\Theta_{\mathcal{M}^u_e}(\lozenge A)$$
	We can now make use of $Pr^\Theta_{\mathcal{M}^u_e}(\lozenge v \land \lozenge A) = Pr^\Theta_{\mathcal{M}^u_e}(\lozenge v)$, which follows from the labeling function of $\mathcal{M}^u_e$, to show the following:
	\begin{align*}
		d' (v) &= \frac{1}{d_D} d(v)\\ 
		&= \frac{1}{d_D} Pr^\Theta_{\mathcal{M}^u_e}(\lozenge v)\\ 
		&= \frac{1}{Pr^\Theta_{\mathcal{M}^u_e}(\lozenge A)} Pr^\Theta_{\mathcal{M}^u_e}(\lozenge v) \\
		&= \frac{Pr^\Theta_{\mathcal{M}^u_e}(\lozenge v \land \lozenge A) }{Pr^\Theta_{\mathcal{M}^u_e}(\lozenge A)} \\
		&= Pr^\Theta_{\mathcal{M}^u_e}(\lozenge v \,|\, \lozenge A)
	\end{align*}
	Note that in the case of $d_D = Pr^\Theta_{\mathcal{M}^u_e}(\lozenge A) = 0$ both the conditional reachability and the normalized distribution are defined to be $0$.
	Therefore, for every distribution $d' \in \llbracket e[x_i\mapsto v_i] \rrbracket_D$ there exists a corresponding memoryless scheduler $\Theta$ so that $d'(v)$ lines up with the conditional reachability probabilities for $\Theta$ in $\mathcal{M}^u_e$.
	Note that we can also prove the opposite direction, i.e. for every memoryless scheduler $\Theta$ there exists a corresponding distribution $d' \in \llbracket e[x_i\mapsto v_i] \rrbracket_D$.
	This direction follows as \Cref{UnnormalizedCorrespondence} provides a matching distribution in the unnormalized semantics for every memoryless scheduler, and afterwards we can use the same reasoning steps in the opposite direction.
	Therefore, we arrive at the following equation for all $v\in \llbracket \tau \rrbracket$:
	\begin{align*}
		 \max\limits_{\Theta \text{ memoryless}} Pr^{\Theta }_{\mathcal{M}^u_e} (\lozenge v \,|\, \lozenge A) = \max \{d(v) \,|\, d \in \llbracket e[x_i\mapsto v_i]\rrbracket_D\} \tag{I}\label{Equation1}
	\end{align*}

	Next up, we will link the conditional reachability probabilities in the unnormalized MDP to the ordinary reachability probabilities in the normalized MDP.
	To this end, we will once again consider the ADD $\mathcal{A}$, which we previously lifted to the unnormalized MDP $\mathcal{M}^u_e$, and we will now lift it to the normalized MDP $\mathcal{M}_e$.
	Now consider both $\mathcal{M}^u_e$ and $\mathcal{M}_e$:
	They share the same states, initial state, labeling function and even the transitions are almost identical.
	The only difference can be found in states $s_r$ which were lifted from terminal nodes $u$ in the ADD with $\mathit{val}(u) = R$, where $\mathcal{M}^u_e$ has the transition $s_r \xrightarrow{d, 1} s_r$ and $\mathcal{M}_e$ has the transition $s_r \xrightarrow{d, 1}s_{init}$.
	Furthermore, schedulers simply map states to distributions over actions, and $\mathcal{M}^u_e$ and $\mathcal{M}_e$ share the same set of states and enabled actions for each state.
	Therefore, we are able to use any memoryless scheduler $\Theta$ for $\mathcal{M}^u_e$ as a memoryless scheduler for $\mathcal{M}_e$.
	Thus, we will prove that for any memoryless scheduler $\Theta$ and value $v \neq R$ we have:
	\begin{align*}
		Pr^{\Theta }_{\mathcal{M}^u_e} (\lozenge v \,|\, \lozenge A) = Pr^{\Theta }_{\mathcal{M}_e} (\lozenge v) \tag{II}\label{Equation2} 
	\end{align*}
	In the following we assume $Pr^{\Theta }_{\mathcal{M}^u_e} (\lozenge A) > 0$ to avoid division by $0$.
	Should this assumption not hold then both sides of the equation are $0$ and the equation holds trivially.
	
	Now let $\Theta$ be some memoryless scheduler for $\mathcal{M}^u_e$ and $\mathcal{M}_e$, and let $v \neq R$ be some value.
	In order to prove \ref{Equation2}, we will consider a path through the MDP $\mathcal{M}^u_e$ for $\Theta$.
	All paths start in the initial state, and eventually reach a state which was lifted from a terminal node of the original ADD. 
	We now distinguish three outcomes for these paths, depending on the kind of trap state of $\mathcal{M}^u_e$ where the path gets trapped:
	\begin{itemize}
		\item We can reach a trap state $s_v$ with $v \in L(s_v)$, i.e. we reach a state representing the desired value
		\item We can reach a trap state $s_a$ with $v \notin L(s_a)$ and $A \in L(s_a)$, i.e. we reach a state representing a different value, but still an accepting run
		\item We can reach a trap state $s_r$ with $R \in L(s_r)$, i.e. we reach a state representing a rejected run
	\end{itemize}
	By construction of $\mathcal{M}^u_e$, these three cases are all mutually exclusive.
	Notably, $\Theta$ fully resolves the non-determinism in $\mathcal{M}^u_e$, and we can thus associate a probability to each of these outcomes.
	We write $p_{v}$ for the probability to reach a trap state $s_v$ with $v \in L(s_v)$, and similarly we will define $p_{a}$ and $p_{r}$.
	With these probabilities defined, we now get:
	$$ Pr^{\Theta }_{\mathcal{M}^u_e} (\lozenge v \,|\, \lozenge A) 
	= \frac{Pr^{\Theta }_{\mathcal{M}^u_e} (\lozenge v \land \lozenge A)}{Pr^{\Theta }_{\mathcal{M}^u_e} ( \lozenge A)} 
	= \frac{Pr^{\Theta }_{\mathcal{M}^u_e} (\lozenge v)}{Pr^{\Theta }_{\mathcal{M}^u_e} ( \lozenge A)} 
	= \frac{p_{v}}{p_{v} + p_a}$$
	Now consider a path through the normalized MDP $\mathcal{M}_e$ for $\Theta$.
	In particular, note that every path through the normalized MDP $\mathcal{M}_e$ can be divided into multiple runs through the underlying ADD, where every time we end up in a state $s_r$ with $R \in L(s_r)$ we return to the initial state.
	This process repeats until we reach either a state $s_v$ with $v \in L(s_v)$ representing the desired value or an accepting state $s_a$ with $v \notin L(s_a)$ and $A \in L(s_a)$.
	Notably, the probabilities to reach such states $s_v$, $s_a$ or $s_r$ on a single root-to-terminal-node path through the underlying ADD are the same as they were for $\mathcal{M}^u_e$, which is reasonable as the inner structure of the unnormalized and normalized MDP are the same.
	To this end, we once again consider the probabilities $p_{v}$, $p_{a}$ and $p_{r}$.
	
	Now consider the paths in $\lozenge v$. 
	All of these paths are made up by a series of rejected runs through the ADD, i.e. runs that result in the value $R$, followed by a single accepting run.
	With regard to the MDP, this means these paths may reach a state $s_r$ with $R \in L(s_r)$ arbitrarily often, and then eventually they reach a state $s_v$ with $v \in L(s_v)$. 
	Therefore, the probability to reach a state labeled with $v$ after reaching a state labeled $R$ exactly $i$ times, can be determined as $p_v \cdot p^i_r$.
	We can now determine the probability of $\lozenge v$ by considering the sum over all values for $i$.
	We thus get:
	\begin{align*}
		Pr^\Theta_{\mathcal{M}_e}(\lozenge v) &= \sum\limits_{i = 0}^{\infty} p_v \cdot p^i_r \\
		&=  p_v \cdot \sum\limits_{i = 0}^{\infty}  p^i_r \\
		&= p_v \cdot \frac{1}{1- p_r} \\
		&= \frac{p_v}{p_v + p_a}
	\end{align*}
	Above we made use of the fact $p_v + p_a + p_r = 1$ which follows from their definition and the structure of the MDP.
	We now have proven that both sides of Equation~\ref{Equation2} are equal to the same quotient, therefore the equation holds.\\
	
	At this point, there is one concern we have not addressed explicitly yet:
	Are memoryless schedulers truly sufficient?
	It is well-known that deterministic, memoryless schedulers are sufficient for ordinary reachability probabilities, see for example \citet{forejt2011automated}.
	However, the same is not the case for conditional reachability probabilities, where sometimes finite memory schedulers are required to maximize conditional reachability probabilities \cite{andres2008conditional}.
	Nonetheless, for us memoryless schedulers are sufficient in this context as well:
	 
	Equation~\ref{Equation1} links the distributions in the normalized semantics to memoryless schedulers, no finite memory schedulers are needed here.
	Similarly, Equation~\ref{Equation2} links every memoryless scheduler in the normalized MDP to a memoryless scheduler in the unnormalized MDP and vice versa.
	Thus, all contexts which address conditional reachability probabilities in our work only require memoryless schedulers, and when we maximize ordinary reachability probabilities then memoryless schedulers are already known to be sufficient.
	Finally, we can connect the things we have shown so far to prove the desired property.
	\begin{align*}
		Pr^{max}_{M_e} (\lozenge v) 
		&=  \max\limits_{\Theta}\; Pr^{\Theta }_{\mathcal{M}_e} (\lozenge v) & \text{Def. of $Pr^{max}$}\\
		&= \max\limits_{\Theta \text{ memoryless}} Pr^{\Theta }_{\mathcal{M}_e} (\lozenge v)  & \text{memoryless optimality}\\
		&= \max\limits_{\Theta \text{ memoryless}} Pr^{\Theta }_{\mathcal{M}^u_e} (\lozenge v \,|\, \lozenge A) & \text{Equation \ref{Equation2}} \\
		&= \max \{d(v) \,|\, d \in \llbracket e[x_i\mapsto v_i]\rrbracket_D\} & \text{Equation \ref{Equation1}}
	\end{align*}
\end{proof}

\begin{lemma}[Reduction Soundness]\label{ReductionSoundnessLemma}
	All reduction operations on ADDs preserve optimal maximum reachability when lifted to a normalized MDP
\end{lemma}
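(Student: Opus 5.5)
We prove the lemma separately for each of the three reduction rules of \Cref{sec:prelims_ADD}: removing redundant tests, merging duplicate tests, and merging duplicate terminal nodes. In each case, let $\mathcal{A}$ be an ADD for $(\dot{\varphi}, \gamma, t)$ and let $\mathcal{A}'$ be the result of applying a single reduction step. Write $\mathcal{M}_e$ and $\mathcal{M}'_e$ for the normalized MDPs lifted from them. We show $Pr^{max}_{\mathcal{M}_e}(\lozenge v) = Pr^{max}_{\mathcal{M}'_e}(\lozenge v)$ for every $v$. Since every reduction sequence terminates in the reduced ADD, induction on the number of steps then yields the lemma.

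The main tool is the Bellman characterization of maximal reachability. On the acyclic part of the MDP, the value $x_s = Pr^{max}(\lozenge v)$ from a state $s$ satisfies three equations:
\begin{itemize}
\item $x_s = \theta\, x_{\mathit{then}(s)} + (1-\theta)\, x_{\mathit{else}(s)}$ at probabilistic states;
\item $x_s = \max(x_{\mathit{then}(s)}, x_{\mathit{else}(s)})$ at nondeterministic states;
\item $x_s = [v \in L(s)]$ at accepting terminals, and $x_{s} = x_{s_{init}}$ at rejecting terminals.
\end{itemize}
Because of the reset edge, the value vector is the least fixed point of this system, not just a bottom-up computation.

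The easy cases are the following.
\begin{itemize}
\item \emph{Merging terminals with equal values.} The two states carry identical labels and identical successors, whether a self-loop or a reset to $s_{init}$. Their values therefore coincide in every fixed point, and redirecting edges between them changes nothing.
\item \emph{Merging duplicate tests $u,w$.} The two nodes share $\mathit{var}$, $\mathit{then}$ and $\mathit{else}$. Hence their lifted states have identical enabled actions and identical transition distributions. Any memoryless scheduler can be made to agree on $u$ and $w$ by taking the better of the two choices, and the induced values are equal.
\item \emph{Removing a redundant test $u$ with $\mathit{then}(u)=\mathit{else}(u)=s$.} Whether $u$ is probabilistic or nondeterministic, every action leads to $s$ with total probability $1$. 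Bypassing $u$ therefore preserves $x_u = x_s$.
\end{itemize}
In each case the map between states of $\mathcal{M}_e$ and $\mathcal{M}'_e$ sends a solution of the Bellman system to a solution, and vice versa. Because the system is a least fixed point, the map must preserve least fixed points, and hence the value at the root.

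I expect the main obstacle to be the interaction with the reset edges and with the root itself. A reduction may remove or merge the root, for example when the root is a redundant test. In that case $s_{init}$ changes and the reset targets must be redirected to the new root. I would handle this with an explicit argument at the scheduler level, rather than relying on the Bellman argument alone. For any memoryless scheduler $\Theta$, define a scheduler $\Theta'$ on the reduced MDP:
\begin{itemize}
\item restrict $\Theta$ to the surviving states;
\item at a merged state, pick whichever of the two original choices is optimal.
\end{itemize}
Then show that the single-run probabilities $p_v, p_a, p_r$ (from the proof of \Cref{NormalizedCorrespondenceLemma}) either stay the same or, for the max, do not decrease. By Equation~\ref{Equation2}, the value equals $p_v/(p_v+p_a)$, a function of the single-run triple only. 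The converse direction lifts a scheduler of $\mathcal{M}'_e$ back to $\mathcal{M}_e$, which is immediate by copying choices. Together with the memoryless optimality already used in \Cref{NormalizedCorrespondenceLemma}, the two directions give equality of the maxima.
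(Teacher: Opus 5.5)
Your proposal is correct and follows essentially the paper's route: a Bellman-equation argument, case by case over the three reduction rules, showing that the removed node and the node receiving its redirected edges satisfy the same equation, so the least solution (and hence the value at the root) is unchanged. The scheduler-level detour for the root is not needed, because the reset edges are just edges into the root and are redirected along with all the others, so the same fixed-point transfer already covers a change of $s_{init}$; as phrased it would also need repair, since the objective $p_v/(p_v+p_a)$ is a ratio: for merged duplicates the right justification is that the original ratio is a weighted mediant of the two ``uniform-choice'' ratios, not that the single-run probabilities do not decrease.
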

\begin{proof}
	Let $M$ be a normalized MDP lifted from some ADD which contains redundancies as defined in \Cref{sec:ndice_add}. 
	Within this proof, we will use an ADD node $u$ to refer to the corresponding state in the lifted MDP.
	Now consider $Pr^{max}_M (\lozenge v)$ for some value $v \neq R$.
	
	For this proof, we consider reachability properties on MDPs as characterized through \textit{Bellman equations} \cite{forejt2011automated}.
	To this end, we first need to define $S^0 \coloneqq \{s \in S \,|\, Pr^{max}_M (\lozenge v) = 0\}$ to characterize the set of states from which no state labeled $v$ can be reached.
	Note that $S^0$ can be determined using graph based algorithms such as e.g. breadth first search, as it suffices to know if transitions are possible, while their probabilities can be ignored \cite{forejt2011automated}.
	Afterwards, the Bellman equations themselves are defined as follows, where $x(s)$ describes the probability of reaching a state labeled $v$ from the state $s$.
	$$ x(s) = \begin{cases}
		1 &, v \in L(s)\\
		0 &, s \in S^0\\
		\max\limits_{a \in \mathit{Act(s)}} \sum\limits_{s' \in S} P(s, a)(s') \cdot x(s') &,\text{otherwise}
	\end{cases}$$
	$Pr^{max}_M (\lozenge v)$ is then the value for the initial state by the \textit{least} solution to the system of equations described by the Bellman Operator.
	All reduction operations on ADDs take the same form:
	They identify a redundant node $u$ in the ADD as defined by the restrictions from \Cref{sec:ndice_add}, and then remove that redundancy by redirecting all edges going into $u$ to a different node $w$. 
	Here, $w$ is either a duplicate of $u$ in the case of the "no duplicates" rules, or on of the successors of $u$ in the case of the "no redundant tests" rule.
	
	The key idea of this proof is to show that the two nodes $u$ and $w$ have the same Bellman equations, which means we can redirect all edges going into $u$ to $w$ without affecting the reachability probability. 
	Simply put, if the equation for $x(u)$  is the same as the equation for $x(w)$, then it does not matter if other nodes refer to $x(u)$ or $x(w)$ in their Bellman equation, and the redundancies can be removed without affecting the solution to the overall system of equations.
	We now consider the different reduction operations on ADDs:
	For the sake of simplicity, we will refer to states in the MDP by the corresponding ADD node they were lifted from.
	
	\begin{itemize}
		\item No-redundant tests: Assume the ADD contains an inner node $u$ with $\mathit{else}(u) = \mathit{then}(u)$, then we remove $u$ and redirect all edges going into $u$ to $\mathit{else}(u)$.
		
		We make a case distinction on $u \in S^0$.
		If we have $u \in S^0$, then $\mathit{else}(u) \in S^0$ also has to hold. 
		Therefore, we get $x(u) = 0 = x(\mathit{else}(u))$.
		Now assume $u\notin S^0$:
		
		We consider the Bellman equation for $u$, depending on whether $u$ contains a probabilistic or nondeterministic variable:
		\begin{itemize}
			\item $u$ is probabilistic with variable $f_\theta$, thus $\mathit{Act(u)} = \{d\}$ with the following Bellman equation:
			$$x(u) = \theta\cdot x(\mathit{then}(u)) + (1 - \theta)\cdot x(\mathit{else}(u)) = x(\mathit{else}(u))$$
			\item $u$ is non-deterministic: thus $\mathit{Act(u)} = \{l,r\}$ with the following Bellman equation:
			$$x(u) = \max \{x(\mathit{then}(u)), \; x(\mathit{else}(u))\} = \max \{x(\mathit{else}(u)),\; x(\mathit{else}(u))\} = x(\mathit{else}(u))$$
		\end{itemize} 
		Thus we get $x(u) = x(\mathit{else}(u))$, and we can thus replace $u$ with $\mathit{else}(u)$ in the Bellman Operator equation for all other states, which is equivalent to redirecting the transitions from $u$ to $\mathit{else}(u)$ in the MDP.\\

		\item No duplicate tests: If for $u \neq w$ with $\mathit{var}(u) = \mathit{var}(w)$, $\mathit{else}(u) = \mathit{else}(w)$ and $\mathit{then}(u) = \mathit{then}(w)$ then we can redirect all edges going into $u$ to $w$.
		
		We make a case distinction on $u \in S^0$.
		If we have $u \in S^0$, then $\mathit{then}(u) \in S^0$ and $\mathit{else}(u) \in S^0$ also have to hold, and we can conclude $w \in S^0$.
		Therefore, we get $x(u) = 0 = x(w)$.
		Now assume $u\notin S^0$:
		
		We once again make a case distinction on whether $u$ contains a probabilistic or nondeterministic variable, and consider the equations generated by the Bellman Operator:
		In the probabilistic case, we get $x(u) = \theta \cdot x(\mathit{then}(u)) + (1- \theta)\cdot x(\mathit{else}(u)) = x(w)$, meaning $u$ always behaves exactly like $w$.
		
		In the nondeterministic case, we get the equations $x(u) = \max\{x(\mathit{then}(u)),\; x(\mathit{else}(u))\}$ and $x(w) = \max\{x(\mathit{then}(u)),\; x(\mathit{else}(u))\}$.  
		Once again both states adhere to the same equation. \\

		\item No duplicate outputs: Given two terminal nodes in the ADD  $u$ and $w$ with $val(u) = val(w)$ and $u \neq w$, we can redirect all edges going into $u$ to $w$.
		
		We make a case distinction on the value in the leaf node:
		\begin{itemize}
			\item $val(u) = v$: In this case we have $v \in L(u)$ and therefor also $v \in L(w)$.
			We can therefore conclude $x(u) = 1 = x(w)$.
			\item $val(u) = v'$ with $v' \neq v$ and $v' \neq R$: In this case $u$ is a trap state and unable to reach a state labeled with $v$.
			We thus have $u \in S^0$ and $w \in S^0$, meaning we can conclude $x(u) = 0 = x(w)$.
			Redirecting the edges is thus sound.
			\item $val(u) = R$: We make a case distinction on $u \in S^0$.
			If we have $u \in S^0$, then $s_{init} \in S^0$ also has to hold, and we can conclude $w \in S^0$.
			Therefore, we get $x(u) = 0 = x(w)$.
			If we have $u\notin S^0$, then we get $x(u) = x(s_{init}) = x(w)$.
		\end{itemize}
		
	\end{itemize}
\end{proof}

\begin{theorem}[Inference Soundness]\label{SoundnessTheoremProof}
	Let $e$ be a \NDICE expression without function calls 
	and let $\{x_i\colon \tau_i\} \vdash e \colon \tau \leadsto (\dot{\varphi}, \gamma, t)$. Then for any values $\{v_i\colon \tau_i\}$ the  reduced ADD for the formulas $(\dot{\varphi}[x_i\mapsto v_i],\, \gamma[x_i\mapsto v_i],\, t)$ can be lifted to a normalized MDP $\mathcal{M}_e$ so that:
	$$ Pr^{max}_{\mathcal{M}_e} (\lozenge v) = \max \{d(v) \,|\, d \in \llbracket e[x_i\mapsto v_i]\rrbracket_D\}$$
\end{theorem}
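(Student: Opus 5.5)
The plan is to combine \Cref{NormalizedCorrespondenceLemma} with \Cref{ReductionSoundnessLemma}. First, \Cref{NormalizedCorrespondenceLemma} gives \emph{some} ADD $\mathcal{A}$ for $(\dot{\varphi}[x_i\mapsto v_i], \gamma[x_i\mapsto v_i], t)$. It respects the variable order of $t$, but it need not be reduced. Its normalized lift $\mathcal{M}'_e$ satisfies
\[
Pr^{max}_{\mathcal{M}'_e}(\lozenge v) = \max\{d(v) \mid d \in \llbracket e[x_i\mapsto v_i]\rrbracket_D\}.
\]
It therefore suffices to show that $Pr^{max}(\lozenge v)$ does not change when we pass from the lift of $\mathcal{A}$ to the lift of the reduced ADD.

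Second, I connect $\mathcal{A}$ to the reduced ADD. Both ADDs represent the same function $(\dot{\varphi}\,|\,\gamma)_t$ under the same variable order. Standard ADD theory says the reduced ADD for a fixed order is unique. It is obtained from any ordered ADD of the same function by applying the three reduction operations until none applies: removing redundant tests, merging duplicate tests, and merging duplicate terminals. Each step strictly decreases the number of nodes, so this procedure terminates. I will cite this with \cite{bryant1986bdd,fujita1997multi}.

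Next I argue by induction on the number of reduction steps, taking $\mathcal{A}=\mathcal{A}_0,\mathcal{A}_1,\dots,\mathcal{A}_n$ with $\mathcal{A}_n$ the reduced ADD. By \Cref{ReductionSoundnessLemma}, each step preserves the maximal reachability probability $Pr^{max}(\lozenge v)$ in the normalized lift. One subtlety must be checked: a step might remove the root, for example when the root is a redundant test. In that case the initial state of the lift moves to the redirected node $w$, and the return transition $s_r \xrightarrow{d,1} s_{init}$ must then point to the new root. This is consistent, because the lemma's argument shows $x(u)=x(w)$ for the least fixed point. Hence the value at the initial state is unchanged, and so is the value at every $R$-state, since $x(s_r)=x(s_{init})$.

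Chaining the equalities gives $Pr^{max}_{\mathcal{M}_e}(\lozenge v) = Pr^{max}_{\mathcal{M}'_e}(\lozenge v)$, where $\mathcal{M}_e$ is the normalized lift of the reduced ADD. Together with \Cref{NormalizedCorrespondenceLemma} this yields the claim.

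The main obstacle is making the appeal to \Cref{ReductionSoundnessLemma} rigorous. That lemma reasons with Bellman equations and the least solution, and redirecting edges changes the system of equations itself. I would first show that the least fixed point of the modified system equals the restriction of the old least fixed point to the surviving states. Because the underlying graph is a DAG apart from the restart edges, I would handle the $R$-cycles by treating $x(s_{init})$ as a parameter $c$. For fixed $c$, the acyclic system has a unique solution. The least fixed point is then the least $c$ solving the restart equation, and this equation is unchanged by each reduction step.
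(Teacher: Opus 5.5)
Your proposal is correct and follows the paper's proof: you obtain an unreduced ADD with the desired property from \Cref{NormalizedCorrespondenceLemma}, then apply finitely many reduction steps (each removes a node), each preserving $Pr^{max}(\lozenge v)$ by \Cref{ReductionSoundnessLemma}, until the unique reduced ADD for the fixed order is reached. You also treat two points the paper's Bellman-equation argument in \Cref{ReductionSoundnessLemma} leaves implicit, and both treatments are sound: a removed root, where the restart edges are retargeted to the new initial state, and the least fixed point, where you fix $c = x(s_{init})$ so the rest of the system is acyclic. One small correction is that the non-$R$ terminals also carry self-loops, so the graph is not quite a DAG apart from the restart edges; this is harmless because their values are fixed to $1$ or $0$.
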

\begin{proof}
	The existence of some ADD with the desired property is given by \Cref{NormalizedCorrespondenceLemma}. 
	Using \Cref{ReductionSoundnessLemma}, we can remove redundancies from that ADD until we end up with the reduced ADD, and still preserve the desired property.
	
	At this point, we would like to remark that exhaustively applying reduction operations to some ADD will in fact always yield the reduced or canonical ADD.
	This holds, as every reduction operation removes one node from the ADD, meaning we only need to apply finitely many reductions until we reach an ADD which cannot be reduced.
	Furthermore, the reduction operations do not change the function represented by the ADD, and also there is only a single reduced ADD for a given function.
\end{proof}

\subsection{Correctness of Function Calls and Program Compilation}\label{ProgramAppendix}
We will now consider the soundness of our inference method in the presence of functions and function calls.
To this end we need to consider some preliminary definitions and lemmas:

\begin{definition}[Table compatibility]
	Let $T$ be a function table, let $\Phi$ be a compiled function table, and let $\Gamma$ be a type environment.
	We call $T$ and $\Phi$ \textit{compatible}, if for every function identifier $\FUNCNAME$ with $\Gamma(\FUNCNAME) = \tau_1 \rightarrow \tau_2$ and $\Phi(\FUNCNAME) = (x_{arg}, \dot{\varphi}, \gamma, t)$, and for every input value $v_x \colon \tau_1$, there exists an ADD for $(\dot{\varphi}[x_{arg}\mapsto v_x],\, \gamma[x_{arg}\mapsto v_x],\, t)$ that can be lifted to an unnormalized MDP $\mathcal{M}^u_e$ so that:
	\begin{itemize}
		\item for all distributions $d \in T(\FUNCNAME)(v_x)$ there exists a memoryless scheduler $\Theta$ for $\mathcal{M}^u_e$ so that for all values $v$ we have: $d(v) = Pr^\Theta_{\mathcal{M}^u_e}(\lozenge v)$
		\item for all memoryless schedulers $\Theta$ for $\mathcal{M}^u_e$ there exists a distribution $d \in T(\FUNCNAME)(v_x)$ so that for all values $v$ we have: $d(v) = Pr^\Theta_{\mathcal{M}^u_e}(\lozenge v)$
	\end{itemize} 
\end{definition}

\begin{lemma}[Unnormalized Correspondence with Procedure Calls]\label{UnnormalizedCorrespondenceFunctions} 
	Let $e$ be a \NDICE expression with function calls, $T$ and $\Phi$ be compatible function tables, and let $\{x_i\colon \tau_i\}, \Phi \vdash e \colon \tau \leadsto (\dot{\varphi}, \gamma, t)$. 
	Then given any values $\{v_i\colon \tau_i\}$ there exists an ADD for $(\dot{\varphi}[x_i\mapsto v_i], \gamma[x_i\mapsto v_i], t)$ that can be lifted to an unnormalized MDP $\mathcal{M}^u_e$ so that:
	\begin{itemize}
		\item for all distributions $d \in \llbracket e[x_i\mapsto v_i] \rrbracket^T$ there exists a memoryless scheduler $\Theta$ for $\mathcal{M}^u_e$ so that for all values $v$ we have: $d(v) = Pr^\Theta_{\mathcal{M}^u_e}(\lozenge v)$
		\item for all memoryless schedulers $\Theta$ for $\mathcal{M}^u_e$ there exists a distribution $d \in \llbracket e[x_i\mapsto v_i] \rrbracket^T$ so that for all values $v$ we have: $d(v) = Pr^\Theta_{\mathcal{M}^u_e}(\lozenge v)$
	\end{itemize} 
\end{lemma}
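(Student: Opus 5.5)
The proof is a structural induction on $e$, reusing the proof of \Cref{UnnormalizedCorrespondence} verbatim for every construct except function calls. The compiled function table $\Phi$ and function table $T$ are carried along unchanged through all the other rules, so the inductive hypotheses remain available. The cases for values, identifiers, $\FST$, $\SND$, tuples, $\FLIP{\theta}$, $\NFLIP$, $\OBSERVE{a}$, if-statements and let-statements therefore go through exactly as before. In the let case, for instance, the ADD for $e$ is still built by grafting, for each non-$R$ value $v'$, the ADD for $e_2[x\mapsto v']$ onto the corresponding terminals of the ADD for $e_1$. Memoryless schedulers are then composed and decomposed as in that proof. Only the new case $e = \FUNCNAME(a)$ needs an argument.

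For $e = \FUNCNAME(x_1)$, rule \texttt{C-Call} gives the compilation $(\dot{\varphi}'[x_{arg}\mapsto x_1], \gamma'[x_{arg}\mapsto x_1], t')$, where $(\dot{\varphi}', \gamma', t') = \texttt{RefreshFlips}(x_{arg}, \dot{\varphi}, \gamma, t)$ and $\Phi(\FUNCNAME) = (x_{arg},\dot{\varphi},\gamma,t)$. Let $v_x$ be the value substituted for $x_1$. By the semantics, $\llbracket \FUNCNAME(v_x)\rrbracket^T = T(\FUNCNAME)(v_x)$. Substituting $v_x$ into the compiled formulas yields $(\dot{\varphi}'[x_{arg}\mapsto v_x], \gamma'[x_{arg}\mapsto v_x], t')$, using that composing the substitutions $[x_{arg}\mapsto x_1][x_1\mapsto v_x]$ equals $[x_{arg}\mapsto v_x]$ on these formulas.

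Compatibility of $T$ and $\Phi$ supplies an ADD for $(\dot{\varphi}[x_{arg}\mapsto v_x], \gamma[x_{arg}\mapsto v_x], t)$ whose lifted unnormalized MDP has schedulers in exact correspondence with $T(\FUNCNAME)(v_x)$. It remains to transport this ADD along the renaming. $\texttt{RefreshFlips}$ is a bijective renaming of flip variables that preserves the order and annotations in the trace. Relabelling each inner node's variable by the renaming therefore gives an ordered ADD for the refreshed triple, whose function as a map $\BOOL^{len(t')}\to\llbracket\tau\rrbracket\cup\{R\}$ is literally the same, because inputs are indexed by trace position. The lifted MDP is identical up to the names of variables: the states, transitions (with the same $\theta$ or $n$ annotations), labels and schedulers all coincide. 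Both bullet points then follow directly from compatibility.

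The main obstacle I expect is making sure the induction is well founded and that compatibility really is an invariant. One must check that the freshness guaranteed by $\texttt{RefreshFlips}$ is what the let and if cases need: their grafting argument relies on disjoint flip variables between $t_1$ and $t_2$ (respectively $t_T$ and $t_F$). Repeated calls to the same function must therefore not share variables, which is exactly what refreshing ensures. The fact that $T$ and $\Phi$ stay compatible as function definitions are accumulated is not needed here, since it is assumed as a hypothesis; it will be discharged later for \texttt{C-Prog2} by applying this lemma to each function body.
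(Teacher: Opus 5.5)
Your proposal is correct and follows the paper's proof essentially step by step. You reuse \Cref{UnnormalizedCorrespondence} for every non-call case, and for $\FUNCNAME(x_1)$ you combine $\llbracket \FUNCNAME(v_x)\rrbracket^T = T(\FUNCNAME)(v_x)$, the substitution collapse $[x_{arg}\mapsto x_1][x_1\mapsto v_x] = [x_{arg}\mapsto v_x]$, table compatibility, and transport of the ADD along the \texttt{RefreshFlips} renaming, exactly as the paper does; your remark that refreshing is what keeps the flip variables of $t_1, t_2$ (resp.\ $t_T, t_F$) disjoint, as the let and if cases require, makes explicit a point the paper leaves implicit.
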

\begin{proof}
	Proof by induction on $e$, almost all cases follow exactly as in the proof for \Cref{UnnormalizedCorrespondence}. 
	This follows as both the semantics as well as the compilation for those statements are fully independent from the function tables, and as such their behavior is completely unchanged.
	The only additional case is the function call, $\FUNCNAME(x)$, which will be proven as follows:\\
	
	Let $e = \FUNCNAME(a)$, where $a$ is an atomic expression from \AEXP, and thus either a value $v$ or a variable identifier $x$. 
	We assume that $a$ is a variable $x$, as the case where $a$ is a value $v$ is subsumed by this case. 
	Now assume $\Phi(\FUNCNAME) = (x_{arg}, \dot{\varphi}, \gamma, t)$ and $(\dot{\varphi}', \gamma', t')= \texttt{RefreshFlips}(x_{arg}, \dot{\varphi}, \gamma, t)$, then we get $\Gamma, \Phi \vdash \FUNCNAME(x) \colon \tau \leadsto (\dot{\varphi}'[x_{arg} \mapsto x], \gamma'[x_{arg} \mapsto x], t')$.
	
	Now let $v_x$ be the value substituted for $x$, then we get $\llbracket \FUNCNAME(x)[x \mapsto v_x] \rrbracket^T = \llbracket \FUNCNAME(v_x)\rrbracket^T = T(\FUNCNAME)(v_x)$. 
	Note that by the table compatibility of $T$ and $\Phi$, the desired correspondence property holds for $T(\FUNCNAME)(v_x)$ and $(\dot{\varphi}[x_{arg} \mapsto v_x], \gamma[x_{arg} \mapsto v_x], t)$.
	
	Next, we consider the compilation result $(\dot{\varphi}'[x_{arg} \mapsto x], \gamma'[x_{arg} \mapsto x], t')$ in more detail, in particular we observe the following:
	$$(\dot{\varphi}'[x_{arg} \mapsto x][x\mapsto v_x], \gamma'[x_{arg} \mapsto x][x\mapsto v_x], t') \\
	= (\dot{\varphi}'[x_{arg} \mapsto v_x], \gamma'[x_{arg} \mapsto v_x], t')$$
	Finally, note that the only difference between $\dot{\varphi}'$ and $\dot{\varphi}$ as well as $\gamma'$ and $\gamma$ is that the variables were renamed by \texttt{RefreshFlips}.
	Thus, any ADD for $(\dot{\varphi}[x_{arg} \mapsto v_x], \gamma[x_{arg} \mapsto v_x], t)$ can be transformed into an ADD for $(\dot{\varphi}'[x_{arg} \mapsto v_x], \gamma'[x_{arg} \mapsto v_x], t')$ by simply renaming the variables in the same manner.
	This allows us to retrieve the correspondence property between $T(\FUNCNAME)(v_x)$ and $(\dot{\varphi}'[x_{arg} \mapsto v_x], \gamma'[x_{arg} \mapsto v_x], t')$ from the correspondence property for $T(\FUNCNAME)(v_x)$ and $(\dot{\varphi}[x_{arg} \mapsto v_x], \gamma[x_{arg} \mapsto v_x], t)$. 
	Thus the proof is complete.
\end{proof}

\begin{lemma}[Unnormalized Correspondence for Programs]\label{UnnormalizedCorrespondencePrograms}
	Let $p$ be a \NDICE program, and let $T$ and $\Phi$ be compatible function tables. 
	Now let $\Gamma, \Phi \vdash p \colon \tau \leadsto (\dot{\varphi}, \gamma, t)$. Then there exists an ADD for $(\dot{\varphi}, \gamma, t)$ that can be lifted to an unnormalized MDP $\mathcal{M}^u_e$ so that:
	\begin{itemize}
		\item for all distributions $d \in \llbracket p \rrbracket^T$ there exists a memoryless scheduler $\Theta$ for $\mathcal{M}^u_e$ so that for all values $v$ we have: $d(v) = Pr^\Theta_{\mathcal{M}^u_e}(\lozenge v)$
		\item for all memoryless schedulers $\Theta$ for $\mathcal{M}_e$ there exists a distribution $d \in \llbracket p \rrbracket^T$ so that for all values $v$ we have: $d(v) = Pr^\Theta_{\mathcal{M}^u_e}(\lozenge v)$
	\end{itemize} 
\end{lemma}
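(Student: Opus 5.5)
The plan is to proceed by structural induction on the program $p$, following its grammar $p \Coloneqq \bullet\, e \mid \mathit{fun}\; p$. At each step we keep the invariant that the current function table $T$ and compiled function table $\Phi$ are compatible. The statement of \Cref{UnnormalizedCorrespondencePrograms} already assumes this for the starting tables, so the induction only has to re-establish compatibility whenever a function definition is processed.

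\textbf{Base case.} Here $p = \bullet\, e$. The semantics give $\llbracket \bullet\, e\rrbracket^T = \llbracket e\rrbracket^T$, and \texttt{C-Prog1} compiles $p$ exactly as it compiles $e$ under $\Gamma, \Phi$. The claim is then an instance of \Cref{UnnormalizedCorrespondenceFunctions}: for a closed main expression we take the empty substitution, and otherwise the substitution of the free variables in $\Gamma$. Both bullet points carry over verbatim.

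\textbf{Inductive case.} Here $p = \FUNC{\FUNCNAME(x_1\colon\tau_1)\colon\tau_2}{e}\; p'$.
\begin{itemize}
\item Semantically, $\llbracket p\rrbracket^T = \llbracket p'\rrbracket^{T'}$ with $T' = T \cup \{\FUNCNAME \mapsto \llbracket \mathit{fun}\rrbracket^T\}$, where $\llbracket \mathit{fun}\rrbracket^T(v) = \llbracket e[x_1\mapsto v]\rrbracket^T$.
\item By \texttt{C-Prog2} and \texttt{C-Fun}, the body is compiled as $\Gamma\cup\{x_1\colon\tau_1\}, \Phi \vdash e \colon \tau_2 \leadsto (\dot{\varphi}_\FUNCNAME, \gamma_\FUNCNAME, t_\FUNCNAME)$.
\item $p'$ is compiled under $\Gamma' = \Gamma \cup\{\FUNCNAME\mapsto \tau_1\to\tau_2\}$ and $\Phi' = \Phi\cup\{\FUNCNAME\mapsto(x_1,\dot{\varphi}_\FUNCNAME,\gamma_\FUNCNAME,t_\FUNCNAME)\}$, producing the same triple $(\dot{\varphi},\gamma,t)$ as $p$.
\end{itemize}
It remains to show that $T'$ and $\Phi'$ are compatible. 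For identifiers other than $\FUNCNAME$, compatibility is inherited from $T$ and $\Phi$; if $\FUNCNAME$ shadows an older entry, both tables overwrite it consistently. For $\FUNCNAME$ itself, we must show that for every $v_x\colon\tau_1$ some ADD for $(\dot{\varphi}_\FUNCNAME[x_1\mapsto v_x], \gamma_\FUNCNAME[x_1\mapsto v_x], t_\FUNCNAME)$ lifts to an MDP whose memoryless schedulers match $\llbracket e[x_1\mapsto v_x]\rrbracket^T$ in both directions. This is exactly \Cref{UnnormalizedCorrespondenceFunctions}, applied to the body $e$ with environment $\{x_1\colon \tau_1\}\cup\Gamma$ and the compatible tables $T, \Phi$. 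The induction hypothesis for $p'$ under $T', \Phi'$ then yields the ADD and MDP for $(\dot{\varphi},\gamma,t)$ together with both correspondence directions for $\llbracket p'\rrbracket^{T'} = \llbracket p\rrbracket^T$.

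\textbf{Main obstacle.} The delicate step is the bookkeeping in the inductive case. First, the induction has to be stated for arbitrary compatible pairs $(T,\Phi)$ and environments $\Gamma$, so that it can be applied to $p'$ with the extended tables rather than only to the empty ones. Second, the compatibility obligation must be checked for every argument value $v_x$, which requires instantiating \Cref{UnnormalizedCorrespondenceFunctions} with the free variable $x_1$ and invoking its ``given any values'' clause. I also need to confirm that the function body only refers to previously defined functions, since the language is non-recursive; this guarantees that $T$ and $\Phi$, rather than $T'$ and $\Phi'$, are the right tables when compiling $e$.

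Once this lemma is in place, \Cref{app:ProgramSoundness} should follow by taking $T=\Phi=\{\}$, which are trivially compatible, and then repeating the normalization and reduction arguments of \Cref{NormalizedCorrespondenceLemma} and \Cref{ReductionSoundnessLemma} unchanged.
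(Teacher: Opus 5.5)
Your proposal is correct and follows the paper's proof essentially step for step. It uses induction on $p$, discharges the base case $\bullet\, e$ by \Cref{UnnormalizedCorrespondenceFunctions}, and reduces the function-definition case to the induction hypothesis for $p'$ under the extended tables $T',\Phi'$, whose compatibility for the new identifier comes from applying \Cref{UnnormalizedCorrespondenceFunctions} to the body $e$ under the old compatible tables $T,\Phi$; your extra remarks on shadowing and non-recursion are bookkeeping the paper leaves implicit.
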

\begin{proof}
	Proof by induction on $p$:
	\begin{itemize}
		\item $p = \bullet e$: 
		We have $\llbracket \bullet e\rrbracket^T = \llbracket e\rrbracket^T$, and by the rules in \Cref{nDice Function Compilation} we have the compilation judgment $\Gamma, \Phi \vdash \bullet e  \colon \tau \leadsto (\dot{\varphi}, \gamma, t)$ if and only if we have $\Gamma, \Phi \vdash e \colon \tau \leadsto (\dot{\varphi}, \gamma, t)$. 
		This case thus reduces to the behavior of expressions with function calls, and the rest follows from \Cref{UnnormalizedCorrespondenceFunctions}.\\
		
		\item $p = \FUNC{\FUNCNAME (x_1 \colon \tau_1) \colon \tau_2}{e}\; p'$: 
		Assume that $\Gamma, \Phi \vdash \FUNC{\FUNCNAME (x_1 \colon \tau_1) \colon \tau_2}{e} \leadsto (\dot{\varphi}_\FUNCNAME, \gamma_\FUNCNAME, t_\FUNCNAME)$, and further let $\Phi' = \Phi \cup \{\FUNCNAME \rightarrow (x_1, \dot{\varphi}_\FUNCNAME, \gamma_\FUNCNAME, t_\FUNCNAME)\}$. 
		Now assume that for the sub-program $p'$ we have $\Gamma\cup \{\FUNCNAME \mapsto \tau_1 \rightarrow \tau_2\}, \Phi' \vdash  p' \colon \tau \leadsto (\dot{\varphi}, \gamma, t)$. 
		We can then conclude that the program as a whole yields the following compilation judgment:
		$$\Gamma, \Phi \vdash  \FUNC{\FUNCNAME (x_1 \colon \tau_1) \colon \tau_2}{e}\; p' \colon \tau \leadsto (\dot{\varphi}, \gamma, t)$$
		Thus the compilation of $p$ essentially reduces to the compilation of $p'$ with the updated compiled function table $\Phi'$. 
		We can show that the same thing is true for the semantics, i.e. the semantics of $p$ reduce to the semantics of $p'$ with an updated function table.
		Hence, let $T' = T \cup \{\FUNCNAME \rightarrow \llbracket \FUNC{\FUNCNAME (x_1 \colon \tau_1) \colon \tau_2}{e} \rrbracket^T\}$, and we get:
		$$\llbracket \FUNC{\FUNCNAME (x_1 \colon \tau_1) \colon \tau_2}{e}\; p' \rrbracket^T = \llbracket p' \rrbracket^{T'}$$
		This case of the lemma now follows from the inductive hypothesis for $p'$ with the updated function tables $T'$ and $\Phi'$, assuming that we can prove that the updated function tables are compatible. 
		By assumption, compatibility is given for any function identifier in $T$ and $\Phi$, we thus only need to consider $\FUNCNAME$:
		
		Let $v\colon \tau_1$ be an arbitrary input value to $\FUNCNAME$, we then get:
		$$T'(\FUNCNAME)(v) = \llbracket \FUNC{\FUNCNAME (x_1 \colon \tau_1) \colon \tau_2}{e} \rrbracket^T(v) = \llbracket e[x_1\mapsto v]\rrbracket^T$$
		We further have $\Phi'(\FUNCNAME) = (x_1, \dot{\varphi}_\FUNCNAME, \gamma_\FUNCNAME, t_\FUNCNAME)$, with $\Gamma, \Phi \vdash \FUNC{\FUNCNAME (x_1 \colon \tau_1) \colon \tau_2}{e} \leadsto (\dot{\varphi}_\FUNCNAME, \gamma_\FUNCNAME, t_\FUNCNAME)$ by definition of $\Phi'$. 
		Furthermore, with the rules from \Cref{nDice Function Compilation} we can conclude from the compilation of the function $f$ that we must have $\Gamma\cup\{x_1 \colon \tau_1\}, \Phi \vdash e \colon \tau_2 \leadsto (\dot{\varphi}_\FUNCNAME, \gamma_\FUNCNAME, t_\FUNCNAME)$.
		
		It remains to show that there is some ADD for $(\dot{\varphi}_f[x_1\mapsto v], \gamma_f[x_1\mapsto v], t_f)$ which can be lifted to an unnormalized MDP so that the schedulers for that MDP correspond to the distributions in $\llbracket e[x_1\mapsto v]\rrbracket^T$.
		This follows from \Cref{UnnormalizedCorrespondenceFunctions} for $e$, with the compatible function tables $T$ and $\Phi$.
	\end{itemize}
\end{proof}

\begin{lemma}[Correctness of Program Inference]\label{ProgramSoundnessProof}
	Let $p$ be a \NDICE program  
	and  let $\{\}, \{\} \vdash p \colon \tau \leadsto (\dot{\varphi}, \gamma, t)$. 
	Then the reduced ADD for $(\dot{\varphi}, \gamma, t)$ can be lifted to an normalized MDP $\mathcal{M}_e$ so that:
	$$ Pr^{max}_{\mathcal{M}_e} (\lozenge v) = \max \left\{d(v) \,|\, d \in \llbracket e\rrbracket^{\{\}}_D \right\}$$
\end{lemma}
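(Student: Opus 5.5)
The plan is to reproduce, for programs, the chain of arguments that proves \Cref{SoundnessTheoremProof} for expressions. The only new ingredient is that the unnormalized correspondence is now supplied by \Cref{UnnormalizedCorrespondencePrograms} instead of \Cref{UnnormalizedCorrespondence}. I read $\llbracket e\rrbracket^{\{\}}_D$ in the statement as $\llbracket p\rrbracket^{\{\}}_D$, the semantics of the whole program.

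First, I instantiate \Cref{UnnormalizedCorrespondencePrograms} with $\Gamma = \{\}$ and the empty tables $T = \{\}$ and $\Phi = \{\}$. These are vacuously compatible, because the compatibility condition only ranges over function identifiers in $\Gamma$. The lemma then yields some ADD $\mathcal{A}$ for $(\dot{\varphi}, \gamma, t)$ whose lift $\mathcal{M}^u_p$ satisfies the two-way correspondence. Every $d \in \llbracket p\rrbracket^{\{\}}$ is realised by a memoryless scheduler, and every memoryless scheduler realises some such $d$, in both cases via $d(v) = Pr^\Theta_{\mathcal{M}^u_p}(\lozenge v)$.

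Second, I repeat the proof of \Cref{NormalizedCorrespondenceLemma} verbatim with $\llbracket p\rrbracket^{\{\}}$ in place of $\llbracket e[x_i\mapsto v_i]\rrbracket$. That proof used only the two-way correspondence, never the shape of $e$. Normalising by $d_D = Pr^\Theta_{\mathcal{M}^u_p}(\lozenge A)$ gives Equation~\ref{Equation1} for $p$. Lifting the same ADD to the normalized MDP $\mathcal{M}_p$, obtained by redirecting $R$-states to $s_{init}$, gives Equation~\ref{Equation2} through the geometric-series argument. Memoryless optimality for ordinary reachability then gives $Pr^{max}_{\mathcal{M}_p}(\lozenge v) = \max\{d(v) \mid d \in \llbracket p\rrbracket^{\{\}}_D\}$ for this particular, possibly unreduced, ADD.

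Third, I apply \Cref{ReductionSoundnessLemma} repeatedly to pass to the reduced ADD. Each reduction removes a node, so the process terminates. The function represented is preserved, and the reduced ADD for a fixed variable order (here the one given by $t$) is unique, so the final diagram is exactly the reduced ADD for $(\dot{\varphi}, \gamma, t)$. The maximum reachability probability in the normalized lift is preserved at every step.

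The step I expect to need the most care is the first one. I must check that the ADD delivered by \Cref{UnnormalizedCorrespondencePrograms} really is an ADD for the top-level compiled triple under the variable order of $t$, given that function calls introduce renamed flips via \texttt{RefreshFlips}. The lemma states this, so I will simply invoke it. Beyond that, I only need to confirm that the vacuous-compatibility argument for empty tables is legitimate.
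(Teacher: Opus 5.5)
Your proposal is correct and follows the paper's own proof step for step. You obtain an ADD from \Cref{UnnormalizedCorrespondencePrograms} (with the empty, vacuously compatible tables), rerun the argument of \Cref{NormalizedCorrespondenceLemma} for the normalized lift, and then apply \Cref{ReductionSoundnessLemma} together with uniqueness of the reduced ADD; your reading of $\llbracket e\rrbracket^{\{\}}_D$ as $\llbracket p\rrbracket^{\{\}}_D$ is the intended one, since the $e$ in the statement is a typo.
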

\begin{proof}
	The proof is analogous to the proof of \Cref{SoundnessTheoremProof}, and we can simply retrace the steps with the same justifications:
	\Cref{UnnormalizedCorrespondencePrograms} yields an ADD which can be lifted to an unnormalized MDP that corresponds to the unnormalized semantics of $p$. 
	We can then argue as in \Cref{NormalizedCorrespondenceLemma} that lifting this ADD to a normalized MDP yields an MDP which satisfies the desired equation.
	Finally, \Cref{ReductionSoundnessLemma} allows us to soundly reduce the ADD, and we have shown that the reduced ADD can be lifted to a normalized MDP which satisfies the desired equation.
\end{proof}

\begin{theorem}[Inference Soundness]\label{app:ProgramSoundness}
	Let $p$ be a \NDICE program  
	and let $\{\}, \{\} \vdash p \colon \tau \leadsto (\dot{\varphi}, \gamma, t)$. 
	Then the reduced ADD for $(\dot{\varphi}, \gamma, t)$ can be lifted to an unnormalized MDP $\mathcal{M}_p$ so that:
	$$ Pr^{max}_{M_p} (\lozenge v \,|\, \lozenge A) = \max \left\{d(v) \,|\, d \in \llbracket p\rrbracket^{\{\}}_D\right\}$$
\end{theorem}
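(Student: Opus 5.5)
The plan is to derive the statement from \Cref{ProgramSoundnessProof}, which already gives the analogous equation for the \emph{normalized} MDP lifted from the reduced ADD. What remains is to move from ordinary reachability in the normalized MDP $\mathcal{M}_e$ to conditional reachability in the unnormalized MDP $\mathcal{M}_p$ lifted from the same reduced ADD.

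First, lift the reduced ADD for $(\dot{\varphi}, \gamma, t)$ in two ways. One lifting gives the unnormalized MDP $\mathcal{M}_p$. The other gives the normalized MDP $\mathcal{M}_e$, which differs from $\mathcal{M}_p$ only in that every $R$-state loops back to $s_{init}$. The two MDPs share states and enabled actions, so memoryless schedulers of one are exactly the memoryless schedulers of the other.

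Next, reprove Equation~\ref{Equation2} for this pair:
$$Pr^{\Theta}_{\mathcal{M}_p}(\lozenge v \mid \lozenge A) = Pr^{\Theta}_{\mathcal{M}_e}(\lozenge v)$$
for every memoryless $\Theta$. The argument in \Cref{NormalizedCorrespondenceLemma} used only that the MDP is lifted from an ADD whose terminal nodes are trap states labelled $v$, another accepting value, or $R$. That holds for any ADD, in particular the reduced one. Writing $p_v$, $p_a$, $p_r$ for the one-pass probabilities, both sides equal $p_v/(p_v+p_a)$, via the geometric series on the normalized side. The degenerate case $p_v+p_a=0$ gives $0$ on both sides.

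Then combine:
$$\max_{\Theta \text{ memoryless}} Pr^{\Theta}_{\mathcal{M}_p}(\lozenge v \mid \lozenge A) = \max_{\Theta \text{ memoryless}} Pr^{\Theta}_{\mathcal{M}_e}(\lozenge v) = Pr^{max}_{\mathcal{M}_e}(\lozenge v) = \max\{d(v) \mid d \in \llbracket p\rrbracket^{\{\}}_D\}.$$
The middle equality uses memoryless optimality for ordinary reachability. The last equality is \Cref{ProgramSoundnessProof}, whose $\llbracket e\rrbracket$ should be read as $\llbracket p\rrbracket$.

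The main obstacle is the left-hand side. $Pr^{max}_{\mathcal{M}_p}$ is defined as a maximum over all schedulers, and for conditional reachability memoryless schedulers are not sufficient in general \cite{andres2008conditional}. I would handle this in one of two ways.

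\emph{Option A.} Read the paper's definition literally. It restricts schedulers to memoryless ones ("We only consider memoryless schedulers"), so the maximum is over memoryless schedulers by definition and the chain above finishes the proof.

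\emph{Option B.} If full generality is wanted, exploit acyclicity. $\mathcal{M}_p$ is a DAG apart from the self-loops on terminal states. Every path therefore visits each inner state at most once, so any history-dependent scheduler induces the same path distribution as the memoryless scheduler that plays, in each state, the conditional action mixture given that the state is reached. This yields identical values of $Pr(\lozenge v)$ and $Pr(\lozenge A)$, hence an identical conditional probability. States that are never reached can be assigned arbitrary actions.

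With either option the supremum is attained by memoryless schedulers, which completes the argument.
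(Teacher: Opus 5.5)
Your main line (Option A) is the paper's proof. It invokes \Cref{ProgramSoundnessProof} for the normalized MDP lifted from the reduced ADD, then transfers the result to conditional reachability in the unnormalized lifting via Equation~\ref{Equation2}, whose derivation does work for any ADD, with schedulers memoryless by the paper's definition.

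Option B is a genuine strengthening that the paper does not provide; it only argues memoryless sufficiency informally. Fix one claim, though: a history-dependent scheduler and its averaged memoryless counterpart generally induce \emph{different} path distributions. What they share is the probability of reaching every state, which you can show by induction along a topological order of the DAG, since each inner state is entered at most once. That is all you need for $Pr(\lozenge v)$ and $Pr(\lozenge A)$.
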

\begin{proof}
	Follows directly from \Cref{ProgramSoundnessProof}, as well as the construction of the normalized MDP as argued in the proof of \Cref{NormalizedCorrespondenceLemma}.
\end{proof}

	\section{Implementation Details and Detailed Experimental Results}\label{app:implementation}

\subsection{Probabilistic State Compression}\label{app:compression}

Our implementation reduces the state space of the lifted MDP by repeatedly applying \Cref{theo:compression} to states which where lifted from (inner) nodes with probabilistic variables. 
This theorem formally specifies the Compression procedure, and all target states always fulfill the requirements listed in \Cref{theo:compression}.
The soundness of each compression step follows from the theorem below, and we could apply the theorem exhaustively, but we enforce a maximum number of outgoing transitions for each state in the MDP, which guarantees that the Compression of the MDP as a whole can be completed in linear time relative to the size of the MDP.
In our implementation, that limit is set at $40$ transitions. 

\begin{theorem}[Probabilistic State Compression]\label{theo:compression}
	Let $M = (S, s_{init}, \mathit{Act}, P, \mathit{AP}, L)$ be an MDP and let $s_p \in S$ be a state with only one enabled action $\alpha_p$, as well as $s_p \neq s_{init}$, $P(s_p, \alpha_p, s_p) = 0$ and $L(s_p) = \emptyset$. 
	Now let $M' = (S/ \{s_p\}, s_{init}, \mathit{Act}, P', \mathit{AP}, L)$ be the compressed MDP where we define
	$$ P'(s, \alpha, s' ) \coloneqq P(s, \alpha, s') + \left(P(s, \alpha, s_p) \cdot P(s_p, \alpha_p, s')\right) $$
	We then have for all labels $v \in \mathit{AP}$:
	$$Pr^{max}_{M}(\lozenge v) = Pr^{max}_{M'}(\lozenge v)$$
\end{theorem}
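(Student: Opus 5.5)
The plan is to compare the maximal reachability values of $M$ and $M'$ state by state through their Bellman equations, as in \Cref{ReductionSoundnessLemma}. Fix a label $v$ and write $x_M(s)$ for the maximal probability of reaching a $v$-labelled state from $s$ in $M$. This value is the least fixed point of the Bellman operator
\[
x(s) = \max_{\alpha} \sum_{s'} P(s,\alpha,s')\, x(s') \quad\text{for } v\notin L(s), \qquad x(s)=1 \text{ for } v \in L(s).
\]
The hypotheses on $s_p$ are what make the elimination work. Since $L(s_p)=\emptyset$, $s_p$ is never itself a target. Since $\alpha_p$ is the only enabled action and $P(s_p,\alpha_p,s_p)=0$, the equation at $s_p$ reads $x(s_p)=\sum_{s'\neq s_p} P(s_p,\alpha_p,s')\,x(s')$ and does not mention $s_p$ on the right-hand side.

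\textbf{Step 1: substitute the equation for $x(s_p)$.} Plugging this expression into the equation of every other state $s$ and action $\alpha$ gives
\[
\sum_{s'\neq s_p}\bigl(P(s,\alpha,s') + P(s,\alpha,s_p)\,P(s_p,\alpha_p,s')\bigr)\, x(s') ,
\]
which is exactly the Bellman equation of $M'$ at $s$. Along the way I will check that $P'$ is a valid transition function: $\sum_{s'} P'(s,\alpha,s')=\sum_{s'\neq s_p}P(s,\alpha,s') + P(s,\alpha,s_p)\cdot 1 \in\{0,1\}$, and the set of enabled actions at each state is unchanged.

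\textbf{Step 2: match the least fixed points.} Let $x_M$ be the least solution for $M$. By Step 1, its restriction to $S\setminus\{s_p\}$ solves the system of $M'$, so $x_{M'}\le x_M$ on $S\setminus\{s_p\}$. Conversely, take $x_{M'}$ and extend it to $s_p$ by the defining equation. The extended vector is a fixed point of the operator of $M$, so $x_M \le x_{M'}$ away from $s_p$. Evaluating both at $s_{init}\neq s_p$ gives the claim.

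\textbf{Main obstacle.} The delicate point is that least fixed points are not always preserved by substituting equations, and one might instead fix solutions through the $S^0$ preprocessing that \Cref{ReductionSoundnessLemma} uses. I would avoid this issue altogether by arguing directly with memoryless schedulers. A memoryless scheduler $\Theta$ of $M$ restricts to a scheduler of $M'$, and every scheduler of $M'$ extends to $M$ by setting $\Theta(s_p)=\alpha_p$. So the two MDPs have the same memoryless schedulers, and for a fixed $\Theta$ both induce finite Markov chains. In such a chain, reachability probabilities are preserved when a non-target, non-self-looping intermediate state is eliminated: every path through $s_p$ in $M$ corresponds to a path in $M'$ with the two-step probability $P(s,\alpha,s_p)P(s_p,\alpha_p,s')$ merged into one step, and the total probability mass is preserved because $P(s_p,\alpha_p,\cdot)$ sums to 1. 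Maximising over memoryless schedulers, which suffice for unbounded reachability \cite{forejt2011automated}, then yields $Pr^{max}_M(\lozenge v)=Pr^{max}_{M'}(\lozenge v)$.
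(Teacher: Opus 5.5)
Your proposal is correct, and Steps 1 and 2 follow the paper's own argument. The paper also substitutes the Bellman equation of $s_p$ into the equations of the other states. That equation involves only the single action $\alpha_p$, and since $L(s_p)=\emptyset$ and $P(s_p,\alpha_p,s_p)=0$, it is a convex combination of the other unknowns. The paper then observes that the result is the system for $M'$ and concludes that the least solutions agree at $s_{init}\neq s_p$.

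Your Step 2 is more careful than the paper on one point. The paper works with the $S^0$-preprocessed system and uses $S^0_M=S^0_{M'}$ without proof. It then asserts that the two systems are ``the same up to renaming'', even though the system for $M$ has the extra unknown $x(s_p)$. Your restriction/extension argument is exactly what proves that the least fixed points correspond, and it needs no $S^0$ at all.

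For the same reason, your ``main obstacle'' worry is already settled by Step 2. The eliminated equation does not mention $x(s_p)$ on its right-hand side. Hence the fixed points of the two operators are in order-preserving bijection on $S\setminus\{s_p\}$.

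The scheduler-based argument you then sketch is a different route, and it is also valid:
\begin{itemize}
\item Memoryless schedulers of $M$ and $M'$ are in bijection, because the only possible choice at $s_p$ is $\alpha_p$.
\item The induced chains satisfy $P'_\Theta(s,s')=P_\Theta(s,s')+P_\Theta(s,s_p)\,P(s_p,\alpha_p,s')$.
\item Paths in $\lozenge v$ never start at $s_p$, never end there, and never repeat it consecutively. So erasing the occurrences of $s_p$ is a finite-to-one map onto $\lozenge v$ in $M'$. Its fibre sums equal the path probabilities in $M'$, by distributing the product over the ``direct'' and ``via $s_p$'' alternatives of each step.
\end{itemize}
This route gives the stronger per-scheduler identity $Pr^\Theta_M(\lozenge v)=Pr^\Theta_{M'}(\lozenge v)$. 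That identity plugs directly into the paper's definition of $Pr^{max}$ as a maximum over memoryless schedulers. The cost is the path bookkeeping.

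One correction: the reason mass is preserved is this fibre computation. It is not that $P(s_p,\alpha_p,\cdot)$ sums to $1$; that fact only shows $P'$ is again a valid transition function.
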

\begin{proof}
	We once again consider the maximum reachability as characterized through Bellman equations \cite{forejt2011automated}. 
	To this end, we first need to define $S^0_M \coloneqq \{s \in S \,|\, Pr^{max}_M (\lozenge v) = 0\}$ to characterize the set of states from which no state labeled $v$ can be reached.
	Note that $S^0$ can be determined using graph based algorithms such as e.g. breadth first search, as it suffices to know if transitions are possible, while their probabilities can be ignored\cite{forejt2011automated}.
	Afterwards, the Bellman equations themselves are defined as follows, where $x(s)$ describes the probability of reaching a state labeled $v$ from the state $s$.
	$$ x(s) = \begin{cases}
		1 &, v \in L(s)\\
		0 &, s \in S^0_M\\
		\max\limits_{\alpha \in \mathit{Act(s)}} \sum\limits_{s' \in S} P(s, \alpha, s') \cdot x(s') &,\text{otherwise}
	\end{cases}$$
	$Pr^{max}_M (\lozenge v)$ is then the value for the initial state by the \textit{least} solution to the system of equations described by the Bellman Operator.
	We will write $x(s)$ for the Bellman equation for $M$ and $x'(s)$ for the Bellman equation of the compressed MDP $M'$.
	
	We will now show that for all states $s \in S/ \{s_p\}$ we have $x(s) = x'(s)$.
	This is trivial for all states $s$ with $v \in L(s)$ and follows for all states $s\in S^0$ from $S^0_M = S^0_{M'}$.
	We now consider the remaining case. We first show:
	\begin{align*}
		&\sum\limits_{s' \in S} P(s, \alpha, s') \cdot x(s') \\
		&= P(s, \alpha, s_p) \cdot x(s_p) + \sum\limits_{s' \in S/\{s_p\}} P(s, \alpha, s') \cdot x(s') \\
		&=_1 P(s, \alpha, s_p) \cdot \sum\limits_{s' \in S} P(s_p, \alpha_p, s') \cdot x(s') + \sum\limits_{s' \in S/\{s_p\}} P(s, \alpha, s') \cdot x(s') \\
		&=_2 P(s, \alpha, s_p) \cdot \sum\limits_{s' \in S/\{s_p\}} P(s_p, \alpha_p, s') \cdot x(s') + \sum\limits_{s' \in S/\{s_p\}} P(s, \alpha, s') \cdot x(s') \\
		&= \sum\limits_{s' \in S/\{s_p\}} P(s, \alpha, s_p) \cdot P(s_p, \alpha_p, s') \cdot x(s') + \sum\limits_{s' \in S/\{s_p\}} P(s, \alpha, s') \cdot x(s') \\
		&= \sum\limits_{s' \in S/\{s_p\}} \left(P(s, \alpha, s') + P(s, \alpha, s_p) \cdot P(s_p, \alpha_p, s')\right) \cdot x(s') \\
		&= \sum\limits_{s' \in S/\{s_p\}} P'(s, \alpha, s') \cdot x(s') \\
	\end{align*}
	In $=_1$ we use that $L(s_p) = \emptyset$, meaning that the Bellman equation for $s_p$ is described by the final rule.
	We also use that $s_p$ only has a single enabled action, which simplifies the Bellman equation to just a sum as the maximum over all actions is trivially found with $\alpha_p$.
	Note that $x(s_p) = \sum_{s' \in S} P(s_p, \alpha_p, s') \cdot x(s')$ also holds in the case of $s_p \in S^0_M$, as we then either have $s' \in S^0_M$ or $P(s_p, \alpha_p, s') = 0$ for all states $s' \in S$, meaning the sum as whole will be $0$.
	Afterwards, $=_2$ follows from $P(s_p, \alpha_p, s_p) = 0$.
	It now follows:
	\begin{align*}
		x(s) & = \max\limits_{\alpha \in \mathit{Act(s)}} \sum\limits_{s' \in S} P(s, \alpha, s') \cdot x(s') \\
		&= \max\limits_{\alpha \in \mathit{Act(s)}} \sum\limits_{s' \in S/\{s_p\}} P'(s, \alpha, s') \cdot x(s') \\
	\end{align*}
	Therefore, the system of equations as defined by $x(s)$ is the same as the system of equations as defined by $x'(s)$ up to renaming of $x(s)$. 
	Therefore, the least solution to one system of equations is also the least solution for the other system of equations, and we get $x(s) = x'(s)$ for all states $s \in S/\{s_p\}$.
	Finally, we have $s_p \neq s_{init}$ by assumption, and we can conclude $Pr^{max}_{M}(\lozenge v) = x(s_{init}) = x'(s_{init}) = Pr^{max}_{M'}(\lozenge v)$
\end{proof}

\subsection{Reduction to Boolean Inference}\label{app:boolean_reduction}
In this section we show that we can soundly reduce the semantics of arbitrarily typed programs to boolean programs.
The key idea is that we wrap the expression $e$ we are interested in into a let statement $\LET{x = e}{ x \leftrightarrow v}$, called the boolean reduction of $e$ for $v$, which essentially evaluates $e$ and then tests if the execution yielded the desired value $v$.
The idea is that it does not matter if we ask for the maximum probability for $e$ to return a particular value, or for the maximum probability for the boolean reduction of $e$ for $v$ to return $\TRUE$.
To this end we define a simple equivalence or equality test on values with the following semantics:
$$ \llbracket v_1 \leftrightarrow v_2 \rrbracket^T \Coloneqq \begin{cases}
	\TRUE| 1\rangle &, v_1 = v_2 \\
	\FALSE| 1\rangle &, v_1 \neq v_2 \\
\end{cases}$$
With this definition in hand we present \Cref{theo:boolean_reduction}, which relates the (unnormalized) distributions in the semantics of expressions and their boolean reductions to each other.
Afterwards, \Cref{theo:boolean_reduction_norm} then considers normalized distributions and proves the soundness of this inference approach.
\begin{lemma}[Unnormalized Boolean Reduction]\label{theo:boolean_reduction}
	Let $e$ be an \NDICE expression, and let $T$ be a function table.
	Then for all values $v \in V$ we have:
	\begin{enumerate}
		\item $\forall d \in \llbracket e \rrbracket^T \; \exists d' \in \llbracket \LET{x = e}{ x \leftrightarrow v}\rrbracket^T: d(v) = d'(\TRUE) \land \sum\limits_{w \neq v} d(w) = d'(F)$
		\item $\forall d' \in \llbracket \LET{x = e}{ x \leftrightarrow v} \rrbracket^T \; \exists d \in \llbracket e \rrbracket^T: d(v) = d'(\TRUE) \land \sum\limits_{w \neq v} d(w) = d'(F)$
	\end{enumerate}
\end{lemma}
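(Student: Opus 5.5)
The proof works directly from the semantics of the let-statement in \Cref{fig:semantics}, combined with the semantics of the equality test $v_1 \leftrightarrow v_2$ just introduced. The key observation is that for every closed value $w$, the expression $(x \leftrightarrow v)[x \mapsto w] = w \leftrightarrow v$ is deterministic. Its semantics is the singleton $\{1|\TRUE\rangle\}$ if $w = v$ and $\{1|\FALSE\rangle\}$ otherwise. So once a distribution for $e$ is chosen, the second component of the let-statement leaves no freedom, and the two directions reduce to a direct calculation.

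For direction (1), fix $d \in \llbracket e \rrbracket^T$ and take $d_1 \coloneqq d$ in the definition of $\llbracket \LET{x = e}{x \leftrightarrow v} \rrbracket^T$. For each $w$ with $d(w) > 0$, the choice $d_w \in \llbracket w \leftrightarrow v \rrbracket^T$ is forced to be the unique Dirac distribution above. The let-rule then yields a distribution $d'$ with
\[
d'(\TRUE) = \sum_{w,\; d(w) > 0} d(w) \cdot d_w(\TRUE) = d(v),
\]
because $d_w(\TRUE) = 1$ exactly when $w = v$. Likewise $d'(\FALSE) = \sum_{w \neq v,\; d(w) > 0} d(w) = \sum_{w \neq v} d(w)$, where dropping the restriction $d(w) > 0$ does not change the sum since the omitted terms are zero. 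If $d(v) = 0$, then $v$ does not appear in the sum at all, and both sides equal $0$, so this case needs no separate treatment.

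For direction (2), start from $d' \in \llbracket \LET{x = e}{x \leftrightarrow v} \rrbracket^T$. By the definition of the let-semantics there is a witness $d_1 \in \llbracket e \rrbracket^T$ together with distributions $d_w$ chosen from the singleton sets $\llbracket w \leftrightarrow v \rrbracket^T$. Each $d_w$ is therefore again the forced Dirac distribution, and the same computation as in direction (1) gives $d'(\TRUE) = d_1(v)$ and $d'(\FALSE) = \sum_{w \neq v} d_1(w)$. Setting $d \coloneqq d_1$ finishes this direction.

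The only step needing care is bookkeeping. The let-rule quantifies only over values with $d_1(w) > 0$, so one has to check that $d_w$ is determined for those values and that the remaining terms vanish. One also has to ensure that $\leftrightarrow$ is applied to a well-typed value, so that its semantics is not the empty set; since $e$ is well-typed with type $\tau$ and $v \colon \tau$, this holds. Beyond that the argument is routine, and I do not expect a real obstacle.
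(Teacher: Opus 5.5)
Your proof is correct and matches the paper's argument: the paper takes $d_1 = d$ (or extracts the witness $d_1$ for direction (2)), uses that each $\llbracket w \leftrightarrow v \rrbracket^T$ is a Dirac singleton, and computes $d'(\TRUE) = d(v)$ and $d'(\FALSE) = \sum_{w \neq v} d(w)$, observing as you do that the case $d(v) = 0$ is harmless. The one unneeded step is your well-typedness check, since the paper defines $\llbracket v_1 \leftrightarrow v_2 \rrbracket^T$ as a Dirac distribution for every pair of values, so it is never empty.
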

\begin{proof}
	Let $v \in V$ be some value.
	\begin{enumerate}
		\item Let $d \in \llbracket e \rrbracket^T$ be some distribution.
		Then, by the semantics of the let-statement in \Cref{fig:semantics} we have
		$$d' \in \llbracket \LET{x = e}{ x \leftrightarrow v} \rrbracket^T \Leftrightarrow \exists d_1 \in \llbracket e \rrbracket^T: \forall w, d_1(w) > 0: \,\exists d_{w} \in \llbracket x\leftrightarrow v [x \mapsto w] \rrbracket^T: $$
		$$ \forall u \in V : d'(u) = \sum_{w,\, d_1(w) > 0} d_1(w) \cdot d_{w}(u)$$
		Note that $\llbracket x\leftrightarrow v[x \mapsto w] \rrbracket^T = \llbracket w\leftrightarrow v \rrbracket^T$ is always a singleton set. 
		We can therefore pick $d_1$ to be $d$, and pick $d_w$ as the only element of $\llbracket w\leftrightarrow v \rrbracket^T$ to define the distribution $d' \in \llbracket \LET{x = e}{ x \leftrightarrow v} \rrbracket^T$.
		We first consider both $d(v)$ and $d'(\TRUE)$:
		\begin{align*}
			d'(\TRUE) &= \sum_{w,\, d_1(w) > 0} d_1(w) \cdot d_{w}(\TRUE) \\
			&= \sum_{w,\, d(w) > 0} d(w) \cdot d_{w}(\TRUE) \\
			&=_1 d(v) \cdot d_{v}(\TRUE) \\
			&= d(v)
		\end{align*}
		Note that the final two equations hold as $d_w(\TRUE) = 0$ for all $w \neq v$, and $d_w(\TRUE) = 1$ for $w = v$ by definition.
		Finally, notice that $=_1$ holds both if we have have $d(v) > 0$ or not. 
		If we have $d(v) > 0$, then $=_1$ holds as argued above because $d(v) \cdot d_v(\TRUE)$ is a proper term in the sum and is also the only non-zero summand.
		If we have $d(v) = 0$, then $=_1$ holds as both sides of the equation are $0$, in particular the sum is zero because all the $d_w(\TRUE)$ are $0$.
		
		The second part of the conjunction can be proven similarly.
		However, this time we have  $d_w(\FALSE) = 0$ for all $w = v$, and $d_w(\FALSE) = 1$ for $w \neq v$, allowing us to show the following equations:
		\begin{align*}
			d'(\FALSE) &= \sum_{w,\, d_1(w) > 0} d_1(w) \cdot d_{w}(\FALSE) \\
			&= \sum_{w,\, d(w) > 0 \land w \neq v} d(w) \cdot d_{w}(\FALSE) \\
			&= \sum_{ w \neq v} d(w) \cdot d_{w}(\FALSE) \\
			&= \sum_{ w \neq v} d(w)
		\end{align*}

		\item Let $d' \in \llbracket \LET{x = e}{ x \leftrightarrow v} \rrbracket^T$.
		Therefore, by the semantics of the let-statement as defined in \Cref{fig:semantics} there exists a distribution 
		$d \in \llbracket e \rrbracket^T$, and for all values $w$ with $d(w) > 0$ there exists a distribution $d_{w} \in \llbracket x\leftrightarrow v [x \mapsto w] \rrbracket^T$ so that 
		$$ \forall u \in V : d'(u) = \sum_{w,\, d(w) > 0} d(w) \cdot d_{w}(u)$$
		We now consider $d'(\TRUE)$ and $d(v)$.
		Once again, we note that $\llbracket x\leftrightarrow v[x \mapsto w] \rrbracket^T = \llbracket w\leftrightarrow v \rrbracket^T$ is always a singleton set, meaning we can simplify this sum in accordance with the semantics of $w\leftrightarrow v$ as follows:
		$$ d'(\TRUE) = \sum_{w,\, d(w) > 0} d_1(w) \cdot d_{w}(\TRUE) =_2 d(v) \cdot d_v(\TRUE) = d(v) $$
		These equations are justified very similarly to the case above, where we note that $d_w(\TRUE) = 0$ for all $w \neq v$, and $d_w(\TRUE) = 1$ for $w = v$ by definition.
		Furthermore, the second part of the conjunction can be proven analogously to $(1)$ as well.
	\end{enumerate} 
\end{proof}

\begin{theorem}[Normalized Boolean Reduction]\label{theo:boolean_reduction_norm}
	Let $e$ be an \NDICE expression, and let $T$ be a function table.
	Then for all values $v \in V$ we have:
	\begin{enumerate}
		\item $\forall d \in \llbracket e \rrbracket_D^T \; \exists d' \in \llbracket \LET{x = e}{ x \leftrightarrow v}\rrbracket_D^T: d(v) = d'(\TRUE) \land \sum\limits_{w \neq v} d(w) = d'(F)$
		\item $\forall d' \in \llbracket \LET{x = e}{ x \leftrightarrow v} \rrbracket_D^T \; \exists d \in \llbracket e \rrbracket_D^T: d(v) = d'(\TRUE) \land \sum\limits_{w \neq v} d(w) = d'(F)$
	\end{enumerate}
\end{theorem}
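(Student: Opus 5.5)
The plan is to reduce the statement to \Cref{theo:boolean_reduction} by showing that normalization commutes with the correspondence established there. The key observation is that the correspondence preserves the normalizing constant.

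Fix $v$. If $d \in \llbracket e \rrbracket^T$ and $d' \in \llbracket \LET{x = e}{x \leftrightarrow v}\rrbracket^T$ satisfy $d(v) = d'(\TRUE)$ and $\sum_{w\neq v} d(w) = d'(\FALSE)$, then the constants coincide. Since $d'$ is a sub-distribution over Booleans, $d'_D = d'(\TRUE) + d'(\FALSE)$, and this equals $d(v) + \sum_{w\neq v} d(w) = d_D$.

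For direction (1), take $\hat d \in \llbracket e \rrbracket^T_D$. By the definition of normalized semantics, there is some $d \in \llbracket e\rrbracket^T$ with $\hat d(u) = d(u)/d_D$ for all $u$, using the convention that the fraction is $0$ when $d_D = 0$. Apply \Cref{theo:boolean_reduction}(1) to $d$ to obtain a matching $d'$. Set $\hat d'(u) := d'(u)/d'_D$, which lies in $\llbracket \LET{x = e}{x \leftrightarrow v}\rrbracket^T_D$ by definition. Because $d'_D = d_D$, dividing both conjuncts by the same constant gives $\hat d(v) = \hat d'(\TRUE)$. Linearity of the finite sum gives $\sum_{w\neq v}\hat d(w) = \hat d'(\FALSE)$. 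In the degenerate case $d_D = d'_D = 0$, both normalized objects are identically $0$ under the stated convention, so both equalities read $0=0$.

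Direction (2) is symmetric. Start from $\hat d' \in \llbracket \LET{x = e}{x \leftrightarrow v}\rrbracket^T_D$ and unfold it to an unnormalized $d'$. Use \Cref{theo:boolean_reduction}(2) to obtain $d$, then normalize $d$ using the same equality of constants.

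I expect no serious obstacle here. The only points needing care are the following:
\begin{itemize}
\item confirming that $d'$ has support only in $\{\TRUE,\FALSE\}$, which holds because every $d_w$ in the let-semantics is a distribution on Booleans, so that $d'_D$ is exactly the sum of those two entries;
\item handling the zero-mass convention consistently on both sides.
\end{itemize}
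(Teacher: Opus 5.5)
Your proposal is correct and follows the same route as the paper: invoke the unnormalized Boolean reduction lemma, note that corresponding distributions have equal normalizing constants (here $d'_D = d'(\TRUE) + d'(\FALSE) = d_D$, since $d'$ is supported on $\{\TRUE,\FALSE\}$), and divide through. You additionally spell out the support argument and the zero-mass case, both of which the paper's one-line proof leaves implicit.
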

\begin{proof}
	The theorem follows directly from \Cref{theo:boolean_reduction}, which correlates the unnormalized distributions in the semantics of both expressions.
	Afterwards, the normalizing constants for corresponding distributions can also be shown to be equal, which directly proves the required relation between the normalized semantics of both expressions
\end{proof}

\subsection{Extended Experimental Results}\label{app:extended_results}

\setboolean{useComma}{true}

This section extends the empirical results from \Cref{sec:implementation} with the standard deviation over the reported runtimes, all standard deviations are measured over at least 5 evaluations of the respective benchmark.
We do not report standard deviations for the sizes of the BDDs, ADDs and MDPs again as their numbers are constant between all runs of a benchmark.
Below, \Cref{tab:Ndet_extended} extends \Cref{tab:Ndet}.

\begin{table}[H]
	\caption{Performance of \NDICE on the benchmarks. 
		The \DICE, \NDICE and Storm columns report the total runtime of the respective algorithm, \NDICE is run with state compression and split into \textit{compilation} and \textit{model checking}.
		$\NDICE_u$ reports the total inference time without state compression.
		$\NDICE_u$ reports the total inference time without state compression.
		The reported time is the mean plus or minus a single standard deviation over 5 runs.
		All times are in ms, with a timeout (TO) after 20 minutes.
		}
	\label{tab:Ndet_extended}
	\begin{adjustbox}{max width=\textwidth}
	\begin{tabu}{c | c | ccc | c | c}
		\toprule
		Benchmark & \DICE &  $\mathit{compilation}$ & $\textit{checking}$ & \NDICE & $\NDICE_u$ & Storm \\
		\midrule
		Runway-7 & - & $8\pm 0.4$ & {$\approx 0 \pm 0$} & {$8 \pm 0.4$} & {$8 \pm 0.4$} & {$18 \pm 2.8$} \\
		Runway-15 & - & $93 \pm 2.3$ & {$1 \pm $} & {$94 \pm 2$} & {$94 \pm 2 $} & {$131 \pm 2.7$} \\
		Runway-30 & - & $531 \pm 7$ & {$7 \pm 0.1 $} & {$538 \pm 7.1 $} & {$538 \pm 7.5$ } & {$1\, 620 \pm 14$} \\
		Runway-40 & - & $999 \pm 8$ & {$11 \pm 0.5 $} & {$1\,010 \pm 8.2$} & {$1\,011 \pm  8.8$} & {$5\,212 \pm 61$} \\
		Runway-45 & - & $1\,514 \pm 19$ & {$15 \pm 0.4$} & {$1\,529 \pm 18$} & {$1\,530 \pm 20 $} & {$8\,673 \pm 66$} \\
		
		\midrule
		CouponProb-10 & 
		$721 \pm 11$ &
		$1\,379 \pm 46$ & {$\approx 0 \pm 0$} & {$1\,379 \pm 46$} & 
		{$1\,379 \pm 46$} & $5\,894 \pm 106$ \\
		CouponProb-11 & 
		$1\,958 \pm 33$ & 
		$3\,904 \pm 62$ & {$\approx 0 \pm 0$} & {$3\,904 \pm 62$} & 
		{$3\,904 \pm 62$} & $18\,839 \pm 206$ \\
		CouponProb-12 & 
		$5\,798 \pm 161$ & 
		$10\,484 \pm 216$ & {$\approx 0 \pm 0$} & {$10\,484 \pm 216$} & 
		{$10\,484 \pm 217$} & $56\,434 \pm 1\,115$ \\
		CouponProb-13 & 
		$20\,691 \pm 106$ & 
		$32\,153 \pm 546$ & {$\approx 0 \pm 0$} & {$32\,153 \pm 546$} & 
		{$32\,153 \pm 546$} & $171\,358 \pm 442$ \\
		CouponProb-14 & 
		$63\,418 \pm 280$ & 
		$96\,068 \pm 647$ & {$\approx 0 \pm 0$} & {$96\,068 \pm 647$} & 
		{$96\,068 \pm 647$} & $641\,614 \pm 231\,908$ \\
		
		\midrule
		CouponNdet-6 & 
				- & $105 \pm 1$ &  {$14 \pm 0.7$} & {$119 \pm 1.3$} & {$120 \pm 1.8$} & $162 \pm 7.6$ \\
		CouponNdet-7 & 
				- & $276 \pm 3$ & {$41 \pm 1.8$} & {$317 \pm 2.5$} & {$317 \pm 3.4$} & $607 \pm 25$ \\
		CouponNdet-8 & 
				- & $731 \pm 19$ & {$96 \pm 2$} & {$827 \pm 20$} & {$824 \pm 19$} & $2\,208 \pm 99$ \\
		CouponNdet-9 & 
				- & $2\,264 \pm 49$ & {$389 \pm 33$} & {$2\,653 \pm 66$} & {$2\,662 \pm 64$} & $8\,110 \pm 23$ \\
		CouponNdet-10 & 
				- & $6\,105 \pm 107$ & {$1\,257 \pm 81$} & {$7\,362 \pm 114$} & {$7\,293 \pm 117$} & $25\,862 \pm 388$ \\
		 
		\midrule
		Network-2000 & - & $2\,218 \pm 16$ & {$24 \pm 0.9$} & {$2\,242 \pm 17$} & {$2\,242 \pm 17$} & $25 \pm 0.4$ \\
		Network-4000 & - & $9\,530 \pm 46$ & {$55 \pm 3$} & {$9\,585 \pm 45$} & {$9\,585 \pm 48$} & $39 \pm 0.4$ \\
		Network-6000 & - & $22\,270 \pm 98$ & {$84 \pm 1.3$} & {$22\,354 \pm 98$} &{$22\,358 \pm 102$} & $54 \pm 1.1$ \\
		Network-8000 & - & $40\,763 \pm 98$ & {$122 \pm 6$} & {$40\,885 \pm 97$} & {$40\,885 \pm 98$} & $68 \pm 1.6$ \\
		Network-10000 & - & $67\,081 \pm 207$ & {$153 \pm 1.7$} & {$67\,234 \pm 255$} & {$67\,243 \pm 214$} & $84 \pm 1.4$ \\
		
		\midrule
		Rabin-50 & - & $324\pm 1.1$ & {$\approx 0 \pm 0$} & {$ 324 \pm 1.1$} & {$ 324 \pm  1.1$} & $17 \pm 4.5$ \\
		Rabin-100 & - & $1\,962\pm 49$ & {$1 \pm 0$} & {$ 1\,963 \pm  49$} & {$ 1\,963 \pm  49$} & $23 \pm 0.7$ \\
		Rabin-250 & - & $23\,355\pm 141$ & {$2 \pm 0$} & {$ 23\,357 \pm  141$} & {$ 23\,357 \pm  142$} & $82 \pm 3.5$ \\
		Rabin-500 & - & $233\,494\pm 1\,759$ & {$6 \pm 0.5$} & {$233\,500 \pm 1\,759$} & {$233\,501 \pm 1\,760$} & $327 \pm 6.9$ \\
		
		\midrule 
		Survey & $1 \pm 0.2$ & $1 \pm 0$ & $\approx 0 \pm 0$ & $1 \pm 0$ & $1 \pm 0$ & $14 \pm 2$  \\
		Hepar2 & $99 \pm 1$ & $309 \pm 3$ & $\approx 0 \pm 0$ & $309 \pm 3$ & $310 \pm 3$ & $23\,094 \pm 146$ \\
		Insurance & $580 \pm 4$ & $2\,227 \pm 13$ & $\approx 0 \pm 0$ & $2\,227 \pm 13$ & $2\,232 \pm 14$ & $9\,460 \pm 146$ \\
		Water & $9\,241 \pm 47$ & $9\,288 \pm 37$ & $\approx 0 \pm 0$ & $9\,288 \pm 37$  & $9\,347 \pm 37$   & $12\,545 \pm 46$ \\
		Alarm & $160\,104 \pm 367$ & $161\,385 \pm 533$ & $\approx 0 \pm 0$ & $161\,385 \pm 533$ & $161\,385 \pm 532$ & $5\,335 \pm 9$ \\
		
		\midrule 
		Survey-Ndet & - & $1 \pm 0.1$ & $\approx 0 \pm 0$ & $1 \pm 0.1$ & $1 \pm 0.1$ & $14 \pm 0.5$ \\
		Hepar2-Ndet & - & $280 \pm 2.6$ & $14 \pm 0.4$ & $294 \pm 2.9$ & $295 \pm 3.1$ & $23\,909 \pm 131$ \\
		Insurance-Ndet & - & $2\,217 \pm 12$ & $\approx 0 \pm 0$ & $2\,217 \pm 12$ & $2\,220 \pm 12$ & $10\,345 \pm 137$ \\
		Water-Ndet & - & $9\,296 \pm 97$ & $\approx 0 \pm 0$ & $9\,296 \pm 97$ & $9\,298 \pm 98$ & $13\,153 \pm 60$ \\
		Alarm-Ndet & - & $161\,786 \pm 1\,166$ & $47 \pm 2$ & $161\,633 \pm 1\,167$ & $161\,636 \pm 1\,166$ & $5\,219 \pm 17$ \\
		
		\midrule
		3sat-0 & $19\,511 \pm 72$ & $19\,535 \pm 73$ & {$\approx 0 \pm 0$} & {$19\,535 \pm 73$} & {$19\,535 \pm 73$} & TO \\
		3sat-25 & - & $19\,528 \pm 29$ & {$43 \pm 1$} & {$19\,571 \pm 29$} & {$19\,571 \pm 31$} & TO \\
		3sat-50 & - & $19\,544 \pm 58$ & {$69 \pm 0.4$} & {$19\,613 \pm 29$} & {$19\,612 \pm 30$} & TO \\
		3sat-75 & - & $19\,514 \pm 63$ & {$72 \pm 1.5$} & {$19\,586 \pm 64$} & {$19\,584 \pm 130$} & TO \\
		3sat-100 & - & $19\,567 \pm 128$ & {$107 \pm 3$} & {$19\,674 \pm 128$} & {$19\,676 \pm 132$} & TO \\
		\bottomrule 
	\end{tabu}

\end{adjustbox}

\end{table}

	}{}

\end{document}